\let\oldparagraph\paragraph
\renewcommand{\paragraph}[1]{\oldparagraph{#1.}}
\newcolumntype{L}{>{$}p{1.9cm}<{$\hfil}} %
\def\mathcolor#1#{\@mathcolor{#1}}
\def\@mathcolor#1#2#3{%
\protect\leavevmode
\begingroup
\color#1{#2}#3%
\endgroup
}
\def\triadone{brown}
\def\triadtwo{red}
\def\triadthree{blue}
\def\triadfour{green}
\def\triadfive{purple}
\def\triadsix{dimgray}
\def\triadseven{chocolate}
\def\matrixsize#1#2{{{#1}\times{#2}}}
\def\tensorproduct{\otimes}
\def\tensor#1{{\mathcal{#1}}}
\def\Trace{\textup{Trace}}
\def\Inverse#1{{#1}^{-1}}
\def\MatrixProduct#1#2{{\mat{#1}\cdot\mat{#2}}}
\DeclareMathOperator{\Identity}{I}
\def\IdentityMatrix{\Identity}
\def\IdentityMatrixOfSize#1{{\IdentityMatrix}_{#1}}
\newenvironment{smatrix}{\left(\begin{smallmatrix}}{\end{smallmatrix}\right)}
\newenvironment{mmatrix}{\left(\begin{matrix}}{\end{matrix}\right)}
\newcommand{\Ring}{\ensuremath{\mathfrak{R}}}
\newcommand{\Sring}{\ensuremath{\mathfrak{S}}}
\newcommand{\Matrices}[2]{\ensuremath{{{#1}^{#2}}}}
\newcommand{\BilinearMapCoDomain}{\ensuremath{\mathfrak{M}}}
\newcommand{\BilinearMapCoDomainSubSpace}{\ensuremath{\mathcal{L}}}
\newcommand{\Image}[1]{\ensuremath{{\textrm{Im}\,#1}}}
\newcommand{\Dual}[1]{\ensuremath{{(#1)^{\star}}}}
\newcommand{\Contraction}[3]{\ensuremath{{{#1}\!\mid^{#2}_{#3}}}}
\newcommand{\Rank}[1]{\ensuremath{{\textup{Rank}\,{#1}}}}
\newcommand{\mat}[1]{\ensuremath{{#1}}}%
\newcommand{\Transpose}[1]{{{\mat{#1}}^{\intercal}}\xspace}
\newcommand{\InvTranspose}[1]{{({#1}^{-1})^{\intercal}}}
\newcommand{\CTranspose}[1]{{{\overline{\mat{#1}}}^{\intercal}}\xspace}
\newcommand{\HermitianTranspose}[1]{\ensuremath{{{\mat{#1}}^{H}}}}
\newcommand{\Conjugation}[1]{\ensuremath{{\overline{#1}}}}
\newcommand{\Conjugate}[1]{\ensuremath{{\Conjugation{#1}}}}
\newcommand{\SymbolIAH}{\ensuremath{\phi}}
\newcommand{\MIAH}[1]{\ensuremath{{\SymbolIAH\left({\mat{#1}}\right)}}}
\newcommand{\LLTr}{\ensuremath{{\text{Low}}}}
\newcommand{\UpTr}{\ensuremath{{\text{Up}}}}
\newcommand{\Low}[1]{\ensuremath{{\LLTr\left({\mat{#1}}\right)}}}
\newcommand{\Up}[1]{\ensuremath{{\UpTr\left({\mat{#1}}\right)}}}
\newcommand{\mathsc}[1]{{\normalfont\textsc{#1}}}
\newcommand{\Primes}{\ensuremath{\mathbb{P}}\xspace}
\newcommand{\Z}{\ensuremath{\mathbb{Z}}\xspace}
\newcommand{\N}{\ensuremath{\mathbb{N}}\xspace}
\newcommand{\K}{\ensuremath{\mathbb{K}}\xspace}
\newcommand{\E}{\ensuremath{\mathbb{E}}\xspace}
\newcommand{\HK}[1][\K]{\ensuremath{\mathbb{H}(#1)}\xspace}
\newcommand{\F}{\ensuremath{\mathbb{F}}\xspace}
\newcommand{\RR}{\ensuremath{\mathbb{R}}\xspace}
\newcommand{\QQ}{\ensuremath{\mathbb{Q}}\xspace}
\newcommand{\CC}{\ensuremath{\mathbb{C}}\xspace}
\newcommand{\quat}[1]{\ensuremath{\mathbf{#1}}}
\newcommand{\FFSoS}[2]{\ensuremath{\text{\texttt{SoS}}(#1,#2)}}
\newcommand{\FFSqrtname}{\ensuremath{\text{\texttt{sqrt}}}}
\newcommand{\FFSqrt}[2]{\ensuremath{\FFSqrtname(#2)}}
\newcommand{\GO}[1]{\ensuremath{{O\mathopen{}\left({#1}\right)\mathclose{}}}\xspace}
\newcommand{\LO}[1]{\ensuremath{{o\mathopen{}\left({#1}\right)\mathclose{}}}\xspace}
\newcommand{\assign}{\ensuremath{\leftarrow\xspace}}
\newcommand{\MM}[2][vide]{%
  \ifthenelse{ \equal{#1}{vide} }
  {\ensuremath{\textrm{MM}_{#2}}}
  {\ensuremath{\textrm{MM}^{#1}_{#2}}}
}
\def\SymFirstTriad{\alpha}
\def\SymSecondTriad{\beta}
\def\SymThirdTriad{\gamma}
\def\Isotropy#1{\mathsf{#1}}
\def\IsotropyAction#1#2{{{#1}\diamond{#2}}}
\newcommand{\Subs}[2]{\ensuremath{{{#2}_{#1}}}}
\newcommand{\Minor}[3]{\ensuremath{{\textrm{Det}_{[{#1}|{#2}]}({#3})}}}
\newcommand{\AddRow}[4]{\ensuremath{{#4}|{[{#1},{#2}|{#3}]}}}
\algnewcommand{\algorithmicassumption}{\textbf{Requirement:}}
\algnewcommand{\Assume}{\item[\algorithmicassumption]}
\algnewcommand{\CommentLine}[1]{\(\triangleright\;\;\) \emph{\small #1} \(\;\;\triangleleft\)} %
\algnewcommand{\InlineIf}[2]{%
\algorithmicif\ #1\ \algorithmicthen\ #2}
\algnewcommand{\InlineIfElse}[3]{%
\algorithmicif\ #1\ \algorithmicthen\ #2\ \algorithmicelse\ #3}
\algnewcommand{\InlineWhile}[2]{%
\algorithmicwhile\ #1\ \algorithmicdo\ #2}
\algnewcommand{\InlineForAll}[2]{%
\algorithmicforall\ #1\ \algorithmicdo\ #2}
\newcommand{\CWHILE}[2][default]{\algorithmicwhile\ #2\ %
\algorithmicdo\hfill%
\ALC@com{#1}\begin{ALC@whl}}
\newcommand{\CFORALL}[2][default]{\ALC@it\algorithmicforall\ #2\ %
\algorithmicdo\newline%
\ALC@com{#1}\begin{ALC@for}}
\newcommand{\CIF}[2][default]{\ALC@it\algorithmicif\ #2\ %
\algorithmicthen\hfill%
\ALC@com{#1}\begin{ALC@if}}
\newcommand{\FORALLDOEND}[2]{\ALC@it\algorithmicforall\ #1\ \algorithmicdo\ #2 \algorithmicendfor}
\newtheorem{theorem}[algorithm]{Theorem}
\newtheorem{definition}[algorithm]{Definition}
\newtheorem{example}[algorithm]{Example}
\newtheorem{examples}[algorithm]{Examples}
\newtheorem{lemma}[algorithm]{Lemma}
\newtheorem{proposition}[algorithm]{Proposition}
\newtheorem{corollary}[algorithm]{Corollary}
\newtheorem{remark}[algorithm]{Remark}
\newtheorem{problem}{Open question}
\title{Some fast algorithms multiplying a matrix by its adjoint}
\author{Jean-Guillaume Dumas}
\author{Cl\'ement Pernet}
\affil{%
{Univ.\ Grenoble Alpes, umr CNRS 5224 LJK\authorcr F-38000 Grenoble, France}\authorcr
{\small
\url{https://ljk.imag.fr/CAS3C3},
\href{mailto:Jean-Guillaume.Dumas@univ-grenoble-alpes.fr,Clement.Pernet@univ-grenoble-alpes.fr}{\{Jean-Guillaume.Dumas,Clement.Pernet\}@univ-grenoble-alpes.fr}
}}
\author{Alexandre Sedoglavic}
\affil{%
{Univ.\ Lille, CNRS, Centrale Lille, umr 9189 CRIStAL\authorcr F-59000 Lille, France}\authorcr
{\small
\href{http://www.lifl.fr/\%7Esedoglav}{http://www.lifl.fr/\textasciitilde{}sedoglav},
\href{mailto:Alexandre.Sedoglavic@univ-lille.fr}{Alexandre.Sedoglavic@univ-lille.fr}
}}
\begin{document}
\maketitle

\begin{abstract}
We present a non-commutative algorithm for the multiplication of
a~${{2}\times{2}}$ block-matrix by its adjoint, defined by a matrix
ring anti-homo\-morphism.
This algorithm uses~$5$ block products ($3$ recursive calls and~$2$
general products)%
over~$\mathbb{C}$ or in positive characteristic.
The resulting algorithm for arbitrary dimensions is a reduction of
multiplication of a matrix by its adjoint to general matrix product,
improving by a constant factor previously known reductions.
We prove also that there is no algorithm derived from bilinear forms
using only four products and the adjoint of one of them.
Second we give novel dedicated algorithms for the complex field and
the quaternions to alternatively compute the multiplication taking advantage of the structure of the
matrix-polynomial arithmetic involved. We then analyze the respective
ranges of predominance of the two strategies.
Finally we propose schedules with low memory footprint that support a
fast and memory efficient practical implementation over a prime
field.
\end{abstract}

\tableofcontents

\section{Introduction}
Volker Strassen's algorithm~\cite{Strassen:1969:GENO}, with~$7$ recursive
multiplications and~$18$ additions, was the first sub-cubic time
algorithm for matrix product, with a cost of~$\GO{n^{2.81}}$.
Summarizing the many improvements which have happened since then, the
cost of multiplying two arbitrary~$\matrixsize{n}{n}$ matrices over a
ring \Ring~will be
denoted by~${\MM[\Ring]{\omega}(n)=\GO{n^{\omega}}}$ ring operations where $2<\omega\leq 3$ is any
feasible exponent for this operation (see~\cite{LeGall:2014:fmm} for the best theoretical estimates
of~$\omega$ known to date).

We consider here the computation of the
product of a matrix by transpose~${\MatrixProduct{A}{\Transpose{A}}}$ or by its conjugate
transpose~${\MatrixProduct{A}{\CTranspose{A}}}$, which we handle in a unified way as the product
${\MatrixProduct{A}{\MIAH{A}}}$ where~$\SymbolIAH$ is a matrix \emph{anti-homomorphism}.
In the rest of the paper, $\MIAH{A}$ will be referred to as the \emph{adjoint} of~$\mat{A}$.
For this computation, the natural divide and conquer algorithm, splitting the matrices in four
quadrants, will use~$6$ block multiplications (as any of the two off-diagonal blocks can be
recovered from the other one).%
We propose instead a new algorithm
using only~$5$ block multiplications, for any antihomomorphism~$\SymbolIAH$, provided that the base ring supports the existence of skew unitary matrices.

For this product, the best previously known cost bound was
equivalent to~${\frac{2}{2^{\omega}-4}\mathrm{MM}_{\omega}(n)}$ over
any field (see~\cite[\S~6.3.1]{jgd:2008:toms}).
With our algorithm, this product can be computed
in a cost equivalent to~${\frac{2}{2^{\omega}-3}\textrm{MM}_{\omega}(n)}$ ring
operations
when there exists a skew-unitary matrix.
Our algorithm is derived from the class of Strassen-like algorithms
multiplying~$\matrixsize{2}{2}$ matrices in~$7$ multiplications.
Yet it is a reduction of multiplying a matrix by its transpose to
general matrix multiplication, thus supporting any admissible value
for~$\omega$.
By exploiting the symmetry of the problem, it requires about half of
the arithmetic cost of general matrix multiplication when~$\omega$
is~$\log_{2}{7}$.

This paper extends the results of~\cite{jgd:2020:wishart} with the following improvements:
\begin{compactenum}
\item we generalize the case of the transposition in~\cite[Algorithm~2]{jgd:2020:wishart} to arbitrary
  antihomomorphism, including the Hermitian transposition.
\item Our algorithm uses $5$ multiplications and the (hermitian)
  transpose of one these blocks. In~\cite{jgd:2020:wishart} a
  Gr{\"o}bner basis parameterization is used to search for algorithms,
  or prove by exhaustive search that there are no better algorithm,
  \emph{in the Strassen orbit}. We partially address here the more general
  result that there is no algorithm derived from bilinear forms, with fewer products, by proving the
  inexistence of an algorithm with four products and the (hermitian) transpose of one of them.
\item In~\cite{jgd:2020:wishart} the algorithm is shown to be efficient
  over \CC, for a range of matrix multiplication exponents (including
  all the feasible ones), and for any positive characteristic field,
  unconditionally.
  We extend this analysis to the case for the Hermitian transpose: while our five-products algorithm is
  unusable due to the inexistence of skew unitary matrices over~$\CC$, we propose a~2M algorithm,
  adapted from the~3M algorithm for the product of complex matrices.
\item Finally, we propose novel dedicated algorithms for the
  multiplication of a matrix by its transpose or conjugate transpose over the algebra of quaternions
  (over $\RR$ or any commutative field), improving on the dominant term of the state of the art
  complexity bounds for these problems.
\end{compactenum}

After a introducing the terminology in \cref{sec:prelim}, we will present in \cref{sec:algo} the
main recursive algorithm computing the product of a matrix by its adjoint in 5 block products
provided that a skew unitary matrix is given.
We survey in~\cref{sec:skeworthmat} the most classical instances for the base field to support the
existence of skew unitary matrices. We then investigate in~\cref{sec:minimality} the
minimality of five products for the computing the product of a $2\times 2$ matrix by its hermitian
transpose: applying de Groote's technique enables us to state this result partially, for all
algorithms using up to one symmetry between a product and its adjoint.
\cref{sec:fieldext} explores alternative approaches offered by the structure of polynomial
arithmetic, when the field is an extension. This includes a new~2M algorithm in~\cref{ssec:2M} and
new algorithms over the algebra of quaternions in~\cref{ssec:quaternions}. Lastly, we discuss on an
implementation of the recursive algorithm for the product of a matrix by its transpose
in~\cref{sec:implem}.

\section{Preliminaries}
\label{sec:prelim}
To unify the notion of transposition and conjugate transposition, we use the formalism of
antihomomorphisms and of involutive antihomomorphisms as recalled in the following definitions.%
\begin{definition}\label{def:RAH}%
Let~$\Ring,\Sring$ be two rings,~${\SymbolIAH:\Ring\rightarrow\Sring}$ is a \emph{ring
  antihomomorphism} if and only if, for all~${(x,y)}$ in~${\Ring\times\Sring}$:
\begin{subequations}
\begin{align}
& \SymbolIAH(1_\Ring)=1_\Sring\label{RAH:invol},\\
& \SymbolIAH(x+y) = \SymbolIAH(x) + \SymbolIAH(y)\label{RAH:add},\\
& \SymbolIAH(xy) = \SymbolIAH(y)\SymbolIAH(x)\label{RAH:mul}.
\end{align}
\end{subequations}
\end{definition}
From this, one can define a matrix antihomomorphism by induction, as
shown in~\cref{def:MIAH}.
\begin{definition}\label{def:MIAH}%
  Over a ring~$\Ring$, an \emph{involutive matrix antihomomorphism} is a family of applications~${\SymbolIAH_{m,n}:\Matrices{\Ring}{\matrixsize{m}{n}}\rightarrow\Matrices{\Ring}{\matrixsize{n}{m}}}$ for all~${(m,n)}$ in~${\N^{2}}$ satisfying for additional~${(\ell,k)}$ in~${\N^{2}}$ and for all~$\mat{A}$ and~$\mat{A'}$ in~${\Matrices{\Ring}{\matrixsize{m}{n}}}$, for all~$\mat{M}$ in~${\Matrices{\Ring}{\matrixsize{n}{k}}}$, for all~$\mat{B}$ in~${\Matrices{\Ring}{\matrixsize{m}{k}}}$, for all~$\mat{C}$ in~${\Matrices{\Ring}{\matrixsize{\ell}{n}}}$ and for all~$\mat{D}$ in~${\Matrices{\Ring}{\matrixsize{\ell}{k}}}$ the following relations:
\begin{subequations}
\begin{align}
& \SymbolIAH_{m,n}\circ\SymbolIAH_{n,m}=\Identity\label{MMIAH:invol},\\
& \SymbolIAH_{m,n}(\mat{A}+\mat{A'}) = \SymbolIAH_{m,n}(\mat{A}) + \SymbolIAH_{m,n}(\mat{A'})\label{MMIAH:add},\\
& \SymbolIAH_{m,k}(\MatrixProduct{A}{M}) = \SymbolIAH_{n,k}(\mat{M})\cdot\SymbolIAH_{m,n}(\mat{A})\label{MMIAH:mul},\\
& \SymbolIAH_{m+\ell,n+k}\left(\begin{bmatrix}\mat{A}&\mat{B}\\\mat{C}&\mat{D}\end{bmatrix}\right)
= \begin{bmatrix}\SymbolIAH_{m,n}(\mat{A})&\SymbolIAH_{\ell,n}(\mat{C})\\\SymbolIAH_{m,k}(\mat{B})&\SymbolIAH_{\ell,k}(\mat{D})\end{bmatrix}\!.\label{MMIAH:mat}
\end{align}
\end{subequations}
\end{definition}
For the convenience, we will denote all applications of this family by $\phi$, as the dimensions are
clear from the context.
This definition implies the following:

\begin{lemma}\label{lem:transp}
  For all~$\mat{A}$ in~${\Matrices{\Ring}{\matrixsize{m}{n}}}$ let~$\mat{B}$, be~$\MIAH{\mat{A}}$. Then for all suitable~$(i,j)$ the coefficient~$b_{ij}$ is~$\SymbolIAH(a_{ji})$.
\end{lemma}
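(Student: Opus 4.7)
The plan is to reduce the statement to the scalar case $m=n=1$, where $\SymbolIAH_{1,1}$ coincides with the ring antihomomorphism $\SymbolIAH$ (via the convention of identifying a $1\times 1$ matrix over $\Ring$ with its unique entry, made explicit by the paper's shared notation for the whole family), and to descend to this base case through iterated applications of axiom~(\ref{MMIAH:mat}) on $2\times 2$ block decompositions.

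Concretely, I would induct on $m + n$. In the generic step $m, n \geq 2$, split $A$ as the $2\times 2$ block matrix
\[
A = \begin{bmatrix} a_{11} & A_{12} \\ A_{21} & A_{22} \end{bmatrix}
\]
with $A_{12}$ of shape $1 \times (n-1)$, $A_{21}$ of shape $(m-1) \times 1$, and $A_{22}$ of shape $(m-1)\times(n-1)$. Axiom~(\ref{MMIAH:mat}) swaps the two off-diagonal blocks and applies $\SymbolIAH$ to each sub-block, after which the induction hypothesis identifies every resulting entry with $\SymbolIAH$ of the correspondingly transposed entry of~$A$. The edge cases $m=1$ or $n=1$ (with the other dimension $\geq 2$) are treated analogously, splitting $A$ along a single axis into $1\times 1$ blocks and invoking either (\ref{MMIAH:mat}) with one off-diagonal block of trivial shape, or (\ref{MMIAH:add}) together with the scalar base case.

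The main obstacle is modest and essentially bookkeeping: tracking, after each transposed reassembly, how the global indices $(i,j)$ of $\MIAH{A}$ project onto local positions inside each sub-block. Neither the multiplicative axiom~(\ref{MMIAH:mul}) nor the involutive axiom~(\ref{MMIAH:invol}) is needed here; only additivity, the block-transposition rule, and the scalar-level identification of $\SymbolIAH_{1,1}$ with the ring antihomomorphism $\SymbolIAH$ underpin the argument.
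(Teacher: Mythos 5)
Your proof is correct and uses essentially the same approach as the paper: a structural induction driven by the block-decomposition axiom~(\ref{MMIAH:mat}), with base case $m=n=1$ where $\SymbolIAH_{1,1}$ is identified with the ring antihomomorphism, the only cosmetic difference being that you peel the scalar from the top-left corner and induct on $m+n$ while the paper isolates the scalar in the bottom-right, keeps an $N\times N$ block, and inducts on a dimension cap~$N$. One small slip worth noting: your parenthetical suggestion that axiom~(\ref{MMIAH:add}) could substitute for~(\ref{MMIAH:mat}) in the $m=1$ or $n=1$ edge cases does not actually work, since~(\ref{MMIAH:add}) concerns sums of matrices rather than block concatenation---but you also offer the degenerate-block version of~(\ref{MMIAH:mat}), which is precisely what the paper uses, so the argument goes through.
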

\begin{proof}
  By induction, using~\cref{MMIAH:mat}: if~${m=n=1}$, then~${\MIAH{{A}}=[\SymbolIAH(a_{11})]}$. Then
  assume the property is true for all $\mat{A}\in\Ring^{\matrixsize{m}{n}}$ with $m,n\leq N$, and
  consider a matrix~$\mat{A}$ in~$\Matrices{\Ring}{\matrixsize{(N+1)}{(N+1)}}$. Applying  \cref{MMIAH:mat} on the
  block decomposition $\mat{A}=
  \begin{bmatrix}
    \mat{A}_{11} & \mat{a}_{12}\\
    \mat{a}_{21} & a_{22}\\
  \end{bmatrix}$
  where $\mat{A}_{11}$ is in~${\Matrices{\Ring}{\matrixsize{N}{N}}}$ yields the relations:
	\begin{equation}
  \MIAH{\mat{a}} =  \begin{bmatrix}
    \MIAH{\mat{A}_{11}} & \MIAH{\mat{a}_{21}}\\
      \MIAH{\mat{a}_{12}} & \SymbolIAH(a_{22})\\
  \end{bmatrix} = [\SymbolIAH(a_{ji})]_{ij}
\end{equation} by  induction hypothesis.
  The case of matrices in $\Ring^{\matrixsize{m}{(N+1)}}$ and~$\Matrices{\Ring}{{\matrixsize{(N+1)}{n}}}$ is dealt
  with similarly, using 0-dimensional blocks $\mat{a}_{21}$ or $\mat{a}_{12}$ respectively.
  \end{proof}

\begin{lemma}\label{lem:constant}
  For all~$\alpha$ in~${\Ring}$ and for all~$\mat{A}$ in~${\Matrices{\Ring}{\matrixsize{m}{n}}}$,~${\MIAH{\alpha{\mat{A}}} = \MIAH{{A}}\MIAH{\alpha}}$.
\end{lemma}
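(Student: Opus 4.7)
The plan is to reduce the scalar multiplication $\alpha \mat{A}$ to a matrix product and then apply the properties already established. Concretely, I would write $\alpha \mat{A} = (\alpha \IdentityMatrixOfSize{m}) \cdot \mat{A}$, where $\alpha \IdentityMatrixOfSize{m}$ is the $m\times m$ scalar matrix. Applying the multiplicative property \eqref{MMIAH:mul} then yields
\begin{equation*}
\MIAH{\alpha \mat{A}} = \MIAH{\mat{A}} \cdot \MIAH{\alpha \IdentityMatrixOfSize{m}}.
\end{equation*}

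The next step is to identify $\MIAH{\alpha \IdentityMatrixOfSize{m}}$ with $\SymbolIAH(\alpha)\IdentityMatrixOfSize{m}$, so that right-multiplication by it coincides with right scalar multiplication by $\SymbolIAH(\alpha) = \MIAH{\alpha}$. This follows from \cref{lem:transp}: the $(i,j)$-entry of $\MIAH{\alpha \IdentityMatrixOfSize{m}}$ is $\SymbolIAH$ applied to the $(j,i)$-entry of $\alpha \IdentityMatrixOfSize{m}$, which is $\alpha$ on the diagonal and $0$ off the diagonal. Additivity \eqref{MMIAH:add} (or \eqref{RAH:add}) forces $\SymbolIAH(0)=0$, so the off-diagonal entries vanish and the diagonal ones become $\SymbolIAH(\alpha)$, giving the claimed scalar matrix.

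A slightly more direct alternative, which I would likely prefer for concision, bypasses the intermediate scalar-matrix identity and argues entry-wise from \cref{lem:transp}. The $(i,j)$-entry of $\MIAH{\alpha \mat{A}}$ is
\begin{equation*}
\SymbolIAH\bigl((\alpha \mat{A})_{ji}\bigr) = \SymbolIAH(\alpha \, a_{ji}) = \SymbolIAH(a_{ji})\,\SymbolIAH(\alpha),
\end{equation*}
where the last equality uses the ring antihomomorphism rule \eqref{RAH:mul}; this coincides with the $(i,j)$-entry of $\MIAH{\mat{A}}\cdot\SymbolIAH(\alpha)$, again by \cref{lem:transp}.

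There is no real obstacle here; the only subtlety worth flagging is a notational one, namely that $\MIAH{\alpha}$ denotes $\SymbolIAH(\alpha)$ viewed as a $1\times 1$ matrix (acting as right scalar multiplication on $\MIAH{\mat{A}}$), and that the ordering $\MIAH{\mat{A}}\MIAH{\alpha}$ on the right-hand side, rather than $\MIAH{\alpha}\MIAH{\mat{A}}$, is precisely what the antihomomorphism reversal \eqref{RAH:mul} forces — in particular no commutativity assumption on $\Ring$ is required.
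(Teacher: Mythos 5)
Your first argument is exactly the paper's proof: factor $\alpha\mat{A}=(\alpha\IdentityMatrixOfSize{m})\mat{A}$, apply \cref{MMIAH:mul}, and identify $\MIAH{\alpha\IdentityMatrixOfSize{m}}=\SymbolIAH(\alpha)\IdentityMatrixOfSize{m}$ via \cref{lem:transp}. The entry-wise alternative you sketch is a harmless rephrasing of the same ingredients, and your remark about the ordering being forced by antihomomorphism (no commutativity needed) is correct.
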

\begin{proof}
  By~\cref{lem:transp},~${\MIAH{\alpha \IdentityMatrixOfSize{m}} =\SymbolIAH(\alpha)\IdentityMatrixOfSize{m}= \IdentityMatrixOfSize{m}\SymbolIAH(\alpha)}$.
  Then by~\cref{MMIAH:mul}, the relations~${\MIAH{\alpha{\mat{A}}}=\MIAH{(\alpha \IdentityMatrixOfSize{m})  \mat{A} } = \MIAH{{A}} \MIAH{\alpha \IdentityMatrixOfSize{m}} = \MIAH{{A}} \SymbolIAH(\alpha)}$ hold.
\end{proof}

The following~\cref{lem:endo} shows that~\cref{def:MIAH} is a natural
extension of a ring antiendomorphism for matrices.

\begin{lemma}\label{lem:endo} 
An involutive matrix antihomomorphism is a ring antiendomorphism on its base ring (seen as the ring of $1{\times}1$ matrices).
\end{lemma}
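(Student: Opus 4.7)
The plan is to verify each of the three defining conditions of a ring antihomomorphism from \cref{def:RAH} for the restriction $\SymbolIAH_{1,1}:\Ring\to\Ring$, viewing $\Ring$ as the ring $\Matrices{\Ring}{\matrixsize{1}{1}}$ of $1\times 1$ matrices. Two of the three conditions come essentially for free: the additivity relation~\cref{RAH:add} is the specialization of~\cref{MMIAH:add} with $m=n=1$, and the reversed multiplicativity~\cref{RAH:mul} is the specialization of~\cref{MMIAH:mul} with $m=n=k=1$.

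The main obstacle is establishing the unit condition~\cref{RAH:invol}, namely $\SymbolIAH_{1,1}(1_\Ring)=1_\Ring$, since \cref{def:MIAH} contains no axiom explicitly pinning down the image of the identity. I would derive it by applying~\cref{MMIAH:mul} to the product $1_\Ring\cdot x$ for an arbitrary $x\in\Ring$, yielding $\SymbolIAH(x)=\SymbolIAH(1_\Ring\, x)=\SymbolIAH(x)\SymbolIAH(1_\Ring)$. Setting $e:=\SymbolIAH(1_\Ring)$, this equation reads $\SymbolIAH(x)\,e=\SymbolIAH(x)$ for every $x\in\Ring$. By~\cref{MMIAH:invol} specialized to $m=n=1$, the map $\SymbolIAH_{1,1}$ is its own inverse, hence a bijection of $\Ring$, so every $y\in\Ring$ can be written as $\SymbolIAH(x)$ for some $x$; consequently $y\,e=y$ for all $y\in\Ring$, and choosing $y=1_\Ring$ gives $e=1_\Ring$, completing the verification.
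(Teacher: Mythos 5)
Your proof is correct and relies on the same two axioms as the paper's — anti\-/multiplicativity~\cref{MMIAH:mul} and involutivity~\cref{MMIAH:invol} — but the argument for the unit condition is organized differently. The paper computes $\SymbolIAH(\SymbolIAH(1)\cdot 1)$ in two ways: once by collapsing $\SymbolIAH(1)\cdot 1$ to $\SymbolIAH(1)$ and applying involutivity (giving $1$), and once by first applying anti\-/multiplicativity and then involutivity (giving $\SymbolIAH(1)\cdot 1 = \SymbolIAH(1)$), then equates the two. You instead derive the universal identity $\SymbolIAH(x)\cdot\SymbolIAH(1)=\SymbolIAH(x)$ for all $x$ and invoke the surjectivity of $\SymbolIAH$ (a consequence of involutivity) to conclude that $\SymbolIAH(1)$ acts as a right identity on all of $\Ring$, hence equals $1$. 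Both routes are sound. The paper's is a touch more economical, avoiding the quantifier over $\Ring$ and the explicit bijectivity step; yours makes the role of involutivity as a statement of bijectivity more conceptually visible, at the cost of a small detour. In fact, specializing your universal identity at the single point $x=\SymbolIAH(1)$ gives $1\cdot\SymbolIAH(1)=1$ directly, recovering essentially the paper's computation.
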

\begin{proof} \cref{MMIAH:add,MMIAH:mul} directly
  imply~\cref{RAH:add,RAH:mul} respectively when $m=n=k=1$.
 Then, we have that $\SymbolIAH(1)=\SymbolIAH(1)\cdot{1}$.
 Therefore $\SymbolIAH(\SymbolIAH(1))=\SymbolIAH(\SymbolIAH(1)\cdot{1})$ and
 $1=\SymbolIAH(1)\SymbolIAH(\SymbolIAH(1))=\SymbolIAH(1)\cdot{1}$
 by~\cref{MMIAH:invol,MMIAH:mul,MMIAH:invol}. This right hand side is
 equal to that of the first equation, thus proving the equality of the
 left hand sides and \cref{RAH:invol}.
\end{proof}

\cref{def:MIAH} gathers actually all the requirements for our
algorithm to work in classical hermitian or non-hermitian cases:
\begin{examples}\label{ex:antih}
  For matrices over a commutative ring,
  \begin{itemize}
  \item the matrix transpose with $\MIAH{A}=\Transpose{A}$ and
  \item the matrix conjugate transpose, $\MIAH{A}=\HermitianTranspose{A}$,
  \end{itemize}
  are two examples of matrix anti-homomorphisms.
  However, for instance, transposition over the quaternions is a
  counter-example as the non-commutativity implies there that in
  general~${\Transpose{\left(\MatrixProduct{A}{B}\right)}\neq\MatrixProduct{\Transpose{B}}{\Transpose{A}}}$.
\end{examples}
\begin{definition}
The image $\MIAH{A}$ of a matrix~$\mat{A}$ by an antihomomorphism is called the \emph{adjoint} of~$\mat{A}$.
\end{definition}

\begin{definition}
  Let $\mat{A}\in\Ring^{m{\times}n}$, we denote respectively by \Low{A} and \Up{A} the $m\times n$ lower
and upper triangular parts of $\mat{A}$, namely the matrices $\mat{L}$ and $\mat{U}$ verifying
\begin{itemize}
\item $\mat{L}_{ij}=a_{ij}$ for $i\geq j$ and $\mat{L}_{ij}=0$ otherwise,
\item $\mat{U}_{ij}=a_{ij}$ for $i\leq j$ and $\mat{U}_{ij}=0$ otherwise.
\end{itemize}
\end{definition}

\begin{lemma}\label{lem:uplow}
If~${\MIAH{\mat{A}}=\mat{A}}$ in~${\Matrices{\Ring}{\matrixsize{n}{n}}}$, then $\Up{{A}}=\MIAH{\Low{{A}}}$.
\end{lemma}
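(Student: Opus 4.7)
The plan is to argue entry-wise, using \cref{lem:transp} as the main tool. By that lemma, for any matrix $\mat{M}$, the $(i,j)$-entry of $\MIAH{M}$ is $\SymbolIAH(m_{ji})$. So the self-adjointness hypothesis $\MIAH{A}=\mat{A}$ translates directly into the scalar relations $a_{ij}=\SymbolIAH(a_{ji})$ for all $i,j\in\{1,\dots,n\}$.

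Next I compute $\MIAH{\Low{A}}$ coefficient by coefficient. Denote $\mat{L}=\Low{A}$, so $\ell_{ij}=a_{ij}$ if $i\geq j$ and $\ell_{ij}=0$ otherwise. Applying \cref{lem:transp} again gives $(\MIAH{L})_{ij}=\SymbolIAH(\ell_{ji})$. I split into two cases: when $i\leq j$, one has $j\geq i$, so $\ell_{ji}=a_{ji}$ and thus $(\MIAH{L})_{ij}=\SymbolIAH(a_{ji})=a_{ij}$ by self-adjointness; when $i>j$, one has $\ell_{ji}=0$, hence $(\MIAH{L})_{ij}=\SymbolIAH(0)$.

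The only small wrinkle is to check that $\SymbolIAH(0)=0$, but this follows immediately from~\cref{MMIAH:add} (applied with $\mat{A}=\mat{A'}=0$), giving $\SymbolIAH(0)=\SymbolIAH(0)+\SymbolIAH(0)$ and hence $\SymbolIAH(0)=0$. Combining both cases, $(\MIAH{L})_{ij}=a_{ij}$ for $i\leq j$ and $(\MIAH{L})_{ij}=0$ otherwise, which is exactly the definition of $\Up{A}$.

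There is no genuine obstacle here; the statement reduces to a coordinate computation once \cref{lem:transp} is available. The proof could alternatively be phrased without indices by writing $\mat{A}=\Low{A}+\Up{A}-\mathrm{diag}(\mat{A})$ and applying $\SymbolIAH$ term by term, but the entry-wise argument above is the most direct.
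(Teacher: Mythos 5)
Your proof is correct and follows essentially the same route as the paper: both argue entry-wise via \cref{lem:transp}, split on $i\leq j$ versus $i>j$, and invoke \cref{MMIAH:add} to justify $\SymbolIAH(0)=0$ in the strictly-lower case. You have merely spelled out a few intermediate steps (translating self-adjointness into $a_{ij}=\SymbolIAH(a_{ji})$, and the additive-identity argument) that the paper states more tersely.
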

\begin{proof}
  Applying~\cref{lem:transp}, the coefficients $u_{ij}$ of $\mat{U}=\MIAH{\Low{{A}}}$ for $0<i\leq j$
  satisfy $u_{ij}= \SymbolIAH(a_{ji})$. Now if $\MIAH{\mat{A}}=\mat{A}$, we have  $u_{ij}= a_{ij}$
  for $0<i\leq j$ and $u_{ij}=0$ otherwise, as $\MIAH{0}=0$, by
  \cref{MMIAH:add}. Hence $\mat{U}=\Up{{A}}$.
\end{proof}
\begin{definition}[Skew-unitary]
A matrix~$\mat{Y}$ in~${\Ring^\matrixsize{n}{n}}$ is \emph{skew-unitary} relatively to a matrix antihomomorphism $\SymbolIAH$ if the following relation holds:
\begin{equation}\label{eq:skewu}
\MatrixProduct{\mat{Y}}{\MIAH{{Y}}}=-\IdentityMatrixOfSize{n}.
\end{equation}
\end{definition}

For the cost analysis, we will also need the following variant of the Master Theorem, reflecting the
constant in the leading term of the computed cost bound.
\begin{lemma}
  \label{lem:masterthm}
Let $T(n)$ be defined by the recurrence $T(n)=aT(n/2)+b\left(\frac{n}{2}\right)^\alpha + \LO{n^\alpha}$,
where $0\leq \log_2 a < \alpha$.
Then $T(n) = \frac{b}{2^\alpha-a}n^\alpha + \LO{n^\alpha}$.
\end{lemma}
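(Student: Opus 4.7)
The plan is to prove this by unrolling the recurrence and recognizing a geometric series. I will assume for simplicity that $n$ is a power of $2$ (the general case follows by monotonicity, up to a constant factor absorbed into the error term, since $T$ is nondecreasing on dyadic values and the statement is asymptotic).

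First I would write, for $n = 2^k$,
\begin{equation*}
T(n) = a^k T(1) + \sum_{i=0}^{k-1} a^i \left[ b\left(\frac{n}{2^{i+1}}\right)^\alpha + \varepsilon\!\left(\frac{n}{2^i}\right) \right],
\end{equation*}
where $\varepsilon(m) = o(m^\alpha)$. The $a^k T(1)$ term equals $T(1)\, n^{\log_2 a}$, which is $o(n^\alpha)$ since by hypothesis $\log_2 a < \alpha$.

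Next I would extract the leading term. The main sum factors as
\begin{equation*}
b\, n^\alpha \sum_{i=0}^{k-1} \frac{a^i}{2^{(i+1)\alpha}} = \frac{b\, n^\alpha}{2^\alpha} \sum_{i=0}^{k-1} \left(\frac{a}{2^\alpha}\right)^i.
\end{equation*}
Since $r := a/2^\alpha \in [0,1)$, this geometric series converges to $1/(1-r) = 2^\alpha/(2^\alpha - a)$, and the tail $\sum_{i\geq k} r^i$ is $O(r^k) = O(n^{\log_2 a - \alpha}) = o(1)$. Multiplying by $b n^\alpha / 2^\alpha$ yields the claimed leading term $\frac{b}{2^\alpha - a} n^\alpha$ up to an $o(n^\alpha)$ correction.

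The only real care goes into the error sum $S := \sum_{i=0}^{k-1} a^i \varepsilon(n/2^i)$. This is the step I expect to be the main (though minor) obstacle. I would fix $\delta > 0$, pick $N$ such that $|\varepsilon(m)| \le \delta m^\alpha$ for all $m \ge N$, split $S$ into indices with $n/2^i \ge N$ and indices with $n/2^i < N$, and bound the first part by $\delta n^\alpha \sum_{i\ge 0} r^i = \delta n^\alpha / (1-r)$ while the second part involves only $O(1)$ terms (bounded by a constant depending on $N$ times $a^{k}$), which is again $O(n^{\log_2 a}) = o(n^\alpha)$. Letting $\delta \to 0$ shows $S = o(n^\alpha)$, completing the proof.
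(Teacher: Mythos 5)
Your proof is correct and follows the same route as the paper's: unroll the recurrence for $n=2^k$, isolate the $a^k T(1) = n^{\log_2 a}T(1) = o(n^\alpha)$ term, factor out $\frac{b}{2^\alpha}n^\alpha$ to obtain a geometric series in $r = a/2^\alpha < 1$, and sum. The only difference is that the paper treats the cumulative error term $\sum_{i} a^i\, o\!\left((n/2^i)^\alpha\right)$ as obviously $o(n^\alpha)$ without justification, whereas you verify it carefully via the $\delta$--$N$ splitting into large and small indices; this is welcome extra rigor, not a different approach.
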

\begin{proof}
  \begin{equation*}
    \begin{split}
      T(n) &= a^{\log_2 n}T(1)+ \sum_{i=0}^{\log_2(n)-1} a^i b\left(\frac{n}{2^{i+1}}\right)^\alpha+ \LO{\left(\frac{n}{2^{i}}\right)^\alpha}\\
   & = n^{\log_2 a}T(1) + \frac{b}{2^\alpha}n^\alpha \sum_{i=0}^{\log_2(n)-1}\left(\frac{a}{2^\alpha}\right)^i%
    + \LO{n^\alpha}= \frac{b}{2^\alpha-a} n^\alpha + \LO{n^\alpha}.
    \end{split}
  \end{equation*}
\end{proof}
\section{An algorithm for the product of a matrix by its adjoint with five multiplications}
\label{sec:algo}
We now show how to compute the product of a matrix by its adjoint with respect to
an involutive antihomomorphism in only $5$ recursive multiplications
and $2$ multiplications by any skew-unitary matrix. This is a
generalization of~\cite[Algorithm~2]{jgd:2020:wishart} for any involutive
antihomomorphism.

We next give~\cref{alg:MIAHMM} for even dimensions. In case of odd
dimensions, padding or static/dynamic peeling can always be
used~\cite{Dumas:2009:WinoSchedule}.

\begin{algorithm}[htb]
\caption{Product of a matrix by its adjoint}\label{alg:MIAHMM}
\begin{algorithmic}
\Require{$\mat{A}\in \Ring^{m\times n}$ (with even $m$ and $n$ for the
  sake of simplicity);}
\Require{$\SymbolIAH$ an involutive matrix antihomomorphism;}
\Require{$\mat{Y} \in \Ring^{\frac{n}{2}\times\frac{n}{2}}$
  skew-unitary for $\SymbolIAH$.}
\Ensure{$\Low{\MatrixProduct{\mat{A}}{\MIAH{{A}}}}$.}
	\State Split $\mat{A}=\begin{smatrix} \mat{A}_{11}&\mat{A}_{12}\\ \mat{A}_{21}&\mat{A}_{22}\end{smatrix}$ where $\mat{A}_{11}$ is in~${\Matrices{\Ring}{\matrixsize{\frac{m}{2}}{\frac{n}{2}}}}$
\State \Comment{$4$ additions and 2 multiplications by~$\mat{Y}$:}
\State \({\mat{S}_{1}}\leftarrow{\MatrixProduct{(\mat{A}_{21} - \mat{A}_{11})}{\mat{Y}}}\)
\State \({\mat{S}_{2}}\leftarrow{\mat{A}_{22} - \MatrixProduct{\mat{A}_{21}}{\mat{Y}}}\)
\State \({\mat{S}_{3}}\leftarrow{\mat{S}_{1} - \mat{A}_{22}}\)
\State \({\mat{S}_{4}}\leftarrow{\mat{S}_{3} + \mat{A}_{12}}\)
\State \Comment{$3$ recursive (${\mat{P}_{1}, \mat{P}_{2}, \mat{P}_{5}}$) and~$2$ general products (${\mat{P}_{3}, \mat{P}_{4}}$):}
\State
\({\Low{{P}_{1}}\leftarrow\Low{\MatrixProduct{\mat{A}_{11}}{\MIAH{{A}_{11}}}}}\)
\State \({\Low{{P}_{2}}\leftarrow\Low{\MatrixProduct{\mat{A}_{12} }{\MIAH{{A}_{12}}}}}\)
\State \({\mat{P}_{3} \leftarrow\MatrixProduct{\mat{A}_{22}}{\MIAH{{S}_{4}}}}\)
\State \({\mat{P}_{4} \leftarrow\MatrixProduct{\mat{S}_{1}}{\MIAH{{S}_{2}}}}\)
\State \( {\Low{{P}_{5}}\leftarrow\Low{\MatrixProduct{\mat{S}_{3}}{\MIAH{{S}_{3}}}}} \)
\State \Comment{$3$ half additions and~$2$ complete additions:}
\State \(\Low{{U}_{1}}\!\leftarrow\!\Low{{P}_{1}}\!+\!\Low{{P}_{5}}\)
\State \(\Low{{U}_{3}}\!\leftarrow\!\Low{{P}_{1}}\!+\!\Low{{P}_{2}}\)
\State \(\text{Up}(\mat{U}_{1})\leftarrow\MIAH{\Low{{U}_{1}}}\) \Comment{Forms
  the full matrix $\mat{U}_1$}
\State \( \mat{U}_{2} \leftarrow \mat{U}_{1} + \mat{P}_{4} \),
\State \( \mat{U}_{4} \leftarrow \mat{U}_{2} + \mat{P}_{3} \),
\State \(\Low{{U}_{5}} \leftarrow \Low{{U}_{2}} + \Low{\MIAH{{P}_{4}}}.\)
\State \Return{$\begin{smatrix} \Low{{U}_{3}} &\\ \mat{U}_{4} & \Low{{U}_{5}} \end{smatrix}$.}
\end{algorithmic}
\end{algorithm}

\begin{theorem}\label{thm:complexitybound}
  \cref{alg:MIAHMM} is correct.
Moreover, if any two $n\times{n}$ matrices over a ring $\Ring$ can be multiplied
in  $\MM[\Ring]{\omega}(n)=\GO{n^{\omega}}$ ring
operations for $\omega>2$,
and if there exist a skew-unitary matrix which can be multiplied to
any other matrix in $\LO{n^{\omega}}$ ring operations
then
\cref{alg:MIAHMM} requires fewer than~${\frac{2}{2^{\omega}-3}\MM[\Ring]{\omega}(n)+\LO{n^{\omega}}}$
ring operations.

\end{theorem}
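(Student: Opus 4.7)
The proof splits naturally into two parts: correctness, then complexity.

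For correctness, my plan is to expand the three nontrivial blocks of $A\phi(A)$ using the block rule \cref{MMIAH:mat}, obtaining
\[
  C_{11}=A_{11}\phi(A_{11})+A_{12}\phi(A_{12}),\quad
  C_{21}=A_{21}\phi(A_{11})+A_{22}\phi(A_{12}),\quad
  C_{22}=A_{21}\phi(A_{21})+A_{22}\phi(A_{22}),
\]
and then to check that $U_3=C_{11}$, $U_4=C_{21}$, $U_5=C_{22}$. First I will rewrite each $\phi(S_i)$ using \cref{MMIAH:add,MMIAH:mul}: typically $\phi(S_1)=\phi(Y)(\phi(A_{21})-\phi(A_{11}))$ and $\phi(S_2)=\phi(A_{22})-\phi(Y)\phi(A_{21})$. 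Expanding $P_3,P_4,P_5$ then produces four kinds of terms: the sought $A_{ij}\phi(A_{kl})$, mixed terms of the form $A_{ij}Y\phi(A_{kl})$ or $A_{ij}\phi(Y)\phi(A_{kl})$, and crucially, terms of the form $A_{ij}(Y\phi(Y))\phi(A_{kl})$ which collapse to $-A_{ij}\phi(A_{kl})$ by the skew-unitarity \cref{eq:skewu}. It is this collapse that allows the $A_{ii}\phi(A_{jj})$ cross terms in $P_4$ and $P_5$ to combine into the correct targets. The equality $U_3=C_{11}$ is immediate; a direct calculation then shows $U_1+P_4+P_3=C_{21}$ (the mixed $Y$ and $\phi(Y)$ terms cancel telescopically) giving $U_4$, and similarly $U_2+\phi(P_4)=C_{22}$, where $\phi(P_4)=\phi(\phi(S_2))\phi(S_1)=S_2\phi(S_1)$ by \cref{MMIAH:mul,MMIAH:invol}.

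A subtlety I will address explicitly is that $P_1$, $P_2$, $P_5$ are each $\phi$-symmetric (since $\phi(A\phi(A))=\phi(\phi(A))\phi(A)=A\phi(A)$ by \cref{MMIAH:invol,MMIAH:mul}), so only their lower triangular parts need be computed. The line $\text{Up}(U_1)\leftarrow\phi(\Low{U_1})$ is justified by \cref{lem:uplow} applied to the $\phi$-symmetric matrix $U_1=P_1+P_5$: this gives the \emph{full} $U_1$ at no multiplicative cost, and subsequent combinations $U_2,U_4$ require the full matrix while $U_5$ only needs the lower triangular part (being itself $\phi$-symmetric).

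For the complexity, I count per recursion level: three recursive calls on $n/2\times n/2$ matrices ($P_1,P_2,P_5$ with the recursive algorithm computing only the lower triangular part), two general products of $n/2\times n/2$ matrices ($P_3,P_4$), two multiplications by the skew-unitary $Y$ (of cost $\LO{n^\omega}$ by hypothesis), and $\GO{n^2}$ additive work. Letting $R(n)$ denote the cost of \cref{alg:MIAHMM}, writing $\MM[\Ring]{\omega}(n)=Cn^\omega+\LO{n^\omega}$, this yields
\[
  R(n)=3R(n/2)+2\,\MM[\Ring]{\omega}(n/2)+\LO{n^\omega}=3R(n/2)+2C\left(\tfrac{n}{2}\right)^\omega+\LO{n^\omega}.
\]
Since $\omega>2>\log_2 3$, the hypothesis $0\leq \log_2 a<\alpha$ of \cref{lem:masterthm} is satisfied with $a=3$, $b=2C$, $\alpha=\omega$, giving $R(n)=\frac{2C}{2^\omega-3}n^\omega+\LO{n^\omega}=\frac{2}{2^\omega-3}\MM[\Ring]{\omega}(n)+\LO{n^\omega}$, as claimed.

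The main obstacle is the bookkeeping in the correctness verification, particularly ensuring that every occurrence of the combination $Y\phi(Y)$ or $\phi(Y)Y$ is simplified to $-I$ at the right moment and that the mixed cross-terms (those involving $Y$ or $\phi(Y)$ sandwiched between the $A_{ij}$'s) all cancel. Once those cancellations are checked block by block, everything else (the master-theorem estimate, the use of \cref{lem:uplow}, counting additions) is routine.
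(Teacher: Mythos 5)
Your proposal is correct and follows essentially the same route as the paper's own proof: verify the three output blocks $U_3,U_4,U_5$ by expanding the $P_i$ with the antihomomorphism rules, use $Y\phi(Y)=-I$ to collapse the unwanted cross-terms, invoke \cref{lem:uplow} for the $\phi$-symmetric blocks $P_1,P_2,P_5$ and $U_1$, and then feed the recurrence $T(n)=3T(n/2)+2\MM[\Ring]{\omega}(n/2)+\LO{n^\omega}$ into \cref{lem:masterthm}. The paper's proof just names the two intermediate products $R_1=A_{11}Y\phi(S_2)$ and $R_2=A_{21}Y\phi(A_{22})$ to organize the telescoping cancellations you describe, but the computation is the same.
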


\begin{proof}
For the cost analysis,  \cref{alg:MIAHMM} is applied recursively to compute three
products~$P_{1}, P_{2}$ and~$P_{7}$, while~$P_{4}$ and~$P_{5}$ are
computed
in~$\MM[\Ring]{\omega}(n)$ using the general matrix multiplication algorithm.
The second hypothesis is that applying the skew-unitary matrix~$Y$ to a~${{n}\times{n}}$ matrix costs~$Y(n)=\LO{n^\omega}$.
Then applying \cref{rq:7.5} thereafter, the cost~$T(n)$ of \cref{alg:MIAHMM} satisfies:
\begin{equation}\label{eq:complexity}
T(n) \leq 3T(n/2) + 2\MM[\Ring]{\omega}(n/2) + 2Y(n) + (7.5){(n/2)}^{2} +
\LO{n^2}
\end{equation}
and~$T(4)$ is a constant.
Thus, by~\cref{lem:masterthm}:
\begin{equation}
	T(n) \leq \frac{2}{2^{\omega}-3}\MM[\Ring]{\omega}(n) +\LO{n^{\omega}}.
\end{equation}

Now for the correction, by~\cref{MMIAH:mat}, we have to show that the result of~\cref{alg:MIAHMM} is indeed:
\[\begin{split}
\Low{\MatrixProduct{\mat{A}}{\MIAH{{A}}}} & =
\Low{\MatrixProduct{\mat{A}}{\begin{smatrix}\MIAH{{A}_{11}}&\MIAH{{A}_{21}}\\\MIAH{{A}_{12}}&\MIAH{{A}_{22}}\end{smatrix}}}\\
&=\begin{smatrix}
  \Low{\MatrixProduct{\mat{A}_{11}}{\MIAH{{A}_{11}}}+\MatrixProduct{\mat{A}_{12}}{\MIAH{{A}_{12}}}} & \times \\
\MatrixProduct{\mat{A}_{21}}{\MIAH{{A}_{11}}}+\MatrixProduct{\mat{A}_{22}}{\MIAH{{A}_{12}}}
&\Low{\MatrixProduct{\mat{A}_{21}}{\MIAH{{A}_{21}}}+\MatrixProduct{\mat{A}_{22}}{\MIAH{{A}_{22}}}}
\end{smatrix}
\end{split}\]

First, we have that:
\begin{equation}
\label{eq:prf:u3}
	\Low{{U}_{3}} = \Low{{P}_{1}}+\Low{{P}_{2}}
        = \Low{\MatrixProduct{\mat{A}_{11}}{\MIAH{{A}_{11}}}
	+\MatrixProduct{\mat{A}_{12}}{\MIAH{{A}_{12}}}}.
\end{equation}

Second, as~$\mat{Y}$ is skew-unitary, then we have that~${\MatrixProduct{\mat{Y}}{\MIAH{{Y}}}=-\IdentityMatrixOfSize{\frac{n}{2}}}$.
Also, by~\cref{MMIAH:add,MMIAH:mul},
$\MIAH{{S}_{2}}=\MIAH{{A}_{22}}-\MatrixProduct{\MIAH{{Y}}}{\MIAH{{A}_{21}}}$.
Then, denote by~$\mat{R}_{1}$ the product:
\begin{equation}
\label{eq:prf:r1}
\begin{split}
\mat{R}_{1} &= \MatrixProduct{\mat{A}_{11}}{\MatrixProduct{\mat{Y}}{\MIAH{{S}_{2}}}}
= \MatrixProduct{\mat{A}_{11}}{\MatrixProduct{\mat{Y}}{(\MIAH{{A}_{22}} -
\MatrixProduct{\MIAH{{Y}}}{\MIAH{{A}_{21}}})}}\\
& = \MatrixProduct{\mat{A}_{11}}{(\MatrixProduct{\mat{Y}}{\MIAH{{A}_{22}}} + \MIAH{{A}_{21}})}.
\end{split}
\end{equation}

Further, by~\cref{MMIAH:invol,MMIAH:mul}, we have that
$\mat{P}_{1}=\MatrixProduct{\mat{A}_{11}}{\MIAH{A_{11}}}$, $\mat{P}_{2}=\MatrixProduct{\mat{A}_{12}}{\MIAH{{A}_{12}}}$,
and~${\mat{P}_{5}=\MatrixProduct{\mat{S}_{3}}{\MIAH{{S}_{3}}}}$ are invariant under
the action of \SymbolIAH.
So are therefore, ${\mat{U}_{1}=\mat{P}_{1}+\mat{P}_{5}}$, ${\mat{U}_{3}=\mat{P}_{1}+\mat{P}_{2}}$
and~${\mat{U}_{5}=\mat{U}_{1}+(\mat{P}_{4}+\MIAH{{P}_{4}})}$.
By \cref{lem:uplow}, it suffices to compute $\Low{{U}_{1}}$ and, if needed,
we also have $\Up{{U}_{1}}=\MIAH{\Low{{U}_{1}}}$.

Then, as~$\mat{S}_{3} = \mat{S}_{1}-\mat{A}_{22} =
  \MatrixProduct{(\mat{A}_{21}-\mat{A}_{11})}{\mat{Y}}-\mat{A}_{22}=-\mat{S}_{2}-\MatrixProduct{\mat{A}_{11}}{\mat{Y}}$,
and $\MIAH{\MIAH{{S}_2}}=\mat{S}_2$ by~\cref{MMIAH:invol}, we have that:
\begin{equation}
\label{eq:prf:u1}
\begin{split}
	\mat{U}_{1} & = \mat{P}_{1} + \mat{P}_{5}  = \MatrixProduct{\mat{A}_{11}}{\MIAH{{A}_{11}}} + \MatrixProduct{\mat{S}_{3}}{\MIAH{{S}_{3}}} \\
	& = \MatrixProduct{\mat{A}_{11}}{\MIAH{{A}_{11}}}
	+ \MatrixProduct{(\mat{S}_{2}+\MatrixProduct{\mat{A}_{11}}{\mat{Y}})}{(\MIAH{{S}_{2}}
	+\MatrixProduct{\MIAH{{Y}}}{\MIAH{{A}_{11}}} )} \\
	& = \MatrixProduct{\mat{S}_{2}}{\MIAH{{S}_{2}}}+\MIAH{{R}_{1}}+\mat{R}_{1}.
\end{split}
\end{equation}

Also, denote~${\mat{R}_{2}=\MatrixProduct{\mat{A}_{21}}{\MatrixProduct{{\mat{Y}}}{\MIAH{{A}_{22}}}}}$, so that:
\begin{equation}
\label{eq:prf:s2s2T}
\begin{split}
	\MatrixProduct{\mat{S}_{2}}{\MIAH{{S}_{2}}} & = \MatrixProduct{(\mat{A}_{22} -
	\MatrixProduct{\mat{A}_{21}}{\mat{Y}})}{(\MIAH{{A}_{22}} -
	\MatrixProduct{\MIAH{{Y}}}{\MIAH{{A}_{21}}})} \\
	& = \MatrixProduct{\mat{A}_{22}}{\MIAH{{A}_{22}}}
		-\MatrixProduct{\mat{A}_{21}}{\MIAH{{A}_{21}}}-\mat{R}_{2}-\MIAH{{R}_{2}}.
\end{split}
\end{equation}

Furthermore, from~\cref{eq:prf:r1}:
\begin{equation}
\label{eq:prf:r1p4}
\begin{split}
	  \mat{R}_{1} + \mat{P}_{4} & = \mat{R}_{1} + \MatrixProduct{\mat{S}_{1}}{\MIAH{{S}_{2}}}\\
	& = \mat{R}_{1} + \MatrixProduct{(\mat{A}_{21} - \mat{A}_{11})}{\MatrixProduct{\mat{Y}}{(\MIAH{{A}_{22}} - \MatrixProduct{\MIAH{{Y}}}{\MIAH{{A}_{21}}})}}\\
	& = \MatrixProduct{\mat{A}_{21}}{\MatrixProduct{\mat{Y}}{\MIAH{{A}_{22}}}}
	+ \MatrixProduct{\mat{A}_{21}}{\MIAH{{A}_{21}}}
	  = \mat{R}_{2} + \MatrixProduct{\mat{A}_{21}}{\MIAH{{A}_{21}}}.
\end{split}
\end{equation}

This shows, from~\cref{eq:prf:u1,eq:prf:s2s2T,eq:prf:r1p4}, that:
  \begin{equation}\label{eq:prf:u5}\begin{split}
      \mat{U}_{5} & = \mat{U}_{1} + \mat{P}_{4} + \MIAH{{P}_{4}}
	   = \MatrixProduct{\mat{S}_{2}}{\MIAH{{S}_{2}}} + \MIAH{{R}_{1}}+ \mat{R}_{1}+\mat{P}_{4}+\MIAH{{P}_{4}} \\
	  & = \MatrixProduct{\mat{A}_{22}}{\MIAH{{A}_{22}}}
	  +(-1+2)\MatrixProduct{\mat{A}_{21}}{\MIAH{{A}_{21}}}.
    \end{split}\end{equation}

Third, the last coefficient~$\mat{U}_{4}$ of the result is obtained
from~\cref{eq:prf:r1p4,eq:prf:u5}:

  \begin{equation}\label{eq:prf2:u4}\begin{split}
	  &  \mat{U}_{4} = \mat{U}_{2} + \mat{P}_{3} = \mat{U}_{1} + \mat{P}_{4} +\mat{P_3}\\
	  & = \MatrixProduct{\mat{A}_{22}}{\MIAH{{A}_{22}}}
      -\MatrixProduct{\mat{A}_{21}}{\MIAH{{A}_{21}}}-\mat{R}_{2}-\MIAH{{R}_{2}}+\mat{R}_{1}+\MIAH{{R}_{1}}
      +\mat{P}_4 + \mat{P}_3\\
	  & = \MatrixProduct{\mat{A}_{22}}{\MIAH{{A}_{22}}}
      -\MIAH{{R}_{2}}+\MIAH{{R}_{1}}
      + \mat{P}_3\\
       \end{split}
\end{equation}
  since by~\cref{eq:prf:r1p4},
  $\mat{R}_{1}+\mat{P}_4 =\mat{R_2}+\MatrixProduct{\mat{A}_{21}}{\MIAH{{A}_{21}}}$.
  Now
  \begin{equation}
    \begin{split}
     & \mat{P}_3 =  \MatrixProduct{\mat{A}_{22}}{\MIAH{{S}_4}} =
      \MatrixProduct{\mat{A}_{22}}{\MIAH{\MatrixProduct{(\mat{A}_{21}-\mat{A}_{11})}{\mat{Y} +
            \mat{A}_{12} -\mat{A}_{22}}}}\\
      & = \MIAH{{R}_2} - \MIAH{{R}_1} +\MatrixProduct{\mat{A}_{21}}{\MIAH{{A}_{11}}},
    \end{split}
  \end{equation}
  Hence
  $$
 \mat{U}_{4} =  \MatrixProduct{\mat{A}_{22}}{\MIAH{{A}_{12}}} +\MatrixProduct{\mat{A}_{21}}{\MIAH{{A}_{11}}}
  $$
\end{proof}

To our knowledge, the best previously known result was with a~$\frac{2}{2^{\omega}-4}$ factor instead, see e.g.~\cite[\S~6.3.1]{jgd:2008:toms}.
Table~\ref{tab:ffcomplex} summarizes the arithmetic complexity bound improvements.
\begin{table}[htbp]\centering%
\begin{tabular}{lcrrr}
\toprule
Problem & Alg.\ & $\GO{n^{3}}$ & $\GO{n^{\log_2(7)}}$ & $\GO{n^{\omega}}$ \\
\midrule
\multirow{2}{*}{$\MatrixProduct{A}{\MIAH{A}} \in\F^{n{\times}n}$}
& \cite{jgd:2008:toms} & $n^{3}$ & $\frac{2}{3}\,\textrm{MM}_{\log_{2}(7)}(n)$& $\frac{2}{2^{\omega}-4}\,\textrm{MM}_{\omega}(n)$\\
& Alg.~\ref{alg:MIAHMM}  & \color{darkred}\bf\boldmath$0.8 n^{3}$ & \color{darkred}\bf\boldmath$\frac{1}{2}\,\textrm{MM}_{\log_{2}(7)}(n)$& \color{darkred}\bf\boldmath$\frac{2}{2^{\omega}-3}\,\textrm{MM}_{\omega}(n)$ \\
\bottomrule
\end{tabular}
\caption{Arithmetic complexity bounds leading terms.}\label{tab:ffcomplex}
\end{table}

\begin{examples} In many cases, applying the skew-unitary
  matrix~$Y$ to a~${{n}\times{n}}$ matrix costs only~$yn^{2}$ for some
  constant~$y$ depending on the base ring.
  If the ring is the complex field~$\mathbb{C}$ or satisfies the conditions of
  \cref{lem:lemsqrt}, there is a square root~$i$
  of~$-1$. Setting~${Y=i\,\IdentityMatrixOfSize{n/2}}$
  yields~${Y(n)=n^2}$.
  Otherwise, we show in~\cref{sec:skeworthmat} that in
  characteristic~${p\equiv{3}\bmod{4}}$, \cref{lem:pigeonhole}
  produces~$Y$ equal
  to~${\begin{smatrix}a&b\\-b&a\end{smatrix}\tensorproduct{\IdentityMatrixOfSize{n/2}}}$
  for which~${Y(n)=3n^2}$.
  As a sub-case, the latter can be improved when~${p\equiv{3}\bmod{8}}$:
  then, \cref{lem:minustwo} shows that~$-2$ is a square.
  Therefore, in this case set~${a=1}$ and~${b\equiv\sqrt{-2}\bmod{p}}$
  such that one multiplication is saved. Then the relation~${a^{2}+b^{2}=-1}$
  there yields~${Y=\begin{smatrix} 1 & \sqrt{-2}\\ -\sqrt{-2} &
      1\end{smatrix}\tensorproduct{\IdentityMatrixOfSize{n/2}}}$ for
  which~${Y(n)=2n^2}$.
\end{examples}

\begin{remark}\label{rq:7.5}
Each recursive level of \cref{alg:MIAHMM} is composed of 9 block additions.
An exhaustive search on all symmetric algorithms in the orbit of that
of Strassen (via a Gr\"obner basis
parameterization~\cite{jgd:2020:wishart}) showed that this number is
minimal in this class of algorithms.
Note also that~$3$ out of these~$9$ additions in \cref{alg:MIAHMM}
involve symmetric matrices and are therefore only performed on the
lower triangular part of the matrix.
Overall, the number of scalar additions
is~${6n^{2}+3/2n(n+1)=15/2n^{2}+1.5n}$, nearly half of the optimal in
the non-symmetric case~\cite[Theorem~1]{bshouty:1995a}.
\end{remark}

To further reduce the number of additions, a promising approach is
that undertaken
in~\cite{Karstadt:2017:strassen,Beniamini:2019:fmmsd}.
This is however not clear to us how to adapt our strategy to their
recursive transformation of basis.

\section{Rings with skew unitary matrices}
\label{sec:skeworthmat}
\cref{alg:MIAHMM} requires a skew-unitary matrix.
Unfortunately there are no skew-unitary matrices over~$\RR$, nor~$\QQ$
for $\SymbolIAH$ the transposition, nor over $\CC$ for $\SymbolIAH$ the
Hermitian transposition (there $-1$ cannot be a sum of real squares for a
diagonal element of $\MatrixProduct{\mat{Y}}{\MIAH{Y}}$).
Hence, \cref{alg:MIAHMM} provides no improvement in these cases.
In other domains, the simplest skew-unitary matrices just use a
square root of~$-1$ while others require a sum of squares.

\subsection{Over the complex field}
\cref{alg:MIAHMM} is thus directly usable over~${\CC^{\matrixsize{n}{n}}}$ with~${\MIAH{A}=\Transpose{\mat{A}}}$ and~${\mat{Y}=i\,\IdentityMatrixOfSize{\frac{n}{2}}}$ in~$\CC^{\matrixsize{\frac{n}{2}}{\frac{n}{2}}}$.
When complex numbers are represented in Cartesian form, as a pair of real numbers,
the multiplications by~${\mat{Y}=i\,\IdentityMatrixOfSize{\frac{n}{2}}}$ are
essentially free since they just exchange the real and imaginary
parts, with one sign flip.

As mentioned, for the conjugate transposition,~${\MIAH{A}=\HermitianTranspose{\mat{A}}}$, on the contrary, there are no candidate skew-unitary matrices and we for now report no
improvement in this case using this approach (but another one does as shown in~\cref{ssec:2M}).

Now, even though over the complex the product of a
matrix by its \emph{conjugate} transpose is more widely used,
there are some applications for the product of a matrix by its transpose, see for
instance~\cite{Baboulin:2005:csyrk}. This is reflected in the \textsc{blas} \textsc{api}, where both routines
\texttt{zherk} and \texttt{zsyrk} are offered.

\subsection[Rings where negative one is a square and transposition]{Rings where negative one is a square and $\MIAH{A}=\Transpose{\mat{A}}$}
Over some rings , square roots of~$-1$ can also
be elements of the base field, denoted~$i$ in~$\Ring$ again.
There, \cref{alg:MIAHMM} only requires some
pre-multiplications by this square root (with
also~${\mat{Y}=i\,\IdentityMatrixOfSize{\frac{n}{2}}\in\Ring^{\matrixsize{\frac{n}{2}}{\frac{n}{2}}}}$), but
\emph{within the ring}.

Further, when the ring is a field in positive characteristic, the existence of a square root of
minus one can be characterized, as shown in~\cref{lem:lemsqrt}, thereafter.
\begin{proposition}\label{lem:lemsqrt}
Fields with characteristic two, $p$ satisfying~${{p}\equiv{1}\bmod{4}}$, or finite fields that are an even extension, contain a square root of~$-1$.
\end{proposition}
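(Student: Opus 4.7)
The plan is to handle the three asserted cases separately, each reducing to a very short argument on the order of the multiplicative group.

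For characteristic two, the identity $-1=1=1^2$ trivially exhibits a square root. So the remaining work concerns odd characteristic $p$. In that case, the multiplicative group of any finite field $\F_{p^k}$ is cyclic of order $p^k-1$. Since $\F^*$ is cyclic, $-1$ is the unique element of order $2$, and it admits a square root inside $\F^*$ if and only if the cyclic group contains an element of order $4$, equivalently if and only if $4\mid p^k-1$. Thus the proposition reduces to verifying the divisibility $4\mid p^k-1$ in the two remaining hypotheses.

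The case $p\equiv 1\bmod 4$ with $k=1$ is immediate: $4\mid p-1$ by assumption. For the even-extension case $\F_{p^{2\ell}}$ with $p$ odd, I would factor $p^{2\ell}-1 = (p^{\ell}-1)(p^{\ell}+1)$. Both factors are even since $p^\ell$ is odd, so $4\mid p^{2\ell}-1$ (in fact $8\mid p^{2\ell}-1$, since among two consecutive even integers one is divisible by $4$). Consequently $\F_{p^{2\ell}}^*$ contains an element of order $4$, whose square is $-1$.

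No step is really an obstacle; the only care needed is to separate the characteristic-two case before invoking the cyclic-group argument, which is the sole place where oddness of $p$ matters. I would present the proof as a one-paragraph case split exactly along the three hypotheses, with the elementary divisibility $(p^{\ell}-1)(p^{\ell}+1)$ as the only computation.
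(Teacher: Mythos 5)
Your proof is correct and follows essentially the same approach as the paper's: both exploit the structure of the multiplicative group of the finite field to exhibit a fourth root of unity whose square is $-1$. You make the cyclicity of $\F_{p^k}^{*}$ and the resulting divisibility criterion $4 \mid p^k-1$ explicit, which streamlines the paper's terser exponent-counting argument (where elements $x$ with $x^{(p-1)/4}\neq\pm 1$ are produced directly), but the underlying mathematical content is identical.
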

\begin{proof}
  If~${p=2}$, then~${1=1^{2}=-1}$.
  If~${{{p}\equiv{1}}\bmod{4}}$, then half of the non-zero elements~$x$ in the base field of size~$p$ satisfy~${x^{\frac{p-1}{4}} \neq \pm 1}$ and then the square of the latter must be~$-1$.
  If the finite field~$\F$ is of cardinality~$p^{2k}$, then, similarly, there exists elements~${x^{\frac{p^{k}-1}{2}\frac{p^{k}+1}{2}}}$ different from~$\pm 1$ and then the square of the latter must be~$-1$.
\end{proof}

\subsection[Any field with positive characteristic and transposition]{Any field with positive characteristic and $\MIAH{A}=\Transpose{A}$}\label{ssec:tridiag}
Actually, we show that \cref{alg:MIAHMM} can also be run without any field extension, even when~$-1$ is not a square:
form the skew-unitary matrices constructed in
\cref{lem:pigeonhole}, thereafter, and use them directly as
long as the dimension of~$\mat{Y}$ is even.
Whenever this dimension is odd, it is always possible to pad with zeroes so that~${\MatrixProduct{\mat{A}}{\Transpose{A}}=\MatrixProduct{\begin{smatrix}\mat{A}&0\end{smatrix}}{\begin{smatrix} \Transpose{A} \\ 0\end{smatrix}}}$.
\begin{proposition}\label{lem:pigeonhole}
  Let~$\F_{p^k}$ be a field of characteristic~$p$, there exists~${(a,b)}$ in~${\F_p^{2}}$
  such that the matrix:
\begin{equation}
\begin{smatrix}
a & b\\
-b & a
\end{smatrix}\tensorproduct{\IdentityMatrixOfSize{n}} =
\begin{smatrix}
a\, \IdentityMatrixOfSize{n} & b\, \IdentityMatrixOfSize{n}\\
-b\, \IdentityMatrixOfSize{n} & a\, \IdentityMatrixOfSize{n}
\end{smatrix}\quad \textrm{in}\quad \F_p^{2n{\times}2n}
\end{equation}
is skew-unitary for the transposition.
\end{proposition}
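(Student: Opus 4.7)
The plan is to reduce the skew-unitarity condition to the scalar equation $a^2+b^2=-1$ over $\F_p$, and then solve it by a standard pigeonhole argument on squares.

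First I would compute $\MatrixProduct{\mat{Y}}{\MIAH{\mat{Y}}}$ directly. By \cref{MMIAH:mat} together with \cref{lem:transp}, the antihomomorphism acts blockwise on a $2\times 2$ block matrix by transposing the outer block layout and transposing each block, so
\begin{equation*}
\MIAH{\mat{Y}} = \begin{smatrix} a\,\IdentityMatrixOfSize{n} & -b\,\IdentityMatrixOfSize{n}\\ b\,\IdentityMatrixOfSize{n} & a\,\IdentityMatrixOfSize{n}\end{smatrix}.
\end{equation*}
Multiplying out the two $2\times 2$ block matrices (each block being scalar times $\IdentityMatrixOfSize{n}$, which commute) the off-diagonal blocks cancel and both diagonal blocks equal $(a^2+b^2)\,\IdentityMatrixOfSize{n}$. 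Thus skew-unitarity \eqref{eq:skewu} is equivalent to finding $(a,b)\in\F_p^{2}$ with $a^2+b^2=-1$.

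Next I would solve this equation by cases on $p$. If $p=2$, then $-1=1=1^2+0^2$, so $(a,b)=(1,0)$ works. If $p$ is odd, I would invoke the classical pigeonhole argument: the squaring map $\F_p\to\F_p$ is 2-to-1 on nonzero elements and sends $0\mapsto 0$, so $S=\{a^{2}\mid a\in\F_p\}$ has cardinality $(p+1)/2$, and similarly $T=\{-1-b^{2}\mid b\in\F_p\}$ has cardinality $(p+1)/2$. Since $|S|+|T|=p+1>|\F_p|$, the two sets must intersect, yielding the required pair $(a,b)$.

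Finally, since $a,b$ are chosen in the prime subfield $\F_p\subseteq\F_{p^k}$, the identity $a^{2}+b^{2}=-1$ holds a fortiori in $\F_{p^k}$, so $\mat{Y}$ is skew-unitary as an element of $\F_{p^k}^{2n\times 2n}$. There is no real obstacle here: the proof is essentially a one-line reduction plus a classical counting argument; the only point that requires some care is correctly unfolding the blockwise action of $\SymbolIAH$ on the $2\times 2$ Kronecker block form, which is handled by the definitions of \cref{sec:prelim}.
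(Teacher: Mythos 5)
Your proof is correct and follows essentially the same route as the paper's: reduce skew-unitarity to the scalar equation $a^2+b^2=-1$ via the block Kronecker form, treat characteristic~$2$ separately, and apply the standard pigeonhole argument on the $\frac{p+1}{2}$ squares in $\F_p$ for odd~$p$. The only cosmetic difference is that you spell out the computation of $\MIAH{\mat{Y}}$ via \cref{MMIAH:mat} and \cref{lem:transp}, whereas the paper states the resulting identity directly.
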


\begin{proof}
Using the relation
\begin{equation}
\begin{smatrix}
a \,\IdentityMatrixOfSize{n} & b \,\IdentityMatrixOfSize{n}\\
-b \,\IdentityMatrixOfSize{n} & a \,\IdentityMatrixOfSize{n}
\end{smatrix}
\Transpose{%
\begin{smatrix}
a \,\IdentityMatrixOfSize{n} & b \,\IdentityMatrixOfSize{n}\\
-b \,\IdentityMatrixOfSize{n} & a \,\IdentityMatrixOfSize{n}
\end{smatrix}}=
(a^2+b^2)\,\IdentityMatrixOfSize{2n},
\end{equation}
it suffices to prove that there exist~$a,b$ such that~${a^{2}+b^{2}=-1}$.
In characteristic~2,~${{a=1},{b=0}}$ is a solution as~${1^{2}+0^{2}=-1}$.
In odd characteristic, there are~${\frac{p+1}{2}}$ distinct square elements~${x_{i}}^{2}$ in the base prime field.
Therefore, there are~$\frac{p+1}{2}$ distinct elements~${-1-{x_{i}}^{2}}$.
But there are only~$p$ distinct elements in the base field, thus there exists a couple~$(i,j)$ such
that~${-1-{x_{i}}^{2}}={x_{j}}^{2}$~\cite[Lemma~6]{Seroussi:1980:BBgfp}.
\end{proof}

To further improve the running time of multiplications by a
skew-unitary matrix in this case, one could set one of the squares to
be $1$. This is possible if $-2$ is a square, for instance when
${p\equiv{3}\bmod{8}}$:
\begin{lemma}\label{lem:minustwo}
If ${p\equiv{3}\bmod{8}}$ then $-2$ is a square modulo $p$.
\end{lemma}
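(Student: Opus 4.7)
The plan is to prove this with the standard Legendre symbol machinery, since the claim is equivalent to showing $\left(\frac{-2}{p}\right)=1$ whenever $p\equiv 3\bmod 8$. I would exploit the multiplicativity of the Legendre symbol to factor
\[
\left(\frac{-2}{p}\right) = \left(\frac{-1}{p}\right)\left(\frac{2}{p}\right),
\]
and then invoke the two supplementary laws of quadratic reciprocity to evaluate each factor separately under the hypothesis $p\equiv 3\bmod 8$.

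First I would handle $\left(\frac{-1}{p}\right)$ via Euler's criterion / the first supplement: $\left(\frac{-1}{p}\right)=(-1)^{(p-1)/2}$. Since $p\equiv 3\bmod 8$ implies in particular $p\equiv 3\bmod 4$, the exponent $(p-1)/2$ is odd, yielding $\left(\frac{-1}{p}\right)=-1$. Next I would evaluate $\left(\frac{2}{p}\right)$ using the second supplement $\left(\frac{2}{p}\right)=(-1)^{(p^2-1)/8}$, which equals $+1$ exactly when $p\equiv \pm 1\bmod 8$ and $-1$ when $p\equiv \pm 3\bmod 8$. Since $p\equiv 3\bmod 8$ falls into the latter case, $\left(\frac{2}{p}\right)=-1$. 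Multiplying gives $\left(\frac{-2}{p}\right)=(-1)(-1)=1$, so $-2$ is a nonzero quadratic residue mod $p$, and therefore a square.

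There is essentially no obstacle here: the statement is a direct corollary of the two supplements to quadratic reciprocity, and the only thing to keep track of is the sign bookkeeping under the congruence $p\equiv 3\bmod 8$. For a self-contained paper one might instead cite a standard reference (e.g.\ Ireland--Rosen) rather than reproving the supplements. The only care needed is to note that $p\equiv 3\bmod 8$ is odd and distinct from $2$, so all Legendre symbols involved are well defined.
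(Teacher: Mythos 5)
Your proof is correct and matches the paper's argument essentially verbatim: both factor $\left(\frac{-2}{p}\right)=\left(\frac{-1}{p}\right)\left(\frac{2}{p}\right)$ and evaluate each factor with the two supplementary laws $(-1)^{(p-1)/2}$ and $(-1)^{(p^2-1)/8}$, obtaining $(-1)(-1)=1$.
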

\begin{proof} Using Legendre symbol,
$\left(\frac{-2}{p}\right)=\left(\frac{-1}{p}\right)\left(\frac{2}{p}\right)=(-1)^{\frac{p-1}{2}}(-1)^{\frac{p^2-1}{8}}=(-1)(-1)=1$
\end{proof}

Now, \cref{lem:pigeonhole} shows that skew-unitary matrices
do exist for any field with positive characteristic.
For \cref{alg:MIAHMM}, we need to build them mostly for~${{p}\equiv{3}\bmod 4}$ (otherwise use \cref{lem:lemsqrt}).
\par
For this, without the extended Riemann hypothesis (\mathsc{erh}), it is possible to use the decomposition of primes into squares:
\begin{enumerate}
\item Compute by enumeration a prime~${r=4pk+(3-1)p-1}$, so that
  both relations~$r\equiv{1}\bmod{4}$ and~$r\equiv{-1}\bmod{p}$ hold;
\item Thus, the methods of~\cite{brillhart:1972:twosquares} allow one
  to decompose any prime into squares and give a couple~${(a,b)}$
  in~${\Z^{2}}$ such that~${a^2+b^2=r}$.
Finally, this gives~$a^{2}+b^{2}\equiv{-1}\bmod{p}$.
\end{enumerate}
By the prime number theorem the first step is polynomial in~$\log(p)$,
as is the second step (square root modulo a prime, denoted \FFSqrtname, has a cost close to exponentiation and then the rest of Brillhart's algorithm is \mathsc{gcd}-like).
In practice, though, it is faster to use the following
\cref{alg:sosmodp}, even though the latter has a better
asymptotic complexity bound only if the \mathsc{erh} is true.
\begin{algorithm}[htbp]\caption{\texttt{SoS}: Sum of squares decomposition over a finite field}\label{alg:sosmodp}
\begin{algorithmic}[1]
 \Require{${p\in\Primes\backslash\{2\}}$,~${k\in\Z}$.}
 \Ensure{${(a,b)\in\Z^{2}}$, s.t.~${a^{2}+b^{2}\equiv{k}\bmod{p}}$.}
 \If{$\left(\frac{k}{p}\right)=1$}
    \Comment{$k$ is a square mod~$p$}
    \State \Return{$\left(\FFSqrt{p}{k},0\right)$.}
 \Else \Comment{Find smallest quadratic non-residue}
 \State $s\assign 2$;
 \InlineWhile{$\left(\frac{s}{p}\right)==1$}{$s\assign s+1$}
 \EndIf
 \State ${c \assign \FFSqrt{p}{s-1}}$ \hfill\Comment{${s-1}$ must be a square}
 \State $r \assign k s^{-1} \bmod{p}$
 \State ${a \assign \FFSqrt{p}{r}}$ \Comment{Now~${{k}\equiv{a^{2}s}\equiv{a^{2}(1+c^{2})}\bmod{p}}$}
 \State \Return $\left(a, ac\bmod{p}\right)$
\end{algorithmic}
\end{algorithm}

\begin{proposition}\label{thm:sosmodpcorrect}
\cref{alg:sosmodp} is correct and, under the~\mathsc{erh}, runs in
expected time~${\widetilde{O}\bigl({\log}^{3}(p)\bigr)}$.
\end{proposition}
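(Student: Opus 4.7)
The plan is to split the proof into correctness and a complexity analysis conditional on the \mathsc{erh}. For correctness, the first branch where $\left(\tfrac{k}{p}\right)=1$ is immediate: the returned pair $(\FFSqrt{p}{k},0)$ satisfies $a^{2}+b^{2}\equiv k\pmod{p}$ by construction.

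In the \texttt{else} branch, $k$ is either zero or a quadratic non-residue. I would first verify that both square root calls are legitimate. Since $s$ is the smallest quadratic non-residue modulo $p$, the integer $s-1$ is either $1$ (when $s=2$) or a non-zero quadratic residue strictly smaller than $s$; in either case $c=\FFSqrt{p}{s-1}$ exists. Moreover, $r=k s^{-1}\bmod p$ is either $0$ or the product of two non-residues, hence in every case a quadratic residue, so $a=\FFSqrt{p}{r}$ exists as well. A direct computation then yields
\[
a^{2}+(ac)^{2}=a^{2}(1+c^{2})=a^{2}s=(k s^{-1})s\equiv k\pmod{p},
\]
which establishes correctness.

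For the complexity claim, I would assemble three standard ingredients. First, each Legendre symbol $\left(\tfrac{x}{p}\right)$ can be evaluated in \SO{\log p} bit operations via a fast Euclidean-style reciprocity algorithm. Second, under the \mathsc{erh}, Ankeny's theorem bounds the smallest quadratic non-residue modulo $p$ by \GO{\log^{2} p}, so the while loop performs at most \GO{\log^{2} p} iterations and locates $s$ within \SO{\log^{3} p} bit operations. Third, each of the two calls to \FFSqrtname\ costs \SO{\log^{3} p} expected bit operations using Tonelli--Shanks or Cipolla's algorithm together with fast modular arithmetic; the remaining modular inversion and multiplications contribute only \SO{\log p}. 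Summing these contributions yields the announced \SO{\log^{3} p} expected bound.

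The delicate point is the complexity bookkeeping rather than the algebra: the target bound \SO{\log^{3} p} matches that of a single modular square root, so the non-residue search must be squeezed into the same budget by explicitly invoking Ankeny's \GO{\log^{2} p} bound, which is where the \mathsc{erh} is used; without it one would fall back to a substantially weaker unconditional estimate. The correctness part, by contrast, reduces entirely to the identity $a^{2}(1+c^{2})=a^{2}s$ combined with the elementary fact that the quotient of two non-residues is a residue.
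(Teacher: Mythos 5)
Your proof is correct and follows essentially the same approach as the paper: the algebraic identity $a^{2}(1+c^{2})=a^{2}s\equiv k$, the fact that a quotient of non-residues is a residue, and an \textsc{erh} bound of $\GO{\log^{2}p}$ on the least quadratic non-residue (you cite Ankeny; the paper cites Wedeniwski's explicit refinement of the same result). You are somewhat more careful than the paper about the $k\equiv 0$ edge case and about checking that the arguments fed to \FFSqrtname\ are genuine squares, but this is a tidying-up of the same argument rather than a different route.
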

\begin{proof}
If~$k$ is square then the square of one of its square roots added to
the square of zero is a solution.
Otherwise, the lowest quadratic non-residue (\mathsc{lqnr}) modulo~$p$
is one plus a square~$b^{2}$ ($1$ is always a square so the
\mathsc{lqnr} is larger than~$2$).
For any generator of~$\Z_{p}$, quadratic non-residues, as well as
their inverses ($s$ is invertible as it is non-zero and~$p$ is prime),
have an odd discrete logarithm.
Therefore the multiplication of~$k$ and the inverse of the \mathsc{lqnr} must be a square~$a^{2}$.
This means that the relation~${k=a^{2}\bigr(1+b^{2}\bigl)=a^{2}+{(ab)}^{2}}$ holds.

Now for the running time, under the \mathsc{erh},
\cite[Theorem~6.35]{Wedeniwski:2001:lqnr} shows that the \mathsc{lqnr}
should be lower than~${3\log^{2}(p)/2-44\log(p)/5+13}$.
From this, the expected number of Legendre symbol computations
is~$O\bigr(\log^{2}(p)\bigl)$ and this dominates the modular square
root computations.
\end{proof}

\begin{remark}
Another possibility is to use randomization: instead of using the
lowest quadratic non-residue (\mathsc{lqnr}), randomly select a
non-residue~$s$, and then decrement
it until~${s-1}$ is a quadratic residue ($1$ is a square so this will
terminate).
In practice, the running time seems very close to that of
\cref{alg:sosmodp} anyway, see, e.g.\ the implementation in
Givaro rev.~7bdefe6, \url{https://github.com/linbox-team/givaro}.
Also, when computing~$t$ sum of squares modulo the same prime, one can
compute the \mathsc{lqnr}
\emph{only once} to get all the sum of squares with an expected cost bounded
by~${\widetilde{O}\bigl({{\log^{3}}(p)+t{\log^{2}}(p)\bigr)}}$.
\end{remark}
\begin{remark}\label{alg:FFSoS}
Except in characteristic~$2$ or in algebraic closures, where every
element is a square anyway, \cref{alg:sosmodp} is easily extended over
any finite field: compute the \mathsc{lqnr} in the base prime field,
then use Tonelli-Shanks or Cipolla-Lehmer algorithm to compute square
roots in the extension field.

Denote by~$\FFSoS{q}{k}$ this algorithm decomposing~$k$ as a sum of
squares within any finite field~$\F_{q}$.
This is not always possible over infinite fields, but there
\cref{alg:sosmodp} still works anyway for the special case~${k=-1}$:
just run it in the prime sub-field, since $-1$ must be in it.
\end{remark}

\subsection[Finite fields with even extension and
conjugation]{Finite fields with even extension and
  $\MIAH{A}=\HermitianTranspose{\mat{A}}$}\label{ssec:herkeven}
With $\MIAH{A}=\HermitianTranspose{\mat{A}} $, we need a matrix~$\mat{Y}$ such
that~$\MatrixProduct{\mat{Y}}{\HermitianTranspose{Y}}={\MatrixProduct{\mat{Y}}{\CTranspose{Y}}=-\IdentityMatrix{}}$.
This is not possible anymore over the complex field, but works for any
even extension field, thanks to \cref{alg:sosmodp}. To see this, we
consider next the finite field $\F_{q^{2}}$, where $q$ is a power
of an arbitrary prime. Given $a\in\F_{q^{2}}$, we adopt the convention
that conjugation is given by the Frobenius automorphism:
\begin{equation}\Conjugation{a}=a^q.\end{equation}
The bar operator is $\F_q$-linear and has order $2$ on $\F_{q^{2}}$.

First, if~$-1$ is a square in~$\F_{q}$,
then~${Y=\sqrt{-1}\cdot\IdentityMatrixOfSize{n}}$ works in
$\F_{q^{2}}$ since then $\Conjugation{\sqrt{-1}}=\sqrt{-1}$:
$\MatrixProduct{\mat{Y}}{\CTranspose{Y}}=\sqrt{-1}\cdot\IdentityMatrixOfSize{n}\sqrt{-1}\cdot\IdentityMatrixOfSize{n}=-\IdentityMatrixOfSize{n}$.

Second, otherwise, ${{q}\equiv{{3}\bmod{4}}}$ and then there exists a
square root~$i$ of~$-1$ in~$\F_{q^{2}}$, from \cref{lem:lemsqrt}.
Further, one can build~$(a,b)$, both in the base field~$\F_{q}$, such
that~${a^{2}+b^{2}=-1}$, from~\cref{alg:sosmodp}.
Finally~${Y=(a+ib)\cdot{}{\IdentityMatrixOfSize{n}}}$
in~${{\F_{q^{2}}}^{n{\times}n}}$ is skew-unitary:
indeed, since~${{q}\equiv{{3}\bmod{4}}}$, we have
that~$i^q=i^{3+4k}=i^3(-1)^{2k}=-i$ and, therefore,
${\overline{a+ib}={(a+ib)}^{q}={a-ib}}$.
Finally
$\MatrixProduct{\mat{Y}}{\CTranspose{Y}}=(a+ib)(a-ib)\cdot\IdentityMatrixOfSize{n}=-\IdentityMatrixOfSize{n}$.

\subsection[Any field with positive characteristic and
conjugation]{Any field with positive characteristic and
  $\MIAH{A}=\HermitianTranspose{\mat{A}}$}\label{ssec:allherk}
If $-1$ is a square in the base field, or within an even extension we
have seen in~\cref{ssec:herkeven} that there exists diagonal
skew-unitary matrices. Otherwise, one can always resort to tridiagonal
ones as in~\cref{ssec:tridiag}.
For this, one can always build $(a,b)$ in the base field such that
$a^2+b^2=-1$ using~\cref{lem:pigeonhole}. Then,
$Y=\begin{smatrix}a & b\\-b &
  a\end{smatrix}\tensorproduct{\IdentityMatrixOfSize{n}}$ is a
skew-unitary matrix.
Indeed, since $a$ and $b$ live in the base field, they are invariant
by the Frobenius automorphism.
Therefore, $\CTranspose{Y}=\CTranspose{\begin{smatrix}a & b\\-b &
    a\end{smatrix}}\tensorproduct{\IdentityMatrixOfSize{n}}=\begin{smatrix}a & -b\\b & a\end{smatrix}\tensorproduct{\IdentityMatrixOfSize{n}}$
and
$\MatrixProduct{\mat{Y}}{\CTranspose{Y}}=(a^2+b^2)\cdot{\IdentityMatrixOfSize{n}}=-\IdentityMatrixOfSize{n}$.

\section{Towards a minimality result on the number of multiplications}
\label{sec:minimality}
Our~\cref{alg:MIAHMM} computes the product of a matrix over a ring by its (hermitian) transpose using only~$5$ block multiplications and the (hermitian) transpose of one of these block multiplications. Here, we use consider some vector-spaces and thus, restrict ourselves to consider matrices over a field.
\par
We reformulate in this section the method introduced by de~Groote in~\cite{groote:1978} in order to prove that the tensor rank of the~$\matrixsize{2}{2}$ matrix product is~$7$.
This method is used to prove the following result:
\begin{theorem}\label{thm:no4product}
There is no algorithm derived from non-commutative block~$\matrixsize{2}{2}$ matrix product algorithms that computes the product of a matrix over a field by its (hermitian) transpose using only~$4$ block multiplications and the (hermitian) transpose of one of these block multiplications.
\end{theorem}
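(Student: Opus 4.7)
The plan is to adapt de~Groote's orbit-theoretic technique, originally devised to establish that the tensor rank of $\matrixsize{2}{2}$ matrix multiplication equals $7$, to the more constrained bilinear problem of computing $\MatrixProduct{A}{\MIAH{A}}$ with the extra freedom of reusing the adjoint of one product.

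First, I would recast any hypothetical algorithm as a structured tensor decomposition. An algorithm ``derived from non-commutative block $\matrixsize{2}{2}$ matrix product algorithms'' using $4$ products and the adjoint of one of them exhibits linear forms $u_{k},v_{k}$ in the entries of a generic block matrix $\mat{A}$, products $\mat{P}_{k}=\MatrixProduct{u_{k}(\mat{A})}{v_{k}(\MIAH{{A}})}$ for $k\in\{1,\dots,4\}$, and scalar coefficients such that every entry of $\Low{\MatrixProduct{A}{\MIAH{{A}}}}$ is a linear combination of $\mat{P}_{1},\dots,\mat{P}_{4}$ and $\MIAH{{P}_{j}}$ for a distinguished index $j$. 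Equivalently, this is a rank-$4$ decomposition of a specific tensor $T_{\SymbolIAH}$ encoding $\mat{A}\mapsto\MatrixProduct{A}{\MIAH{A}}$, together with one extra adjoint-symmetry relation.

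Second, I would identify the isotropy group acting on such decompositions and use it to normalize. The map $\mat{A}\mapsto\MatrixProduct{A}{\MIAH{A}}$ is invariant under $\mat{A}\mapsto\MatrixProduct{U}{A}$ whenever $\MatrixProduct{U}{\MIAH{{U}}}=\IdentityMatrix$, and covariant under $\SymbolIAH$ itself. Following de~Groote, I would exploit this action to bring one of the four products to a canonical corner shape (say $u_{1},v_{1}$ annihilating all but a prescribed matrix coefficient), thereby reducing the analysis to a bounded list of configurations. The distinguished index $j$ indexing the ``free'' adjoint $\MIAH{{P}_{j}}$ then gives rise to a natural bifurcation according to whether $j=1$ (the normalized product) or $j\in\{2,3,4\}$.

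Third, in each configuration I would substitute structured inputs---matrices $\mat{A}$ with a single nonzero entry, with a rank-one pattern, or with two opposite entries---into the putative decomposition. Each such substitution produces a system of bilinear equations relating the residual three products to the target quadratic entries of $\MatrixProduct{A}{\MIAH{A}}$. Using the isotropy normalization, this should reduce to asking whether three bilinear products suffice to compute three linearly independent quadratic forms each involving all four scalar entries of a $\matrixsize{2}{2}$ block, an incompatibility which can be dismissed by direct linear-algebraic rank counting.

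The main obstacle, in my view, is the combinatorial case analysis that the adjoint symmetry introduces: the pair $\{\mat{P}_{j},\MIAH{{P}_{j}}\}$ may contribute simultaneously to several output entries, and the isotropy group of $T_{\SymbolIAH}$ is strictly smaller than the full $\mathrm{GL}_{2}^{\times 3}$-style Strassen isotropy group on which de~Groote's original argument relies. Keeping the case enumeration tractable, and, as emphasized in the introduction, free of a brute-force Gr\"obner-basis search, is where the delicate work lies; this is also why the theorem as stated only rules out a single symmetry rather than arbitrary combinations of adjoints of products.
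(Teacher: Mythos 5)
Your high-level framework is right: de Groote's orbit-theoretic technique, isotropy normalization, and a case analysis over the possible shapes of the "extra" rank-one term. However, there is a significant gap between your outline and a working proof, because you have not identified the tool that does the real work.

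The paper does not proceed by evaluating a hypothetical algorithm on structured matrices $\mat{A}$ (single-entry, rank-one, or two-opposite-entry substitutions) and then "rank-counting." That strategy is essentially a direct algebraic attack on the bilinear identities and would be very hard to close. Instead, the decisive ingredient is a Steinitz-exchange criterion in the style of de Groote (Proposition~\ref{prop:deGrooteMainTrick} of the paper): the codomain $\BilinearMapCoDomain$ of the bilinear map has dimension $3$, so a tensor of rank $3+q$ can only exist if there are $q$ linearly independent rank-one elements $\tensor{P}^1,\dots,\tensor{P}^q$ making the subspace $\BilinearMapCoDomainSubSpace\bigl(\Image{\Contraction{\tensor{T}}{1}{3}},\tensor{P}^1,\dots,\tensor{P}^q\bigr)$ equal to all of $\BilinearMapCoDomain$. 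A $4$-product-plus-one-adjoint algorithm gives a $5$-term (not $4$-term, as you write) decomposition, hence $q=2$, and the two free terms must be $\tensor{P}$ and $\MIAH{\tensor{P}}$ for some rank-one $\tensor{P}$. The proof then reduces to a concrete, checkable statement: for every rank-one $\tensor{P}$, the parameterized $\matrixsize{4}{4}$ matrix $\Contraction{\tensor{H}}{1}{3}(\mat{M})+y_1\tensor{P}+y_2\MIAH{\tensor{P}}$ cannot be of rank one for a $3$-dimensional family of $\mat{M}$; this is settled by computing $\matrixsize{2}{2}$ minors and deducing $c_1=0$ or $c_3=0$. The isotropy is also used differently than you suggest: it is applied to $\tensor{P}$ alone (classified by type $(1,1)$, $(1,2)$, $(2,1)$, $(2,2)$), and the crucial subtlety is to choose the sandwiching matrix $\SymSecondTriad$ so that $\SymSecondTriad^{-1}$ is a scalar multiple of $\HermitianTranspose{\SymSecondTriad}$, ensuring the isotropy acts compatibly on $\tensor{P}$ and $\MIAH{\tensor{P}}$ simultaneously (with a separate branch for the degenerate case $s=a_1\Conjugate{a_1}+a_2\Conjugate{a_2}=0$). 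Without this exchange lemma and the flattening-minor computation, your step~3 has no concrete target to verify, and I do not see how the "rank counting" would conclude.
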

This result does not state that it is never possible to multiply by the adjoint using fewer than~$5$ multiplications as shown by the following remark.
\begin{remark}\label{rk:4mult}
Over any ring with a square root~$i$ of~$-1$, there is a computational scheme requiring~$4$ multiplications and computing the product of a~$\matrixsize{2}{2}$-matrix by its transpose:
\begin{equation}
\begin{smatrix}
a & b \\
c & d
\end{smatrix}
\begin{smatrix}
a & c \\
b & d
\end{smatrix}
=
\begin{smatrix}
(a+ib)(a-ib) & \times \\
ac+bd & (c+id)(c-id)
\end{smatrix}
=
\begin{smatrix}
a^2+b^2 & \times \\
ac+bd & c^2+d^2
\end{smatrix}\!.
\end{equation}
This is the case for instance over~$\F_2$, where~${i=1}$, or over the complex numbers.
As this scheme requires for instance that~${aib=iba}$, at least some
commutativity is required, thus in general it does not apply to block
matrices and it is therefore not in the scope of~\cref{thm:no4product}.
\end{remark}
The following section is devoted to shortly present the framework used in this part of our work.
\subsection{The framework of bilinear maps encoded by tensors}
We present de~Groote's proof using a tensorial presentation of
bilinear maps; we recall briefly this standpoint through the following
well-known example of seven multiplications and we refer to~\cite{Landsberg:2016ab} for a complete introduction to this framework.
\begin{example}
Considered as~${\matrixsize{2}{2}}$ matrices, the matrix product~${\mat{C}=\MatrixProduct{A}{B}}$ could be computed using Strassen algorithm by performing the following computations (see~\cite{Strassen:1969:GENO}):
\begin{equation}
\label{eq:StrassenMultiplicationAlgorithm}
\begin{array}{ll}
\mathcolor{\triadone}{\rho_{1}}\leftarrow{\mathcolor{\triadone}{a_{11}}(\mathcolor{\triadone}{b_{12}-b_{22}})},
&
\\
\mathcolor{\triadtwo}{\rho_{2}}\leftarrow{(\mathcolor{\triadtwo}{a_{11}+a_{12}})\mathcolor{\triadtwo}{b_{22}}},
&
\mathcolor{\triadfour}{\rho_{4}}\leftarrow{(\mathcolor{\triadfour}{a_{12}-a_{22}})(\mathcolor{\triadfour}{b_{21}+b_{22}})},
\\
\mathcolor{\triadthree}{\rho_{3}}\leftarrow{(\mathcolor{\triadthree}{a_{21}+a_{22}}) \mathcolor{\triadthree}{b_{11}}},
&
\mathcolor{\triadfive}{\rho_{5}}\leftarrow{(\mathcolor{\triadfive}{a_{11}+a_{22}})(\mathcolor{\triadfive}{b_{11}+b_{22}})},
\\
\mathcolor{\triadsix}{\rho_{6}}\leftarrow{\mathcolor{\triadsix}{a_{22}}(\mathcolor{\triadsix}{b_{21}-b_{11}})},
&
\mathcolor{\triadseven}{\rho_{7}}\leftarrow{(\mathcolor{\triadseven}{a_{21}-a_{11}})(\mathcolor{\triadseven}{b_{11}+b_{12}})},
\\[\medskipamount]
\multicolumn{2}{c}{
\begin{mmatrix} c_{11} &c_{12} \\ c_{21} &c_{22} \end{mmatrix}
=
\begin{mmatrix}
\mathcolor{\triadfive}{\rho_{5}} + \mathcolor{\triadfour}{\rho_{4}} - \mathcolor{\triadtwo}{\rho_{2}} + \mathcolor{\triadsix}{\rho_{6}} &
\mathcolor{\triadsix}{\rho_{6}} + \mathcolor{\triadthree}{\rho_{3}} \\
\mathcolor{\triadtwo}{\rho_{2}} + \mathcolor{\triadone}{\rho_{1}}&
\mathcolor{\triadfive}{\rho_{5}} + \mathcolor{\triadseven}{\rho_{7}} + \mathcolor{\triadone}{\rho_{1}}- \mathcolor{\triadthree}{\rho_{3}}
\end{mmatrix}\!.}
\end{array}
\end{equation}
With~${m,n,p}$ equal to~$2$, this algorithm encodes a bilinear map:
\begin{equation}\label{eq:mxnTimesnxp}
\begin{array}{ccl}
\Matrices{\F}{\matrixsize{m}{n}} \times \Matrices{\F}{\matrixsize{n}{p}} & \rightarrow &\Matrices{\F}{\matrixsize{m}{p}}, \\
(\mat{A},\mat{B}) &\rightarrow & \MatrixProduct{A}{B}.
\end{array}
\end{equation}
We keep the indices~${m,n,p}$ in this section for the sake of clarity in order to distinguish the different spaces involved in the sequel.
The spaces~${\Matrices{\F}{\matrixsize{\cdot}{\cdot}}}$ can be endowed with the Frobenius product~${{\langle \mat{M},\mat{N}\rangle}={\Trace({\MatrixProduct{\Transpose{M}}{\mat{N}}})}}$ that establishes an isomorphism between~$\Matrices{\F}{\matrixsize{\cdot}{\cdot}}$ and its dual space~$\bigl(\Matrices{\F}{\matrixsize{\cdot}{\cdot}}\bigr)^{\star}$;
hence, it allows for example to associate the trilinear form~${\Trace(\MatrixProduct{\Transpose{C}}{\MatrixProduct{A}{B}})}$ and the matrix multiplication~(\ref{eq:mxnTimesnxp}):
\begin{equation}
\label{eq:TrilinearForm}
\Contraction{\tensor{S}}{}{3}:
\begin{array}[t]{ccc}
{\F}^{\matrixsize{m}{n}} \times {\F}^{\matrixsize{n}{p}} \times {({\F}^{\matrixsize{m}{p}})}^{\star}&\rightarrow & {\F}, \\
(\mat{A},\mat{B},\Transpose{C}) &\rightarrow & \langle \mat{C},\MatrixProduct{A}{B}\rangle.
\end{array}
\end{equation}
As by construction, the space of trilinear forms is the canonical dual space of order three tensor products, we could encode the Strassen multiplication algorithm~(\ref{eq:StrassenMultiplicationAlgorithm}) as the tensor~$\tensor{S}$ defined by:
\begin{equation}
\label{eq:StrassenTensor}
\begin{array}{r}
\tensor{S}=\sum_{i=1}^{7}{\Sigma^{i}_{1}}\!\tensorproduct\!{\Sigma^{i}_{2}}\!\tensorproduct\!{S^{3}_{i}}=
{\Sigma^{i}_{1}}\!\tensorproduct\!{\Sigma^{i}_{2}}\!\tensorproduct\!{S^{3}_{i}}=
\mathcolor{\triadone}{\begin{smatrix}1&0\\0&0\\\end{smatrix}\!\tensorproduct\!\begin{smatrix}0&1\\0&-1\\\end{smatrix}\!\tensorproduct\!\begin{smatrix}0&0\\1&1\\\end{smatrix}
}
\!+\!\\[\bigskipamount]
\mathcolor{\triadtwo}{\begin{smatrix}1&1\\0&0\\\end{smatrix}\!\tensorproduct\!\begin{smatrix}0&0\\0&1\\\end{smatrix}\!\tensorproduct\!\begin{smatrix}-1&0\\1&0\\\end{smatrix}}
\!+\!
\mathcolor{\triadthree}{\begin{smatrix}0&0\\1&1\\\end{smatrix}\!\tensorproduct\!\begin{smatrix}1&0\\0&0\\\end{smatrix}\!\tensorproduct\!\begin{smatrix}0&1\\0&-1\end{smatrix}}
\!+\!
\mathcolor{\triadfour}{\begin{smatrix}0&1\\0&-1\\\end{smatrix}\!\tensorproduct\!\begin{smatrix}0&0\\1&1\\\end{smatrix}\!\tensorproduct\!\begin{smatrix}1&0\\0&0\\\end{smatrix}}
\!+\!\\[\bigskipamount]
\mathcolor{\triadfive}{{\begin{smatrix}1&0\\0&1\end{smatrix}}\!\tensorproduct\!{\begin{smatrix}1&0\\0&1\end{smatrix}}\!\tensorproduct\!\begin{smatrix}1&0\\0&1\\\end{smatrix}}
\!+\!
\mathcolor{\triadsix}{\begin{smatrix}0&0\\0&1\\\end{smatrix}\!\tensorproduct\!\begin{smatrix}-1&0\\1&0\\\end{smatrix}\!\tensorproduct\!\begin{smatrix}1&1\\0&0\\\end{smatrix}}
\!+\!
\mathcolor{\triadseven}{\begin{smatrix}-1&0\\1&0\\\end{smatrix}\!\tensorproduct\!\begin{smatrix}1&1\\0&0\\\end{smatrix}\!\tensorproduct\!\begin{smatrix}0&0\\0&1\\\end{smatrix}}
\!
\end{array}
\end{equation}
in~${{({\F}^{\matrixsize{m}{n}})}^{\star} \tensorproduct {({\F}^{\matrixsize{n}{p}})}^{\star} \tensorproduct {\F}^{\matrixsize{m}{p}}}$ with~${m=n=p=2}$.
\end{example}
Remark that---as introduced in the above~\Cref{eq:StrassenTensor}---we are going to
use in the sequel the \emph{Einstein summation convention} in order to
simplify the forthcoming notations (according to this convention, when an
index variable appears twice in a term and is not otherwise defined,
it represents in fact the sum of that term over all the values of the index).

Starting from the tensor representation~$\tensor{S}$ of our algorithm, we could consider several \emph{contractions} that are the main objects manipulated in the sequel.
\subsection{Flattening tensors and isotropies}
The \emph{complete} contraction~${\Contraction{\tensor{S}}{}{3}({\mat{A}}\tensorproduct{\mat{B}}\tensorproduct{\Transpose{C}})}$ is defined as the following map:
\begin{equation}
\label{eq:Completecontraction}
\begin{array}{c}
{\left({\Dual{\Matrices{\F}{\matrixsize{m}{n}}}}\tensorproduct{\Dual{\Matrices{\F}{\matrixsize{n}{p}}}}\tensorproduct{\Matrices{\F}{\matrixsize{m}{p}}}\right)}
\tensorproduct
{\left({\Matrices{\F}{\matrixsize{m}{n}}}\tensorproduct{\Matrices{\F}{\matrixsize{n}{p}}}\tensorproduct{\Dual{\Matrices{\F}{\matrixsize{m}{p}}}}\right)}
\rightarrow
\F, \\[\smallskipamount]
{\left({\Sigma^{i}_{1}}\!\tensorproduct\!{\Sigma^{i}_{2}}\!\tensorproduct\!{S^{3}_{i}}\right)}
\tensorproduct
({\mat{A}}\tensorproduct{\mat{B}}\tensorproduct{\Transpose{C}}) \rightarrow
\langle {\Sigma^{i}_{1}}, \mat{A} \rangle
\langle {\Sigma^{i}_{2}}, \mat{B} \rangle
\langle {S^{3}_{i}},\Transpose{C}  \rangle.
\end{array}
\end{equation}
We already saw informally in the previous section that this complete contraction is~${\Trace(\MatrixProduct{\MatrixProduct{A}{B}}{C})}$ and we recall in the following remark some of its basic properties.
\begin{remark}
Given three invertible matrices:
\begin{equation}
\SymFirstTriad\in\Matrices{\F}{\matrixsize{m}{m}},\quad
\SymSecondTriad\in\Matrices{\F}{\matrixsize{p}{p}},\quad
\SymThirdTriad\in\Matrices{\F}{\matrixsize{n}{n}}
\end{equation}
that encodes changes of basis, the trace~${\Trace(\MatrixProduct{A}{\MatrixProduct{B}{C}})}$ is equal to:
\begin{equation}
\begin{array}{l}
\label{eq:isotropy}
 \Trace\bigl(\Transpose{(\MatrixProduct{\MatrixProduct{A}{B}}{C})}\bigr)
 =\Trace(\MatrixProduct{C}{\MatrixProduct{A}{B}})
=\Trace(\MatrixProduct{B}{\MatrixProduct{C}{A}}),\\
\textrm{and}\ \Trace\bigl(\MatrixProduct{\Inverse{\SymFirstTriad}}{\MatrixProduct{A}{\SymSecondTriad}}
 \cdot \Inverse{\SymSecondTriad} \cdot \mat{B} \cdot {\SymThirdTriad} \cdot \Inverse{\SymThirdTriad} \cdot \mat{C} \cdot {\SymFirstTriad}\bigr).
\end{array}
\end{equation}
\end{remark}
These relations illustrate the following theorem:
\begin{theorem}[{\cite[\S~2.8]{groote:1978a}}]
The isotropy group of the~$\matrixsize{n}{n}$ matrix multiplication tensor is~${{{\mathsc{psl}^{\pm}({\F^{n}})}^{\times 3}}\!\rtimes{\mathfrak{S}_{3}}}$, where~$\mathsc{psl}$ stands for the group of matrices of determinant~${\pm{1}}$ and~$\mathfrak{S}_{3}$ for the symmetric group on~$3$ elements.
\end{theorem}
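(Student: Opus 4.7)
The plan is to prove both inclusions of the claimed group equality. Write $\mathcal{T}_{n,n,n}$ for the matrix multiplication tensor, viewed as the trilinear form displayed in~\eqref{eq:TrilinearForm}.

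For the ``easy'' inclusion, I would exhibit each of the stated generators as an isotropy of $\mathcal{T}_{n,n,n}$. The triples $(\SymFirstTriad, \SymSecondTriad, \SymThirdTriad)$ act via the substitution $(\mat{A}, \mat{B}, \mat{C}) \mapsto (\Inverse{\SymFirstTriad}\mat{A}\SymSecondTriad, \Inverse{\SymSecondTriad}\mat{B}\SymThirdTriad, \Inverse{\SymThirdTriad}\mat{C}\SymFirstTriad)$, and invariance of $\Trace(\mat{A}\mat{B}\mat{C})$ under this substitution is precisely~\eqref{eq:isotropy}. The kernel is identified by a direct scalar computation: any triple of proportional scalar matrices $(\lambda\IdentityMatrix, \mu\IdentityMatrix, \nu\IdentityMatrix)$ telescopes through the three factors and acts trivially, so the faithful action factors through three copies of the projective group; imposing $\det = \pm 1$ on each factor to preserve the Frobenius pairing up to sign gives $(\mathsc{psl}^{\pm}(\F^{n}))^{\times 3}$. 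The cyclic identity $\Trace(\mat{A}\mat{B}\mat{C}) = \Trace(\mat{B}\mat{C}\mat{A})$ supplies a subgroup of order~$3$, and the transpose identity $\Trace(\mat{A}\mat{B}\mat{C}) = \Trace(\Transpose{C}\Transpose{B}\Transpose{A})$ supplies an order-$2$ generator, together generating the semidirect factor $\mathfrak{S}_{3}$ permuting (and flipping) the three slots.

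For the ``hard'' inclusion I would show that any isotropy arises in this way. An isotropy permutes the three tensor factors up to an $\mathfrak{S}_{3}$ action, since the three flattenings of $\mathcal{T}_{n,n,n}$ carry matching algebraic invariants (dimension, rank locus of each slice); after composing out the $\mathfrak{S}_{3}$ component, the remaining isotropy has the form $(\Phi_{1}, \Phi_{2}, \Phi_{3})$ with $\Phi_{i}$ invertible linear on $\Matrices{\F}{\matrixsize{n}{n}}$ and satisfying $\Trace(\Phi_{1}(\mat{A})\Phi_{2}(\mat{B})\Phi_{3}(\mat{C})) = \Trace(\mat{A}\mat{B}\mat{C})$ for all $\mat{A}, \mat{B}, \mat{C}$. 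Instantiating $\mat{C}$ through a basis of elementary matrices and using nondegeneracy of the Frobenius pairing converts this trilinear condition into a bilinear relation forcing $\mat{B} \mapsto \Phi_{2}(\IdentityMatrix)^{-1}\Phi_{2}(\mat{B})$ and $\mat{A} \mapsto \Phi_{1}(\mat{A})\Phi_{1}(\IdentityMatrix)^{-1}$ to assemble into a single algebra automorphism (or antiautomorphism) of $M_{n}(\F)$. The Skolem--Noether theorem then identifies this (anti)automorphism with conjugation by some $\SymFirstTriad \in GL_{n}$, and back-substitution recovers $\SymSecondTriad, \SymThirdTriad$ fitting the $(\mathsc{psl}^{\pm})^{\times 3}$ template.

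The main obstacle is the reduction step just described: extracting an honest algebra (anti)automorphism from the purely tensorial invariance condition. The delicate book-keeping, tracking how the three $\Phi_{i}$ interact through the Frobenius pairing and how each interacts with the unit $\IdentityMatrix$, is precisely the content of de~Groote's argument in~\cite[\S~2.8]{groote:1978a}; I would invoke that argument rather than re-derive it, restricting my own exposition to the determinant normalization that cuts $GL_{n}^{\times 3}$ down to $(\mathsc{psl}^{\pm})^{\times 3}$ and to the $\mathfrak{S}_{3}$ bookkeeping inherited from the cyclic and transpose symmetries of the trace.
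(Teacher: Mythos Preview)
The paper does not give its own proof of this theorem: it is stated purely as a citation of de~Groote~\cite[\S~2.8]{groote:1978a}, with the preceding trace identities~\eqref{eq:isotropy} serving only as illustration of the easy inclusion. There is therefore no in-paper argument against which to compare your proposal; your sketch and the paper agree precisely in that both ultimately defer to de~Groote for the substantive direction.

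That said, one point in your sketch deserves tightening. Your justification for the appearance of $\mathsc{psl}^{\pm}$ (``imposing $\det=\pm 1$ on each factor to preserve the Frobenius pairing up to sign'') is not the right mechanism. The map from $GL_n(\F)^{\times 3}$ into the automorphism group of the tensor space, via the sandwiching action of~\cref{def:sandwiching}, has kernel exactly the diagonal scalars $\{(\lambda\IdentityMatrix,\lambda\IdentityMatrix,\lambda\IdentityMatrix):\lambda\in\F^{*}\}$; the resulting quotient is what one must identify with the product of three copies of the $\det=\pm 1$ group as the paper defines it. The Frobenius pairing plays no role in this normalisation. Everything else in your outline---cyclic and transpose symmetries for the $\mathfrak{S}_{3}$ factor, and the Skolem--Noether reduction for the converse---matches the standard route taken in~\cite{groote:1978a}.
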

The following classical statement redefines the \emph{sandwiching}
isotropy on a matrix multiplication tensor:
\begin{definition}\label{def:sandwiching}
Given~${\Isotropy{g}={(\SymFirstTriad\times\SymSecondTriad\times\SymThirdTriad)}}$ in~${\mathsc{psl}^{\pm}({\F}^{n})}^{\times 3}$, its action~${\IsotropyAction{\Isotropy{g}}{\tensor{S}}}$ on a tensor~$\tensor{S}$ is given by~${\IsotropyAction{\Isotropy{g}}{({\Sigma^{i}_{1}}\tensorproduct{\Sigma^{i}_{2}}\tensorproduct{S^{3}_{i}})}}$ where each summands is equal to:
\begin{equation}
\label{eq:sandwiching}
{\left(\MatrixProduct{\InvTranspose{\SymFirstTriad}}{\MatrixProduct{\Sigma^{i}_{1}}{\Transpose{\SymSecondTriad}}}\right)}
\tensorproduct
{\left(\MatrixProduct{\InvTranspose{\SymSecondTriad}}{\MatrixProduct{\Sigma^{i}_{2}}{\Transpose{\SymThirdTriad}}}\right)}
\tensorproduct
{\left(\MatrixProduct{\InvTranspose{\SymThirdTriad}}{\MatrixProduct{S^{3}_{i}}{\Transpose{\SymFirstTriad}}}\right)},\
\forall i\ \textrm{fixed}.
\end{equation}
\end{definition}
These isotropies will be used later; for the moment, let us now focus
our attention on the very specific standpoint on which is based the
forthcoming developments: flattenings.
\begin{definition}
Given a tensor~$\tensor{S}$, the third flattening~$\Contraction{\tensor{S}}{1}{3}$ (a.k.a.\ third~$1$-contraction) of the tensor~$\tensor{S}$ is:
\begin{equation}
\label{eq:ThirdFlattening}
\Contraction{\tensor{S}}{1}{3} :
\begin{array}[t]{ccc}
\Matrices{\F}{\matrixsize{m}{p}}&\rightarrow&{\Dual{\Matrices{\F}{\matrixsize{m}{n}}}}\tensorproduct{\Dual{\Matrices{\F}{\matrixsize{n}{p}}}},
\\[\smallskipamount]
\mat{M} &\rightarrow&{\langle \mat{M},S^{3}_{i}\rangle}\,{{\Sigma^{i}_{1}}\tensorproduct{\Sigma^{i}_{2}}}.
\end{array}
\end{equation}
\end{definition}
\begin{example}
To illustrate this definition and some important technicalities, let
us consider~$\Image{\Contraction{\tensor{S}}{1}{3}}$ the image of the
Strassen tensor~(\ref{eq:StrassenTensor}) flattening: this is a subspace of~${{\Dual{\Matrices{\F}{\matrixsize{m}{n}}}} \tensorproduct {\Dual{\Matrices{\F}{\matrixsize{n}{p}}}}}$.
More precisely, let us first consider only the fifth summand in~\cref{eq:StrassenTensor} and the image of its third flattening:
\begin{equation}\label{eq:FlatteningExample}
\Image{\Contraction{\begin{smatrix}1&0\\0&1\\\end{smatrix}\!\tensorproduct\!\begin{smatrix}1&0\\0&1\\\end{smatrix}\!\tensorproduct\!\begin{smatrix}1&0\\0&1\\\end{smatrix}}{1}{3}}=
\begin{smatrix} c_{11}+c_{22}&0&0&c_{11}+c_{22}\\ 0&0&0&0 \\ 0&0&0&0 \\ c_{11}+c_{22}&0&0&c_{11}+c_{22} \end{smatrix}
\ \forall (c_{11},c_{22})\in\F^{2}.
\end{equation}
The indeterminates~$c_{11}$ and~$c_{22}$ keep track of the domain of the flattening.
The~$\matrixsize{4}{4}$ right-hand side matrix in above expression should not be confused with the Kronecker product~${{\Identity_{\matrixsize{2}{2}}}\tensorproduct{\Identity_{\matrixsize{2}{2}}}}$ involved in the left-hand side.
In fact, the result~$\Identity_{\matrixsize{4}{4}}$ of this Kronecker product is of classical matrix rank~$4$ while the rank in~${{\Dual{\Matrices{\F}{\matrixsize{m}{n}}}} \tensorproduct {\Dual{\Matrices{\F}{\matrixsize{n}{p}}}}}$ of the right-hand side matrix~(\ref{eq:FlatteningExample}) is~$1$ by construction.
Hence, even if we present in this section the elements of~${\Matrices{\F}{\matrixsize{\cdot}{\cdot}}}$ as matrices, we use a vectorization (e.g.~${{\Matrices{\F}{\matrixsize{m}{n}}}\simeq{\Matrices{\F}{mn}}}$) of these matrices in order to perform correctly our computations.
Taking this standpoint into account we obtain the following
description of the whole Strassen third flattening image as:
\begin{equation}
\Image{\Contraction{S}{1}{3}}=
\begin{mmatrix}
c_{11} & c_{12} & 0 & 0 \\
     0 &      0 & c_{11} & c_{12} \\
c_{21} & c_{22} & 0 & 0 \\
     0 &      0 & c_{21} & c_{22}
\end{mmatrix}
\end{equation}
that could be guessed almost without computation.
In fact, this right-hand side matrix is just the matrix of the bilinear form defining the trilinear encoding of the matrix product:
\begin{equation}
\Trace(\MatrixProduct{\MatrixProduct{A}{B}}{\Transpose{C}})
=
\begin{mmatrix}
a_{11} & a_{12} & a_{21} & a_{22}
\end{mmatrix}
\begin{mmatrix}
c_{11} & c_{12} & 0 & 0 \\
     0 &      0 & c_{11} & c_{12} \\
c_{21} & c_{22} & 0 & 0 \\
     0 &      0 & c_{21} & c_{22}
\end{mmatrix}
\begin{mmatrix}
b_{11} \\ b_{12} \\ b_{21} \\ b_{22}
\end{mmatrix}\!.
\end{equation}
\end{example}
Hence, the flattening~$\Image{\Contraction{S}{1}{3}}$ is a canonical description of the matrix product independent from the algorithm/tensor used to encode it; in particular, it is an invariant under the action of the isotropies introduced in~\cref{def:sandwiching}. We are going to use these objects and their properties in the following section.
\subsection{Presentation of de Groote's method}
We are interested in a situation where, given a bilinear map, a classical representation by a tensor~$\tensor{C}$ is known and we wish to disprove the existence of a tensor representation of a given rank.
Inspired by Steinitz exchange theorem, de~Groote introduced in~\cite[\S~1.5]{groote:1978} the following definition to handle this issue.
\begin{definition}
Given a tensor~$\tensor{T}$ encoding a bilinear map whose codomain is~$\BilinearMapCoDomain$ and given~$q$ rank-one elements~${\tensor{P}^{i}}$, linearly independent in~${{\Dual{\Matrices{\F}{\matrixsize{m}{n}}}} \tensorproduct {\Dual{\Matrices{\F}{\matrixsize{n}{p}}}}}$, let us denote by~${\BilinearMapCoDomainSubSpace(\Image{\Contraction{T}{1}{3}},\tensor{P}^{1},\ldots,\tensor{P}^{q})}$ the linear subspace of~$\BilinearMapCoDomain$ defined by:
\begin{equation}
\label{eq:PartiallyKnownRepresentation}
\textup{LinearSpan}
\left\lbrace
\begin{array}{c}
\mat{M}\in\BilinearMapCoDomain \mid \exists{(u_{1},\ldots,u_{q})}\in{{\F}^{q}},\\
 \textup{Rank}_{{\Dual{\Matrices{\F}{\matrixsize{m}{n}}}} \tensorproduct {\Dual{\Matrices{\F}{\matrixsize{n}{p}}}}}
\left( \Contraction{\tensor{T}}{1}{3}(\mat{M}) + u_{j} \tensor{P}^{j} \right)= 1
\end{array}
\right\rbrace.
\end{equation}
\end{definition}
We introduced the notation~$\BilinearMapCoDomain$ for the codomain of the considered bilinear map in order to highlight
the fact that it is isomorphic---via the Frobenius isomorphism---to the domain of the flattening and to show how it is used in the following.
\par
The following proposition allows to construct an effective test that checks if there exists a tensor of rank~${\dim{\BilinearMapCoDomain}+q}$ that defines the considered bilinear map.
\begin{proposition}\label{prop:deGrooteMainTrick}
If there exists a tensor~$\tensor{T}$ of rank~${\dim{\BilinearMapCoDomain}+q}$ encoding a bilinear map with codomain~$\BilinearMapCoDomain$ then there are~$q$ rank-one elements~${\tensor{P}^{i}}$ linearly independent in~${{\Dual{\Matrices{\F}{\matrixsize{m}{n}}}} \tensorproduct {\Dual{\Matrices{\F}{\matrixsize{n}{p}}}}}$ such that~${\BilinearMapCoDomainSubSpace(\Image{\Contraction{\tensor{T}}{1}{3}},\tensor{P}^{1},\ldots,\tensor{P}^{q})}$ is~$\BilinearMapCoDomain$.
\end{proposition}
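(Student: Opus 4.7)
The plan is to construct the $\tensor{P}^{j}$'s directly from a minimal rank decomposition of $\tensor{T}$. Fix such a decomposition $\tensor{T} = \sum_{i=1}^{r} \Sigma^{i}_{1} \tensorproduct \Sigma^{i}_{2} \tensorproduct S^{3}_{i}$ with $r = \dim \BilinearMapCoDomain + q$ equal to the rank of $\tensor{T}$. Since $\tensor{T}$ encodes a bilinear map whose codomain is $\BilinearMapCoDomain$, the vectors $S^{3}_{1}, \ldots, S^{3}_{r}$ must span $\BilinearMapCoDomain$; otherwise the image of the encoded bilinear map would lie in a strict subspace and $\BilinearMapCoDomain$ would not be its effective codomain. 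Select an index set $I \subseteq \{1, \ldots, r\}$ of size $\dim \BilinearMapCoDomain$ such that $(S^{3}_{\ell})_{\ell \in I}$ is a basis of $\BilinearMapCoDomain$; the complement $I^{c} = \{k_1, \ldots, k_q\}$ then contains exactly $q$ indices.

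A useful preliminary observation, essentially by minimality of the decomposition, is that the rank-one slices $\Sigma^{i}_{1} \tensorproduct \Sigma^{i}_{2}$ are linearly independent in the flattening space $\Dual{\Matrices{\F}{\matrixsize{m}{n}}} \tensorproduct \Dual{\Matrices{\F}{\matrixsize{n}{p}}}$. Indeed, any nontrivial linear relation $\sum_{i} \lambda_i \Sigma^{i}_{1} \tensorproduct \Sigma^{i}_{2} = 0$ would let one eliminate one summand of $\tensor{T}$ in favor of the others by redistributing the eliminated coefficient into the third tensor factors, yielding a decomposition of rank at most $r-1$ and contradicting the hypothesis on the rank of $\tensor{T}$. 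In particular, the candidates $\tensor{P}^{j} := \Sigma^{k_j}_{1} \tensorproduct \Sigma^{k_j}_{2}$ for $j = 1, \ldots, q$ are $q$ linearly independent rank-one elements, fulfilling the structural requirement on the $\tensor{P}^{i}$'s.

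It remains to exhibit, for each $\ell \in I$, an element $M_\ell$ of $\BilinearMapCoDomainSubSpace(\Image{\Contraction{\tensor{T}}{1}{3}}, \tensor{P}^{1}, \ldots, \tensor{P}^{q})$ and to observe that the family $(M_\ell)_{\ell \in I}$ is a basis of $\BilinearMapCoDomain$. Since $(S^{3}_{\ell})_{\ell \in I}$ is a basis, the Frobenius pairing lets us define $M_\ell$ as its dual basis element: $\langle M_\ell, S^{3}_{\ell} \rangle = 1$ and $\langle M_\ell, S^{3}_{i} \rangle = 0$ for every $i \in I \setminus \{\ell\}$. Unrolling the flattening definition then gives
$$\Contraction{\tensor{T}}{1}{3}(M_\ell) = \Sigma^{\ell}_{1} \tensorproduct \Sigma^{\ell}_{2} + \sum_{j=1}^{q} \langle M_\ell, S^{3}_{k_j} \rangle \, \tensor{P}^{j},$$
so that setting $u_j := -\langle M_\ell, S^{3}_{k_j} \rangle$ reduces $\Contraction{\tensor{T}}{1}{3}(M_\ell) + u_j \tensor{P}^{j}$ to the single rank-one element $\Sigma^{\ell}_{1} \tensorproduct \Sigma^{\ell}_{2}$. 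Hence each $M_\ell$ belongs to $\BilinearMapCoDomainSubSpace$, and since $(M_\ell)_{\ell \in I}$ is a basis of $\BilinearMapCoDomain$, the linear span $\BilinearMapCoDomainSubSpace$ must coincide with $\BilinearMapCoDomain$.

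The main obstacle I expect to have to secure carefully is the pair of preliminary rigidity claims around the minimal decomposition: that $(S^{3}_{i})_{1 \leq i \leq r}$ spans $\BilinearMapCoDomain$ and that the rank-one slices $\Sigma^{i}_{1} \tensorproduct \Sigma^{i}_{2}$ are linearly independent. Both rely on $r$ being the genuine rank and not merely an upper bound; once these structural facts are in hand, the remainder of the proof reduces to a tidy computation with the dual basis through the Frobenius pairing.
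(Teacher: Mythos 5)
Your proof is correct and reaches the same conclusion by the same underlying idea — extract a complementary set of $q$ indices from a minimal rank decomposition of $\tensor{T}$ — but the organization is genuinely streamlined relative to the paper's. The paper's proof first introduces an auxiliary ``classical'' tensor $\tensor{C}$ of larger rank $\rho$ and invokes the invariance $\Contraction{\tensor{C}}{1}{3}(\mat{M})=\Contraction{\tensor{T}}{1}{3}(\mat{M})$; in your argument $\tensor{C}$ is never needed because the definition of $\BilinearMapCoDomainSubSpace$ only involves $\Contraction{\tensor{T}}{1}{3}$, so you work entirely with the one decomposition of $\tensor{T}$. Where the paper then assembles the block identity in \cref{eq:CanonicalWRTTensor}--\cref{eq:InducedWRTTensor}, inverts the matrix $D$, and reads off the coefficients $U=-D^{-1}C$, $V=D^{-1}$, you instead define each $M_\ell$ directly as the Frobenius-dual basis element of $(S^3_\ell)_{\ell\in I}$; the equations $\langle M_\ell, S^3_i\rangle=\delta_{\ell i}$ for $i\in I$ encode exactly the same linear system, and your $u_j=-\langle M_\ell,S^3_{k_j}\rangle$ plays the role of the paper's row of $U$. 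This buys a shorter, more conceptual proof with no matrix bookkeeping. Both versions use the same two rigidity facts (the $S^3_i$ span $\BilinearMapCoDomain$, the $\Sigma^i_1\tensorproduct\Sigma^i_2$ are linearly independent) and give the same short justifications.

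One point that you (and, implicitly, the paper) lean on without flagging: constructing $M_\ell\in\BilinearMapCoDomain$ dual to $(S^3_\ell)_{\ell\in I}$, or equivalently the invertibility of the paper's matrix $D$, requires that the Frobenius pairing be nondegenerate when restricted to the subspace $\BilinearMapCoDomain\times\BilinearMapCoDomain$. This is not automatic for an arbitrary subspace over an arbitrary field, but holds in the application at hand; since the paper makes the same silent assumption, this is not a gap specific to your argument, just a shared implicit hypothesis worth being aware of.
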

\begin{proof}
Let us assume that there exists a tensor~$\tensor{C}$ encoding the considered bilinear map, that its tensor rank~$\rho$ is greater than~${\dim{\BilinearMapCoDomain}+q}$ and that it is defined by the sum~${{\tensor{Q}^{i}}\tensorproduct{R_{i}}}$. Remark that the set~$(R_{i})_{{1}\leq{i}\leq{\rho}}$ is a generating set of the space~$\BilinearMapCoDomain$ by hypothesis.
\par
Suppose now that there exists a tensor~$\tensor{T}$ of rank~${{r}={\dim{\BilinearMapCoDomain}+q}}$ encoding with fewer summands the same considered bilinear map:
\begin{equation}
\tensor{T} = {\tensor{P}^{k}}\tensorproduct{S_{k}},
\quad
{S_{k}}\in{\BilinearMapCoDomain}\subset{\Matrices{\F}{\matrixsize{m}{p}}},
\quad
{\tensor{P}^{k}}\in{{\Dual{\Matrices{\F}{\matrixsize{m}{n}}}} \tensorproduct {\Dual{\Matrices{\F}{\matrixsize{n}{p}}}}}.
\end{equation}
The elements~$\tensor{P}^{k}$ are linearly independent, otherwise~$\tensor{T}$ could be expressed with even fewer terms.
Furthermore, there is a subset of~${(S_{i})_{{1}\leq{i}\leq{\dim{\BilinearMapCoDomain}}}}$ that is a base of~${\BilinearMapCoDomain}$ (otherwise,~$\tensor{T}$ could not encode the same bilinear map as~$\tensor{C}$).
\par
Suppose that we reorder this set so that the base is~${(S_{i+q})_{{1}\leq{i}\leq{\dim{\BilinearMapCoDomain}}}}$.
By invariance of the flattening map, the following relations hold:
\begin{equation}
\forall\ {\mat{M}}\in{\BilinearMapCoDomain},\quad
\Contraction{\tensor{C}}{1}{3}(\mat{M})=\langle \mat{M}, R_{k}\rangle \tensor{Q}^{k} =
\Contraction{\tensor{T}}{1}{3}(\mat{M})=\langle \mat{M}, S_{k}\rangle \tensor{P}^{k}.
\end{equation}
By introducing a base~$(B^{i})_{{1}\leq{i}\leq{\dim{\BilinearMapCoDomain}}}$ of~$\BilinearMapCoDomain$, we could summarize this situation under a matricial standpoint as follows:
\begin{equation}\label{eq:CanonicalWRTTensor}
\begin{mmatrix}
\tensor{P}^{1} \\
\vdots \\
\tensor{P}^{q} \\
\langle B^{1},R_{k}\rangle \tensor{Q}^{k} \\
\vdots \\
\langle B^{\dim{\BilinearMapCoDomain}},R_{k}\rangle \tensor{Q}^{k}
\end{mmatrix}
=
\left(
\begin{array}{cc}
 \IdentityMatrixOfSize{\matrixsize{q}{q}} & 0 \\
C & D
\end{array}
\right)
\left(
\begin{array}{c}
\tensor{P}^{1} \\
\vdots \\
\tensor{P}^{q} \\
\tensor{P}^{q+1} \\
\vdots \\
\tensor{P}^{q+\dim{\BilinearMapCoDomain}}
\end{array}
\right)
\end{equation}
where the matrices~$C$ and~$D$ are such that:
\begin{equation}
\forall\ i\in\{1,\ldots,\dim{\BilinearMapCoDomain}\},%
\begin{array}{ll}
C_{ij}=\langle B^i,S_j \rangle,& \forall\ j\in\{1,\ldots,q\},\\
D_{ij}=\langle B^i,S_{q+j} \rangle,& \forall\ j\in\{1,\ldots,\dim{\BilinearMapCoDomain}\}.
\end{array}
\end{equation}
As, by hypothesis,~$(S_{i+q})_{{1}\leq{i}\leq{\dim{\BilinearMapCoDomain}}}$ is a basis of~${\BilinearMapCoDomain}$, the matrix~$D$ is invertible and we could rewrite~\cref{eq:CanonicalWRTTensor} as follows:
\begin{equation}\label{eq:InducedWRTTensor}
\begin{array}{l}
U=-\MatrixProduct{D^{-1}}{C},\\  V=D^{-1},
\end{array}
\!\!\left(
\begin{array}{cc}
 \IdentityMatrixOfSize{\matrixsize{q}{q}} & 0 \\
 U & V
\end{array}
\right)\!\!
\left(\!\!
\begin{array}{c}
\tensor{P}^{1} \\
\vdots \\
\tensor{P}^{q} \\
\langle B^{1},R_{k}\rangle \tensor{Q}^{k} \\
\vdots \\
\langle B^{\dim{\BilinearMapCoDomain}},R_{k}\rangle \tensor{Q}^{k}
\end{array}
\!\!\right)
\!=\!
\left(\!\!
\begin{array}{c}
\tensor{P}^{1} \\
\vdots \\
\tensor{P}^{q} \\
\tensor{P}^{q+1} \\
\vdots \\
\tensor{P}^{q+\dim{\BilinearMapCoDomain}}
\end{array}
\!\!\right)\!.
\end{equation}
The~$\dim{\BilinearMapCoDomain}$ lines of the ${(U,V)}$ matrices
in~\Cref{eq:InducedWRTTensor} give
us $\dim{\BilinearMapCoDomain}$ vectors~$(u^{j}_{1},\dots,u^{j}_{q},v^{j}_{1},\ldots,v^{j}_{\dim{\BilinearMapCoDomain}})$ such that for all~$j$ in~${\{1,\ldots,{\dim{\BilinearMapCoDomain}}\}}$
\begin{multline}
\textup{Rank}_{{\Dual{\Matrices{\F}{\matrixsize{m}{n}}}} \tensorproduct {\Dual{\Matrices{\F}{\matrixsize{n}{p}}}}}\left(
u^{j}_{i}\tensor{P}^{i} + {v^{j}_{i}}\langle B^{i},R_{k}\rangle\tensor{Q}^{k}
\right)
=\textup{Rank}_{{\Dual{\Matrices{\F}{\matrixsize{m}{n}}}} \tensorproduct {\Dual{\Matrices{\F}{\matrixsize{n}{p}}}}}
\tensor{P}^{q+j}\\
= 1.
\end{multline}
To conclude, we remark that these last relations show that all the matrices~${v^{j}_{i}B^{i}}$ are in the subspace~${\BilinearMapCoDomainSubSpace(\Image{\Contraction{\tensor{T}}{1}{3}},\tensor{P}^{1},\ldots,\tensor{P}^{q})}$.
As they are~$\dim{\BilinearMapCoDomain}$ independent linear combinations of basis elements of~$\BilinearMapCoDomain$, these matrices form another of its bases and thus the subspace~${\BilinearMapCoDomainSubSpace(\Image{\Contraction{\tensor{T}}{1}{3}},\tensor{P}^{1},\ldots,\tensor{P}^{q})}$ is equal to~${\BilinearMapCoDomain}$.
\end{proof}
\subsection{Adaptation to the Hermitian case}
In order to use \cref{prop:deGrooteMainTrick} to prove~\cref{thm:no4product}, we have to show that for any element~${\tensor{P}={\mat{P}}\tensorproduct{\mat{Q}}}$ in~${{\Dual{\Matrices{\F}{\matrixsize{m}{n}}}} \tensorproduct {\Dual{\Matrices{\F}{\matrixsize{n}{p}}}}}$ the subspace~${\BilinearMapCoDomainSubSpace(\Image{\Contraction{\tensor{T}}{1}{3}},\tensor{P},\HermitianTranspose{\tensor{P}})}$ is not equal to~${\BilinearMapCoDomain}$ (with~${\HermitianTranspose{\tensor{P}}={\HermitianTranspose{Q}}\tensorproduct{\HermitianTranspose{P}}}$).
This vector-space~$\BilinearMapCoDomain$ is a~$3$ dimensional vector-space spanned by all the outputs of our bilinear map.
Let us start to make this strategy more precise by the following remark.
\begin{remark}
A classical block version of bilinear algorithm (e.g.~\cite[\S~6.3.1]{jgd:2008:toms}) computing the product of a matrix by its adjoint is:
\begin{equation}
\begin{smatrix}
\mat{A_{11}}&\mat{A_{12}} \\
\mat{A_{21}}&\mat{A_{22}}
\end{smatrix}
\begin{smatrix}
\HermitianTranspose{A_{11}}&\HermitianTranspose{A_{21}} \\
\HermitianTranspose{A_{12}}&\HermitianTranspose{A_{22}}
\end{smatrix}
=
\begin{smatrix}
\mat{A_{11}}\HermitianTranspose{A_{11}}+ \mat{A_{12}}\HermitianTranspose{A_{12}} & \times\\
\mat{A_{21}}\HermitianTranspose{A_{11}}+ \mat{A_{22}}\HermitianTranspose{A_{12}} &
\mat{A_{21}}\HermitianTranspose{A_{21}}+ \mat{A_{22}}\HermitianTranspose{A_{22}} \\
\end{smatrix}\!.
\end{equation}
As the result of this algorithm is self-adjoint, by~\cref{lem:uplow} there is no need to compute the top-right coefficient and thus, we conclude that the dimension of~$\BilinearMapCoDomain$ is at most~$3$.
Hence, there exists a bilinear map encoded by a tensor~${\tensor{H}={\Sigma^{i}_{1}}\tensorproduct{\Sigma^{i}_{2}}\tensorproduct{S^{3}_{i}}}$ of rank~$6$ that computes the product of a matrix by its hermitian transpose and the image of its third flattening~$\Contraction{\tensor{H}}{1}{3}$ is
\begin{equation}
\label{eq:ThirdFlatteningH}
\Image{\Contraction{\tensor{H}}{1}{3}}(\Matrices{\F}{\matrixsize{m}{p}})
=
\begin{smatrix}
c_{1} & 0     & 0     & 0 \\
0     & 0     & c_{1} & 0 \\
c_{2} & c_{3} & 0     & 0 \\
0     & 0     & c_{2} & c_{3}
\end{smatrix}\!.
\end{equation}
\end{remark}
We need a last standard definition in order to classify all possible tensors~$\tensor{P}$ considered in the sequel.
\begin{definition}\label{def:type}
Given a tensor~$\tensor{P}$ decomposable as sum of rank-one tensors:
\begin{equation}
\label{eq:5}
\tensor{P}=\sum_{i=1}^{q} \tensorproduct_{j=1}^{s} \mat{P_{ij}}\ \textrm{where}\ \mat{P_{ij}}\ \textrm{are matrices}.
\end{equation}
The list~${[{(\Rank{\mat{P_{ij}}})}_{j={{1}\ldots{s}}}]}_{i=1\ldots q}$ is called the \emph{type} of tensor~$\tensor{P}$.
\end{definition}
\begin{remark}\label{rem:ProductsClassification}
In our situation~$q$ is one,~$s$ is two and the~$\mat{P_{ij}}$ are~$\matrixsize{2}{2}$ matrices; hence, the tensor~$\tensor{P}$ could only have type~$[(1,1)]$, $[(1,2)]$, $[(2,1)]$ or~$[(2,2)]$.
\end{remark}
We also use the isotropies presented in~\cref{def:sandwiching} in order to simplify as much as possible the tensor~$\tensor{P}$ as illustrated in the proof of the following statement.
Furthermore, let us first introduce several notations:
\begin{itemize}
\item we denote by~${\Subs{x=0}{\mat{G}}}$ the matrix~$\mat{G}$ in which the indeterminate~$x$ is replaced by~$0$;
\item the determinant of the matrix~${\begin{smatrix} G_{i,j} & G_{i,l}\\ G_{k,j} & G_{k,l} \end{smatrix}}$ is denoted by~$\Minor{i,j}{k,l}{\mat{G}}$;
\item as the rank of a matrix is invariant under elementary row and
  columns operations, we also use the
  notation~${\AddRow{i}{j}{\ell}{\mat{G}}}$ for the matrix resulting
  from the addition to the~$i$th line of~$\mat{G}$ of its~$j$th line
  multiplied by~$\ell$.
\end{itemize}
\begin{lemma}\label{fct:1x}
There is no tensor~$\tensor{P}$ in~${{\Dual{\Matrices{\F}{\matrixsize{m}{n}}}} \tensorproduct {\Dual{\Matrices{\F}{\matrixsize{n}{p}}}}}$ of type~${(1,i)}$ with~$i$ equals to~$1$ or~$2$ such that the subspace~${\BilinearMapCoDomainSubSpace(\Contraction{\tensor{H}}{1}{3},\tensor{P},\HermitianTranspose{\tensor{P}})}$ is equal to~$\BilinearMapCoDomain$.
\end{lemma}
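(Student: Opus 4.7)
By the contrapositive of \cref{prop:deGrooteMainTrick} with $q=1$, it suffices to show that for every rank-one tensor $\tensor{P}=\mat{P}\tensorproduct\mat{Q}$ with $\Rank{\mat{P}}=1$ and $\Rank{\mat{Q}}\in\{1,2\}$, the subspace $\BilinearMapCoDomainSubSpace(\Image{\Contraction{\tensor{H}}{1}{3}},\tensor{P},\HermitianTranspose{\tensor{P}})$ is strictly contained in the three-dimensional codomain $\BilinearMapCoDomain$. Concretely, this amounts to showing that the set of matrices $\mat{M}\in\BilinearMapCoDomain$ parameterized by $(c_1,c_2,c_3)$ for which there exist scalars $u,v$ making
\[
\mat{G}(c_1,c_2,c_3,u,v)\;=\;\Contraction{\tensor{H}}{1}{3}(\mat{M})+u\,(\mat{P}\tensorproduct\mat{Q})+v\,(\HermitianTranspose{\mat{Q}}\tensorproduct\HermitianTranspose{\mat{P}})
\]
a rank-one $4{\times}4$ matrix projects, in the $(c_1,c_2,c_3)$ coordinates, into a proper linear subspace.

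The first step is normalization. Using the sandwiching isotropies from \cref{def:sandwiching} that preserve the Hermitian structure of the problem (i.e.\ those compatible with the involutive antihomomorphism, which act on $\Image{\Contraction{\tensor{H}}{1}{3}}$ only by an invertible change of coordinates of $\BilinearMapCoDomain$), we may put the rank-one matrix $\mat{P}$ into canonical form $\mat{P}=\begin{smatrix}1&0\\0&0\end{smatrix}$. This forces $\HermitianTranspose{\mat{P}}=\mat{P}$ and leaves $\mat{Q}$ as a residual parameter, which is itself only defined up to the remaining isotropy freedom.

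The second step is to write $\mat{G}$ explicitly as a $4{\times}4$ matrix using the pattern~(\ref{eq:ThirdFlatteningH}) for the flattening and the block form of $\mat{P}\tensorproduct\mat{Q}$ and $\HermitianTranspose{\mat{Q}}\tensorproduct\HermitianTranspose{\mat{P}}$; with the normalization of Step~1, the two perturbation terms have disjoint and very specific supports in the $4{\times}4$ layout. Imposing rank one then means that every $2{\times}2$ minor $\Minor{i,j}{k,l}{\mat{G}}$ must vanish. Among these, we single out the minors whose rows or columns avoid the support of the perturbation, which yield polynomial relations in the $c_i$'s alone. Using the substitution $\Subs{u=0,v=0}{\mat{G}}$ on judiciously chosen $2{\times}2$ submatrices, preceded when needed by row operations $\AddRow{i}{j}{\ell}{\mat{G}}$ to clear the perturbation from one row, we extract at least one non-trivial linear relation among $(c_1,c_2,c_3)$.

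The third step is the case split on $\Rank{\mat{Q}}$. In type $(1,1)$, the residual isotropy further reduces $\mat{Q}$ to one of the four elementary matrices $E_{ij}$; each sub-case is handled by the same minor-and-substitution machinery and yields at least one linear constraint on $(c_1,c_2,c_3)$ independent of $u,v$. In type $(1,2)$, $\mat{Q}$ is invertible but less restrictable; here one uses row operations to clear the $u\mat{Q}$ contribution from two rows of $\mat{G}$, then examines a $2{\times}2$ minor in the complementary rows, which forces a linear relation on $(c_1,c_2,c_3)$. In every case the image of the rank-one locus in $\BilinearMapCoDomain$ is contained in a proper hyperplane, giving $\BilinearMapCoDomainSubSpace(\Contraction{\tensor{H}}{1}{3},\tensor{P},\HermitianTranspose{\tensor{P}})\neq\BilinearMapCoDomain$.

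The main obstacle is the careful bookkeeping that keeps the argument from collapsing into brute case enumeration: one must verify that the normalizing isotropy in Step~1 genuinely respects the Hermitian structure (so that $\HermitianTranspose{\mat{P}}$ in the transformed coordinates is still $\HermitianTranspose{\mat{P}}$ and the flattening image is only relabeled), and one must systematically find, in each sub-case of Step~3, a $2{\times}2$ minor of $\mat{G}$ whose vanishing forces a non-trivial relation on the $c_i$'s that is linear, $u,v$-independent, and robust to the remaining freedom in $\mat{Q}$.
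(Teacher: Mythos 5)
Your plan shares the high-level strategy with the paper (use \cref{prop:deGrooteMainTrick}, normalize with a sandwiching isotropy, analyze $2{\times}2$ minors), but the normalization step as you describe it does not actually work, and the omission is load-bearing.

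You propose to put $\mat{P}$ into the canonical form $\begin{smatrix}1&0\\0&0\end{smatrix}$ using Hermitian-compatible isotropies. But the isotropies that preserve the Hermitian structure of the problem without disturbing the third slot (and hence the flattening image~(\ref{eq:ThirdFlatteningH}) and the codomain $\BilinearMapCoDomain$) are precisely those of the form $(\IdentityMatrixOfSize{m}\times\SymSecondTriad\times\IdentityMatrixOfSize{n})$. Such an isotropy acts on the first slot only by \emph{right} multiplication, $\mat{A}\mapsto\mat{A}\Transpose{\SymSecondTriad}$, which cannot change the column span of a rank-one matrix. Consequently you can at most zero out the second column of $\mat{A}$, leaving a residual $\begin{smatrix}u_1&0\\u_2&0\end{smatrix}$ with $(u_1,u_2)$ still free — this is exactly what the paper does in~(\ref{eq:type1x}). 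Forcing $\mat{P}=\begin{smatrix}1&0\\0&0\end{smatrix}$ would require a nontrivial $\SymFirstTriad$, which changes $S^3_i$ in~(\ref{eq:ThirdFlattening}) and thus both the flattening image and $\BilinearMapCoDomain$; your acknowledgment that ``the flattening image is only relabeled'' is precisely the claim that needs proof and fails here.

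Moreover, even after zeroing the second column, the normalizing $\SymSecondTriad$ has the unitary-like property $\Inverse{\SymSecondTriad}\propto\HermitianTranspose{\SymSecondTriad}$ only when $s=a_1\Conjugate{a_1}+a_2\Conjugate{a_2}\neq 0$. When $s=0$ (which can happen over $\CC$ in the Hermitian setting), the paper must switch to a different $\SymSecondTriad$ and then can no longer express $\IsotropyAction{\Isotropy{g}}{\HermitianTranspose{\tensor{P}}}$ as a scalar multiple of $\HermitianTranspose{(\IsotropyAction{\Isotropy{g}}{\tensor{P}})}$, so it must carry it as an independent rank-one tensor $\mat{U}\tensorproduct\mat{V}$ as in~(\ref{eq:orthog}) and do a considerably more involved case analysis of minors of~(\ref{eq:szero}). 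Your plan has no counterpart to this $s=0$ branch, which is the bulk of the paper's proof. Finally, your case split on $\Rank{\mat{Q}}$ into types $(1,1)$ and $(1,2)$ is not the distinction the paper uses; those two types are handled uniformly, and the genuine split is on $s$ and on which entries of $\mat{U},\mat{V}$ vanish.
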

\begin{proof}
Let us consider a tensor~${\tensor{P}={\mat{A}}\tensorproduct{\mat{B}}}$ of type~$(1,i)$ with~${i=1,2}$.
As the first component~$\mat{A}$ is of rank one, there exists two vectors such that:
\begin{equation}
\mat{A}=
{\begin{mmatrix} a_{1} & a_{2}\end{mmatrix}}
\tensorproduct
{\begin{mmatrix} b_{1} \\ b_{2}\end{mmatrix}}
=
\begin{mmatrix}
a_{1}b_{1} & a_{2}b_{1} \\
a_{1}b_{2} & a_{2}b_{2}
\end{mmatrix}\!.
\end{equation}
If the coefficient~$a_{2}$ is zero, we choose a matrix~${\SymSecondTriad}$ as the identity matrix.
If the coefficient~$a_{1}$ is zero, we could consider a permutation
matrix~${\SymSecondTriad=\begin{smatrix}0&1\\1&0\end{smatrix}}$;
otherwise, if
$s=a_{1}\Conjugate{a_{1}}+a_{2}\Conjugate{a_{2}}\neq{0}$, consider:
\begin{equation}
\SymSecondTriad
=
\begin{mmatrix}
\Conjugate{a_{1}} & a_{2} \\
\Conjugate{a_{2}} &-a_{1}
\end{mmatrix}.
\end{equation}
Then we have both 
$\MatrixProduct{\SymSecondTriad}{\HermitianTranspose{\SymSecondTriad}}=s\cdot{\IdentityMatrixOfSize{\matrixsize{2}{2}}}$
and $\MatrixProduct{A}{\SymSecondTriad}=\begin{mmatrix} sb_{1} & 0 \\ sb_{2} & 0 \end{mmatrix}$.

Hence, in any of these cases, there always exists a matrix~$\SymSecondTriad$, which inverse is a multiple of
its hermitian transpose, such that the isotropy~$\Isotropy{g}$ defined by~${{({\IdentityMatrixOfSize{\matrixsize{m}{m}}}\times{\SymSecondTriad}\times{\IdentityMatrixOfSize{\matrixsize{n}{n}}})}}$ satisfies the following properties:
\begin{equation}\label{eq:type1x}
\IsotropyAction{\Isotropy{g}}
{\tensor{P}}=
{\begin{mmatrix} u_{1} & 0 \\ u_{2} & 0 \end{mmatrix}}
\tensorproduct
{\begin{mmatrix} v_{1} & v_{2} \\ v_{3} & v_{4} \end{mmatrix}},\quad
\IsotropyAction{\Isotropy{g}}{\HermitianTranspose{\tensor{P}}}=\frac{1}{s}\HermitianTranspose{(\IsotropyAction{\Isotropy{g}}{\tensor{P}})}
\end{equation}
With the above notations, conventions and isotropy's action, given any~$\mat{M}$ in~$\BilinearMapCoDomain$, the~$\matrixsize{4}{4}$ matrix~${\Contraction{\tensor{H}}{1}{3}(\mat{M})+y_{1}\tensor{P}+y_{2}\HermitianTranspose{\tensor{P}}}$ is:
\begin{equation}\small
\begin{mmatrix}
{y_{1}}{u_{1}}{v_{1}}+{y_{2}}{\HermitianTranspose{v_{1}}}{\HermitianTranspose{u_{1}}}+{c_{1}}&{y_{1}}{u_{1}}{v_{2}}+{y_{2}}{\HermitianTranspose{v_{1}}}{\HermitianTranspose{u_{2}}}&{y_{1}}{u_{1}}{v_{3}}&{y_{1}}{u_{1}}{v_{4}}\\
{y_{2}}{\HermitianTranspose{v_{3}}}{\HermitianTranspose{u_{1}}}&{y_{2}}{\HermitianTranspose{v_{3}}}{\HermitianTranspose{u_{2}}}&{c_{1}}&0\\
{y_{1}}{u_{2}}{v_{1}}+{y_{2}}{\HermitianTranspose{v_{2}}}{\HermitianTranspose{u_{1}}}+{c_{2}}&{y_{1}}{u_{2}}{v_{2}}+{y_{2}}{\HermitianTranspose{v_{2}}}{\HermitianTranspose{u_{2}}}+{c_{3}}&{y_{1}}{u_{2}}{v_{3}}&{y_{1}}{u_{2}}{v_{4}}\\
{y_{2}}{\HermitianTranspose{v_{4}}}{\HermitianTranspose{u_{1}}}&{y_{2}}{\HermitianTranspose{v_{4}}}{\HermitianTranspose{u_{2}}}&{c_{2}}&{c_{3}}
\end{mmatrix}\!.
\end{equation}
This matrix is supposed to be of rank one.
Thus, all its~$\matrixsize{2}{2}$ minors are equal to~$0$. Then,
either~$c_{1}$ or~$c_{3}$ is equal to~$0$ and for any
such~$\tensor{P}$ the
subspace~${\BilinearMapCoDomainSubSpace(\Contraction{\tensor{H}}{1}{3},\tensor{P},\HermitianTranspose{\tensor{P}})}$
is thus not~$\BilinearMapCoDomain$.
\par
There remains the case~${s=0}$.
Then let 
\begin{equation}
  \SymSecondTriad = \begin{mmatrix} {a_{1}}^{-1} & -a_{2} \\ 0 & a_{1} \end{mmatrix}.  
\end{equation}
The inverse of~$\SymSecondTriad$ is no longer related to~$\HermitianTranspose{\SymSecondTriad}$  anymore but~$\SymSecondTriad$ still transforms~$\mat{A}$ into a single non-zero column matrix:
$\MatrixProduct{A}{\SymSecondTriad}=\begin{mmatrix} b_{1} & 0 \\
  b_{2} & 0 \end{mmatrix}$.
Thus, for this~$\SymSecondTriad$, the action of the isotropy~${\Isotropy{g}=(\IdentityMatrixOfSize{\matrixsize{m}{m}}\times{\SymSecondTriad}\times{\IdentityMatrixOfSize{\matrixsize{n}{n}}})}$ is:
\begin{equation}\label{eq:orthog}
\IsotropyAction{\Isotropy{g}}{\tensor{P}}=
{\begin{mmatrix} b_{1} & 0 \\ b_{2} & 0 \end{mmatrix}}
\tensorproduct
{\begin{mmatrix} z_{11} & z_{12} \\ z_{21} & z_{22} \end{mmatrix}}\quad\textrm{and}\quad
\IsotropyAction{\Isotropy{g}}{\HermitianTranspose{\tensor{P}}}={\mat{U}}\tensorproduct{\mat{V}}.
\end{equation}
With the above notations, conventions and isotropy's action, given
any~$\mat{M}$ in~$\BilinearMapCoDomain$, the~$\matrixsize{4}{4}$
matrix~$\mat{G}={\Contraction{\tensor{H}}{1}{3}(\mat{M})+y_{1}\tensor{P}+y_{2}\HermitianTranspose{\tensor{P}}}$
is thus:
\begin{equation}\label{eq:szero} 
\begin{smatrix}
{y_{1}}{b_{1}}{z_{11}}+{y_{2}}{u_{11}}{v_{11}}+{c_1}&{y_{1}}{b_{1}}{z_{12}}+{y_{2}}{u_{11}}{v_{12}}&{y_{1}}{b_{1}}{z_{21}}+{y_{2}}{u_{11}}{v_{21} }&{y_{1}}{b_{1}}{z_{22}}+{y_{2}}{u_{11}}{v_{22}}\\ {y_{2}}{u_{12}}{v_{11}}&{y_{2}}{u_{12}}{v_{12}}&{y_{2}}{u_{12}}{v_{21}}+{c_1}&{y_{2}}{u_{12}}{v_{22}}\\ {y_{1}}{b_{2}}{z_{11}}+{y_{2}}{u_{21}}{v_{11}}+{c_2}&{y_{1}}{b_{2}}{z_{12}}+{y_{2}}{u_{21}}{v_{12}}+{c_3}&{y_{1}}{b_{2}}{z_{21}}+{y_{2}}{u_{21}}{v_{21}}&{y_{1}}{b_{2}}{z_{22}}+{y_{2}}{u_{21}}{v_{22}}\\ {y_{2}}{u_{22}} {v_{11}}&{y_{2}}{u_{22}}{v_{12}}&{y_{2}}{u_{22}}{v_{21}}+{c_2}&{y_{2}}{u_{22}} {v_{22}}+{c_3} 
\end{smatrix}\!.
\end{equation} 
This matrix is supposed to be of rank one in~${\BilinearMapCoDomainSubSpace(\Contraction{\tensor{H}}{1}{3},\tensor{P},\HermitianTranspose{\tensor{P}})}$.
Thus, all its~$\matrixsize{2}{2}$ minors are equal to~$0$.
\par
On the one hand, if~$y_{2}$ is zero, then the constraints of \Cref{eq:szero} show that~$\Minor{2,3}{4,4}{\Subs{y_{2}=0}{\mat{G}}}=c_{1}c_{3}$ is equal to~$0$.
On the other hand, if~$y_{2}$ is different from~$0$ then the minor~${\Minor{2,2}{4,4}{\mat{G}}}$ is equal to~$c_{3}y_{2}u_{12}v_{12}$ and supposed to be equal to zero by hypothesis. 
We also have that the minor~${\Minor{2,1}{4,4}{\mat{G}}}$ is equal to~$c_{3}y_{2}u_{12}v_{11}$ and is also supposed to be equal to zero by hypothesis. 
We are going to explore all the consequences induced by this constraint.
\begin{equation}\label{eq:startcase}
\small\begin{array}{llll}
u_{12}\neq 0, v_{12}\neq 0 & \multicolumn{3}{l}{\rightarrow c_{3}=0,} \\
u_{12}=0, v_{12}=0 & \multicolumn{3}{l}{\rightarrow \Minor{2,1}{4,3}{\Subs{u_{12}=0,v_{12}=0}{\mat{G}}} = y_{2}u_{22}v_{11}c_{1}=0,}\\
u_{12}=0, v_{12}=0, & u_{22}=0 & \multicolumn{2}{l}{\rightarrow \Minor{2,3}{4,4}{\Subs{u_{12}=0,v_{12}=0,u_{22}=0}{\mat{G}}} = c_{1}c_{3}=0,}\\
u_{12}=0, v_{12}=0, & v_{11}=0 & \text{see thereafter}\\
u_{12}=0, v_{12}\neq 0&\multicolumn{3}{l}{\rightarrow\Minor{2,2}{4,3}{\Subs{u_{12}=0}{\mat{G}}}=-y_{2} v_{12} u_{22} c_1 =0,}\\
u_{12}=0, v_{12}\neq 0,&u_{22}\neq 0& \multicolumn{2}{l}{\rightarrow c_1=0,}\\
u_{12}=0, v_{12}\neq 0,&u_{22}=0 & \multicolumn{2}{l}{\rightarrow \Minor{2,3}{4,4}{\Subs{u_{12}=0,u_{22}=0}{\mat{G}}}=c_1c_3=0.}
\end{array}
\end{equation}
Now, if~$u_{12}$ is different from~${0}$, then from the first two minors, either The relations~${v_{12}=v_{11}=0}$ hold or~${c_3=0}$.
\par
Further, not both $b_1$ and $b_2$ can be zero, otherwise the tensor is of
rank $0$. W.l.o.g., suppose that $b_2\neq{0}$
and let
$\mat{G}'=\AddRow{1}{3}{-b_1/b_2}{\Subs{v_{12}=0,v_{11}=0}{\mat{G}}}$.
Then $\Minor{1,2}{2,4}{\mat{G}'}=(-b_1/b_2)y_2c_3u_{12}v_{22}$ and
either $b_{1}=0$ or $v_{22}=0$ or $c_3=0$.
This gives the following distinctions (recall that now $u_{12}\neq{0}$ and $y_{2}\neq{0}$):
\begin{equation}
\begin{array}{llll}
b_{1}=0, & \multicolumn{3}{l}{\rightarrow\Minor{1,1}{2,4}{\Subs{b_{1}=0}{\mat{G}'}}=y_{2}u_{12}v_{22} c_1 =0,}\\
b_{1}=0, & v_{22}\neq{0} & \multicolumn{2}{l}{\rightarrow c_1 =0,}\\
b_{1}=0, & v_{22}=0 & \multicolumn{2}{l}{\rightarrow\Minor{1,1}{4,4}{\Subs{}{\mat{G}'}}=c_1c_3=0,}\\
b_{1}\neq{0},& v_{22}=0 & \multicolumn{2}{l}{\rightarrow\Minor{1,2}{4,4}{\Subs{}{\mat{G}'}}=(-b_{1}/b_{2}){c_3}^2=0,}\\
\end{array}
\end{equation}
There remains the case $u_{12}=0$, $v_{12}=0$ and  $v_{11}=0$ in \Cref{eq:startcase}.
Here also, w.l.o.g., suppose that $b_2\neq{0}$ and let
$\mat{G}^*=\AddRow{1}{3}{-b_1/b_2}{\Subs{u_{12}=0,v_{12}=0,v_{11}=0}{\mat{G}}}$.
\begin{equation}
\begin{array}{llll}
b_1=0 & \multicolumn{3}{l}{\rightarrow\Minor{1,1}{2,3}{\Subs{b_{1}=0}{\mat{G}^*}}={c_1}^2=0,}\\
b_1\neq{0} & \multicolumn{3}{l}{\rightarrow
  \Minor{1,2}{2,3}{\Subs{}{\mat{G}^*}}=(-b_{1}/b_{2})c_1c_3=0.}\\
\end{array}
\end{equation}
Thus, in any cases, for any such~$\tensor{P}$, the set~${\BilinearMapCoDomainSubSpace(\Contraction{\tensor{H}}{1}{3},\tensor{P},\HermitianTranspose{\tensor{P}})}$ is not~$\BilinearMapCoDomain$.
\par
Note that a computational way to see this, is to perform a Gr\"obner basis computation,
directly from \Cref{eq:szero}: for instance over $\CC$ this gives that
the relation ${c_1}^2{c_3}^2=0$ must hold and that the set is not the
full codomain.
\end{proof}
According to~\cref{rem:ProductsClassification}, the above computations deal with half of the cases to consider.
We remark that, mutatis mutandis, similar computations exclude also the existence of an algorithm where~$\tensor{P}$ is of type~$(2,1)$.
We could consider now the last case.
\begin{lemma}\label{fct:2x}
There is no tensor~$\tensor{P}$ in~${{\Dual{\Matrices{\F}{\matrixsize{m}{n}}}} \tensorproduct {\Dual{\Matrices{\F}{\matrixsize{n}{p}}}}}$ of type~${(2,2)}$ such that the subspace~${\BilinearMapCoDomainSubSpace(\Contraction{\tensor{H}}{1}{3},\tensor{P},\HermitianTranspose{\tensor{P}})}$ is equal to~$\BilinearMapCoDomain$.
\end{lemma}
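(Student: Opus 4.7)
The strategy mirrors the proof of \cref{fct:1x}. Write $\tensor{P}={\mat{A}}\tensorproduct{\mat{B}}$ where now both $\mat{A}$ and $\mat{B}$ are invertible $2\times 2$ matrices. The first step is to act by an isotropy $\Isotropy{g}=(\SymFirstTriad\times\SymSecondTriad\times\SymThirdTriad)$ from \cref{def:sandwiching} so as to normalize $\tensor{P}$ while keeping a tractable relationship between $\IsotropyAction{\Isotropy{g}}{\tensor{P}}$ and $\IsotropyAction{\Isotropy{g}}{\HermitianTranspose{\tensor{P}}}$. Since $\mat{A}$ is invertible, one can choose $\SymSecondTriad$ (so that $\Inverse{\SymSecondTriad}$ is a scalar multiple of $\HermitianTranspose{\SymSecondTriad}$, exactly as in \cref{eq:type1x}) to reduce the first factor of $\tensor{P}$ to a canonical form such as the identity matrix; a similar choice of $\SymThirdTriad$ can then be used to reduce $\mat{B}$. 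As in the $s=0$ branch of \cref{fct:1x}, one must separately address the degenerate sub-cases where the required norm $\MatrixProduct{\SymSecondTriad}{\HermitianTranspose{\SymSecondTriad}}$ or its analogue for $\SymThirdTriad$ vanishes; there one falls back on a non-unitary normalization that only transforms $\mat{A}$ (or $\mat{B}$) into a convenient column/row-supported shape and treats the image of $\HermitianTranspose{\tensor{P}}$ as an unconstrained rank-one tensor $\mat{U}\tensorproduct\mat{V}$.

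After normalization, I would form the $4\times 4$ matrix
\[
\mat{G}=\Contraction{\tensor{H}}{1}{3}(\mat{M})+y_{1}\tensor{P}+y_{2}\HermitianTranspose{\tensor{P}}
\]
parameterized by $c_{1},c_{2},c_{3}$ (the entries of $\mat{M}\in\BilinearMapCoDomain$ captured by \cref{eq:ThirdFlatteningH}), by the scalars $y_{1},y_{2}$, and by the residual free parameters of the canonical form. The hypothesis that ${\BilinearMapCoDomainSubSpace(\Contraction{\tensor{H}}{1}{3},\tensor{P},\HermitianTranspose{\tensor{P}})=\BilinearMapCoDomain}$ asserts that for every choice of $\mat{M}$ in a basis of $\BilinearMapCoDomain$ there exist $y_{1},y_{2}$ making $\mat{G}$ of rank at most one, equivalently that all of its $2\times 2$ minors vanish. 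Following the pattern of \cref{fct:1x}, I would extract a sequence of these minors, split on whether a few pivotal entries are zero, and derive at each branch an identity of the form $c_{i}c_{j}=0$ or $c_{i}^{2}=0$ that contradicts the freedom of $\mat{M}$.

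The main obstacle is the combinatorial case analysis. Since both factors of $\tensor{P}$ are invertible, the normalized tensor still carries several free parameters in $\mat{B}$ and in the residual $\SymFirstTriad$-action, so each sign/vanishing sub-case must be tracked carefully until some $c_{i}$ is forced to be identically zero. As already noted at the end of the proof of \cref{fct:1x}, a Gr\"obner basis computation of the ideal generated by the $2\times 2$ minors of $\mat{G}$, with $y_{1},y_{2}$ and the residual parameters of $\tensor{P}$ eliminated, provides a uniform way to certify a polynomial identity of the form $c_{i}^{2}c_{j}^{2}=0$ over the base field, thereby bypassing the branching and closing the proof in the generic case. The degenerate branches (analogous to $s=0$ in \cref{fct:1x}) are then handled by repeating the same minor analysis on the alternative normal form given above.
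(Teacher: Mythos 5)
Your high-level strategy—normalize $\tensor{P}$ by an isotropy, write down the $4\times4$ matrix $\mat{G}$, and force a relation of the form $c_i c_j=0$ by examining its $2\times 2$ minors (or a Gr\"obner basis of the ideal they generate)—is the same as the paper's. But there is a real gap in the normalization step. You claim that since $\mat{A}$ is invertible one can choose $\SymSecondTriad$ with $\Inverse{\SymSecondTriad}$ proportional to $\HermitianTranspose{\SymSecondTriad}$ (as in \cref{eq:type1x}) and still send $\mat{A}$ to the identity. That is not possible in general: normalizing a rank-$2$ matrix $\mat{A}$ to $\IdentityMatrix$ by a right factor forces $\Transpose{\SymSecondTriad}=\Inverse{\mat{A}}$, and for an arbitrary invertible $\mat{A}$ this $\SymSecondTriad$ need not be (scalar times) unitary. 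The unitary trick of \cref{fct:1x} works precisely because a rank-$1$ first factor can be rotated onto a coordinate column; a rank-$2$ first factor has no such orbit representative under the unitary group alone. So the route you present as the main branch does not exist in the type-$(2,2)$ setting, and what you describe as the $s=0$ ``degenerate fallback'' (normalize $\mat{A}$ to $\IdentityMatrix$ with a non-unitary $\SymSecondTriad$ and treat $\IsotropyAction{\Isotropy{g}}{\HermitianTranspose{\tensor{P}}}$ as an unconstrained rank-one tensor $\mat{U}\tensorproduct\mat{V}$) is in fact the one and only viable route here, and is exactly what the paper does.

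A second, smaller concern is your proposal to additionally normalize $\mat{B}$ via $\SymThirdTriad$. The sandwiching action of \cref{def:sandwiching} couples $\SymThirdTriad$ to the third factor $S_3$, so a nontrivial $\SymThirdTriad$ would disturb the explicit form of $\Contraction{\tensor{H}}{1}{3}(\mat{M})$ recorded in \cref{eq:ThirdFlatteningH}, which your subsequent minor computations rely on. The paper sidesteps this by taking $\SymFirstTriad=\SymThirdTriad=\IdentityMatrix$ and leaving $\mat{B}$ as a generic matrix $(z_{ij})$; the extra free parameters in $\mat{B}$ do not obstruct the case analysis. If you do normalize $\mat{B}$, you have to re-derive the flattening matrix in the new coordinates. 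Once these two points are fixed—drop the unitary constraint on $\SymSecondTriad$ and leave $\SymThirdTriad$ trivial—your minor/Gr\"obner-basis argument matches the paper's proof, which indeed reports that over $\CC$ the ideal of $2\times2$ minors of $\mat{G}$ contains $c_1^2c_3$, $c_1c_2c_3$, and $c_1c_3^2$, and then gives a field-independent by-hand case analysis.
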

\begin{proof}
First, let us consider a tensor~$\tensor{P}$ of type~$(2,2)$.
Thus, there exists~$\SymSecondTriad$ such that the action of the isotropy~${\Isotropy{g}=(\IdentityMatrixOfSize{\matrixsize{m}{m}}\times{\SymSecondTriad}\times\IdentityMatrixOfSize{\matrixsize{n}{n}})}$ is:
\begin{equation}
\IsotropyAction{\Isotropy{g}}{\tensor{P}}=
{\begin{mmatrix} 1 & 0 \\ 0 & 1 \end{mmatrix}}
\tensorproduct
{\begin{mmatrix} z_{11} & z_{12} \\ z_{21} & z_{22} \end{mmatrix}}\quad\textrm{and}\quad
\IsotropyAction{\Isotropy{g}}{\HermitianTranspose{\tensor{P}}}={\mat{U}}\tensorproduct{\mat{V}}.
\end{equation}
With the above notations, conventions and isotropy's action, given any~$\mat{M}$ in~$\BilinearMapCoDomain$, the~$\matrixsize{4}{4}$ matrix~$\mat{G}={\Contraction{\tensor{H}}{1}{3}(\mat{M})+y_{1}\tensor{P}+y_{2}\HermitianTranspose{\tensor{P}}}$ is:
\begin{equation}\label{eq:type2x}
\begin{smatrix}
y_{2}{u_{11}}{v_{11}}+{c_{1}}+y_{1}{z_{11}}&y_{2}{u_{11}}{v_{12}}+y_{1}{z_{12}}&y_{2}{u_{11}}{v_{21}}+y_{1}{z_{21}}&y_{2}{u_{11}}{v_{22}}+y_{1}{z_{22}}\\ y_{2}{u_{12}}{v_{11}}&y_{2}{u_{12}}{v_{12}}&y_{2}{u_{12}}{v_{21}}+{c_{1}}&y_{2}{u_{12}}{v_{22}}\\ y_{2}{u_{21}}{v_{11}}+{c_{2}}&y_{2}{u_{21}}{v_{12}}+{c_{3}}&y_{2}{u_{21}}{v_{21}}&y_{2}{u_{21}}{v_{22}}\\ 
y_{2}{u_{22}}{v_{11}}+y_{1}{z_{11}}&y_{2}{u_{22}}{v_{12}}+y_{1}{z_{12}}&y_{2}{u_{22}}{v_{21}}+{c_{2}}+y_{1}{z_{21}}&y_{2}{u_{22}}{v_{22}}+{c_{3}}+y_{1}{z_{22}}
\end{smatrix}\!.
\end{equation}
This matrix is supposed to be of rank one in~${\BilinearMapCoDomainSubSpace(\Contraction{\tensor{H}}{1}{3},\tensor{P},\HermitianTranspose{\tensor{P}})}$.
Thus, all its~$\matrixsize{2}{2}$ minors are equal to~$0$.
\par
A Gr\"obner basis computation over $\CC$ shows in that case that the relations~${{c_1}^2 c_3=c_1 c_2 c_3=c_1 {c_3}^2=0}$ hold and this is sufficient to conclude.  
Nevertheless, we present a proof that does not require such computations and is valid for any field.
\par
On the one hand, if~$y_{2}=0$, then the constraints of \Cref{eq:type2x} show for instance that both~$\Minor{2,1}{3,3}{\Subs{y_{2}=0}{\mat{G}}}=c_{1}c_{2}$ and~$\Minor{2,2}{3,3}{\Subs{y_{2}=0}{\mat{G}}}=c_{1}c_{3}$ are equal to~$0$.
\par
On the other hand, if~${y_{2}\neq 0}$ then consider the minor~${\Minor{2,3}{3,4}{\mat{G}}}$, which is equal to~$c_{1}y_{2}u_{21}v_{22}$ and supposed to be equal to zero by hypothesis. We are going to explore all the consequences induced by this constraint.
First, if~${y_{1}=0}$, then we have:
\begin{equation}\small
\begin{array}{llll}
u_{21}\neq 0, v_{22}\neq 0 & \multicolumn{3}{l}{\rightarrow c_{1}=0,} \\
u_{21}=0, v_{22}=0 & \multicolumn{3}{l}{\rightarrow \Minor{3,2}{4,4}{\Subs{y_{1}=0,u_{21}=0,v_{22}=0}{\mat{G}}} = {c_{3}}^{2}=0,}\\
u_{21}=0, v_{22}\neq 0&\multicolumn{3}{l}{\rightarrow\Minor{2,2}{3,4}{\Subs{y_{1}=0,u_{21}=0}{\mat{G}}}=-y_{2} u_{12} v_{22} c_3 =0,}\\
u_{21}=0, v_{22}\neq 0,&u_{12}\neq 0& \multicolumn{2}{l}{\rightarrow c_3=0,}\\
u_{21}=0, v_{22}\neq 0,&u_{12}=0 & \multicolumn{2}{l}{\rightarrow \Minor{2,2}{3,3}{\Subs{y_{1}=0,u_{21}=0,u_{12}=0}{\mat{G}}}=-c_1 c_3=0,}\\
u_{21}\neq 0, v_{22}=0 &\multicolumn{3}{l}{\rightarrow \Minor{3,3}{4,4}{\Subs{y_{1}=0,v_{22}=0}{\mat{G}}}= c_3 u_{21} v_{21} y_{2}=0,}\\
u_{21}\neq 0, v_{22}=0, & v_{21}\neq 0 & \multicolumn{2}{l}{\rightarrow c_3=0,}\\
u_{21}\neq 0, v_{22}=0, & v_{21}=0 & \multicolumn{2}{l}{\rightarrow \Minor{2,3}{4,4}{\Subs{y_{1}=0,v_{22}=0,v_{21}=0}{\mat{G}}}= c_1c_3=0,}
\end{array}
\end{equation}
If~$y_{1}$ and~$y_{2}$ are both non-zero, then, the minor $\Minor{1,1}{2,2}{\AddRow{1}{4}{-1}{\mat{G}}}$ is equal to~$y_{2}c_{1}u_{12}v_{12}$ and supposed to be equal to zero by hypothesis.
We are going to explore all the consequences induced by this
constraint. Let $\mat{G}'=\AddRow{1}{4}{-1}{\mat{G}}$:
\begin{equation}\small
\begin{array}{llll}
u_{12}\neq 0 , v_{12}\neq 0 &\multicolumn{3}{l}{\rightarrow c_{1}=0,} \\
u_{12}=0     , v_{12}= 0    &\multicolumn{3}{l}{\rightarrow \Minor{2,2}{3,3}{\Subs{u_{12}=0,v_{12}=0}{\mat{G}'}}= -c_{1}c_{3}=0,}\\
u_{12}=0     , v_{12}\neq 0 &\multicolumn{3}{l}{\rightarrow \Minor{2,3}{3,4}{\Subs{u_{12}=0}{\mat{G}'}}=y_{2}c_{1}u_{21}v_{22}=0,} \\
u_{12}=0     , v_{12}\neq 0, & u_{21}\neq 0, & v_{22}\neq 0 & \rightarrow c_{1}=0,\\
u_{12}=0     , v_{12}\neq 0, & u_{21}=0    & \multicolumn{2}{l}{\rightarrow\Minor{2,2}{3,3}{\Subs{u_{12}=0,u_{21}=0}{\mat{G}'}}=-c_{1}c_{3}=0,}\\
u_{12}=0     , v_{12}\neq 0,& v_{22}=0    & \multicolumn{2}{l}{\rightarrow\Minor{1,3}{2,4}{{\Subs{u_{12}=0,u_{21}=0}{\mat{G}'}}}=c_{1}c_{3}=0,}\\
u_{12}\neq 0 , v_{12}=0     &\multicolumn{3}{l}{\rightarrow\Minor{2,2}{3,4}{\Subs{v_{12}=0}{\mat{G}'}}=-y_{2}u_{12}v_{22}c_{3}=0,}\\
u_{12}\neq 0 , v_{12}=0,    &v_{22}\neq 0 &\multicolumn{2}{l}{\rightarrow c_{3}=0,}\\
u_{12}\neq 0 , v_{12}=0,    &v_{22}=0     &\multicolumn{2}{l}{\rightarrow\Minor{1,2}{3,3}{{\Subs{v_{12}=0,v_{22}=0}{\mat{G}'}}}={c_{3}}^{2}=0.}
\end{array}
\end{equation}
Thus, in any cases, for any such~$\tensor{P}$, the set~${\BilinearMapCoDomainSubSpace(\Contraction{\tensor{H}}{1}{3},\tensor{P},\HermitianTranspose{\tensor{P}})}$ is not~$\BilinearMapCoDomain$.
\end{proof}
The~\cref{prop:deGrooteMainTrick}, together with the computations done in~\cref{fct:2x} and in~\cref{fct:1x} are sufficient to conclude the proof of~\cref{thm:no4product}.
\section{The case of field extensions via matrix polynomial arithmetic}
\label{sec:fieldext}
The cost comparison in \cref{tab:ffcomplex} is for matrices over an arbitrary ring with skew
unitary matrices.
When the ring is an extension, the input of the problem is a polynomial matrix over the base
ring. Following the traditional equivalence between polynomial matrices and matrix polynomials,
leads to alternative ways to multiply the matrix by its transpose, considering the product of two
polynomials with matrix coefficients.
More specifically, we will focus on degree two extensions, and compare the costs in terms of number
of operations over the base ring.

\subsection{The 2M method}\label{ssec:2M}
Over the field~$\CC$ of complex numbers, the~$3M$ method (Karatsuba) for
general matrix multiplication reduces the number of
multiplications of real matrices from~$4$
to~$3$~\cite{Higham:1992:complex3M}:
if~$\MM[\RR]{\omega}(n)$ is the cost of multiplying~${{n}\times{n}}$
matrices over~$\RR$, then the~$3M$~method
costs~$3\MM[\RR]{\omega}(n)+\LO{n^\omega}$ operations over~$\RR$.
Adapting this approach for product of matrix by its adjoint yields a~$2M$ method using only~$2$ real
products:

\begin{algorithm}[htbp]\caption{2M multiplication in an extension}\label{alg:2M}
\begin{algorithmic}
\Require A commutative ring $\K$ and one of its extensions $\E$;
\Require{$\mat{A}\in\K^{m{\times}n}$ and $\mat{B}\in\K^{m{\times}n}$;}
\Require{$\SymbolIAH$ an involutive matrix antihomomorphism of $\E$.}
\Require $i\in\E$, commuting with $\K$, and such that $\epsilon=i\SymbolIAH(i)\in\K$;
\Ensure{${\MatrixProduct{(\mat{A}+i\mat{B})}{\MIAH{A+iB}}}\in\E^{m{\times}m}$.}
\State Let ${\mat{H}=\MatrixProduct{\mat{A}}{\MIAH{B}}}\in\K^{m{\times}m}$;
\State Let ${\mat{G}=\MatrixProduct{(\mat{A}+\mat{B})}{\MIAH{A+\epsilon{B}}}}\in\K^{m{\times}m}$;
\State\Return $(\mat{G}-\epsilon\mat{H}-\MIAH{H})+\mat{H}\SymbolIAH(i)+i\MIAH{H}$.
\end{algorithmic}
\end{algorithm}
\begin{lemma}\label{lem:2M}
\cref{alg:2M} is correct. It costs $2\MM[\K]{\omega}(n)+\LO{n^\omega}$ operations over the base ring
$\K$.
\end{lemma}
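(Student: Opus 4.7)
The plan is to verify correctness by expanding $(A+iB)\phi(A+iB)$ and then matching the terms against the combination $(G-\epsilon H-\phi(H))+H\phi(i)+i\phi(H)$ returned by the algorithm. First, using \cref{MMIAH:add,MMIAH:mul} and \cref{lem:constant}, I would rewrite the adjoint as $\phi(A+iB)=\phi(A)+\phi(B)\phi(i)$, then distribute to obtain
\begin{equation*}
(A+iB)\phi(A+iB)=A\phi(A)+A\phi(B)\phi(i)+iB\phi(A)+iB\phi(B)\phi(i).
\end{equation*}
Because $i$ commutes with $\mathbb{K}$ (and hence with matrices over $\mathbb{K}$), the last term equals $i\phi(i)\,B\phi(B)=\epsilon\,B\phi(B)$. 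Setting $H=A\phi(B)$, \cref{MMIAH:mul,MMIAH:invol} give $\phi(H)=B\phi(A)$, so the two cross terms are precisely $H\phi(i)$ and $i\phi(H)$. Hence the target product rewrites as $\bigl(A\phi(A)+\epsilon B\phi(B)\bigr)+H\phi(i)+i\phi(H)$.

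Next, I would handle the single ``hard'' piece $A\phi(A)+\epsilon B\phi(B)$ via the second product $G=(A+B)\phi(A+\epsilon B)$. A preliminary remark is that $\phi(\epsilon)=\phi(i\phi(i))=\phi(\phi(i))\phi(i)=i\phi(i)=\epsilon$ by \cref{MMIAH:invol,MMIAH:mul}, so by \cref{lem:constant} we have $\phi(A+\epsilon B)=\phi(A)+\phi(B)\epsilon$. Expanding $G$ and using that $\epsilon\in\mathbb{K}$ is central yields
\begin{equation*}
G=A\phi(A)+\epsilon A\phi(B)+B\phi(A)+\epsilon B\phi(B)=A\phi(A)+\epsilon B\phi(B)+\epsilon H+\phi(H).
\end{equation*}
Therefore $G-\epsilon H-\phi(H)=A\phi(A)+\epsilon B\phi(B)$, and combining with the cross-term expression above establishes that the returned value equals $(A+iB)\phi(A+iB)$.

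For the cost, the algorithm performs exactly two matrix products over~$\mathbb{K}$ (namely $H$ and $G$), each of cost $\MM[\mathbb{K}]{\omega}(n)$. All remaining operations—the additions $A+B$ and $A+\epsilon B$, the scalar multiplications by $\epsilon\in\mathbb{K}$ and by $i,\phi(i)\in\mathbb{E}$, and the final combinations—act on $m\times n$ or $m\times m$ matrices and thus contribute $O(n^2)=\LO{n^\omega}$ base-ring operations, giving the claimed bound $2\MM[\mathbb{K}]{\omega}(n)+\LO{n^\omega}$.

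The only genuinely delicate point is the bookkeeping around $\phi$: one must consistently use that $\phi$ is an antihomomorphism (so orders reverse in $\phi(iB)$ and $\phi(\epsilon B)$), that $\phi(\epsilon)=\epsilon$ (which comes from involutivity), and that $i$—though not central in $\mathbb{E}$—commutes with every element and matrix over $\mathbb{K}$. Once these are nailed down, no further computation is required.
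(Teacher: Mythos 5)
Your proof is correct and follows essentially the same route as the paper's: expand $(A+iB)\phi(A+iB)$ using Lemma~\ref{lem:constant} and the antihomomorphism axioms, identify $H=A\phi(B)$ and $\phi(H)=B\phi(A)$, observe $\phi(\epsilon)=\epsilon$ to expand $G$, and cancel to recover $A\phi(A)+\epsilon B\phi(B)$. You also spell out the $O(n^2)$ cost of the additions and scalar multiplications, which the paper's proof leaves implicit.
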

\begin{proof} 
  Let
  $\mat{M}={\MatrixProduct{(\mat{A}+i\mat{B})}{\MIAH{A+iB}}}$.
  By~\cref{lem:constant}, we have
  that $\SymbolIAH(i\mat{B})=\MIAH{B}\SymbolIAH(i)=\SymbolIAH(i)\MIAH{B}$.
  Thus, 
  $\mat{M}=\MatrixProduct{\mat{A}}{\MIAH{A}} +
  \MatrixProduct{\mat{A}}{\MIAH{B}}\SymbolIAH(i)+
  i\MatrixProduct{\mat{B}}{\MIAH{A}}+
  i\MatrixProduct{\mat{B}}{\MIAH{B}}\SymbolIAH(i)$, by~\cref{MMIAH:add,MMIAH:mul}.
  As $i$ commutes with $\K$, we also have that 
  $\mat{M}=\MatrixProduct{\mat{A}}{\MIAH{A}}+
  \MatrixProduct{\mat{B}}{\MIAH{B}}\epsilon +
  \MatrixProduct{\mat{A}}{\MIAH{B}}\SymbolIAH(i)+
  i\MatrixProduct{\mat{B}}{\MIAH{A}}$.
  By \cref{MMIAH:invol,MMIAH:mul}, we have that
  $\MIAH{H}=\MatrixProduct{\MIAH{\MIAH{B}}}{\MIAH{A}}=\MatrixProduct{\mat{B}}{\MIAH{A}}$.
  Finally, 
  $\mat{G}=\MatrixProduct{\mat{A}}{\MIAH{A}} +
  \MatrixProduct{\mat{A}}{\MIAH{B}}\epsilon+
  \MatrixProduct{\mat{B}}{\MIAH{A}}+
  \MatrixProduct{\mat{B}}{\MIAH{B}}\epsilon$ as
  $\SymbolIAH(\epsilon)=\SymbolIAH(\SymbolIAH(i))\SymbolIAH(i)=i\SymbolIAH(i)=\epsilon$.
  Therefore
  $\mat{G}-\epsilon\mat{H}-\MIAH{H}=\mat{G}-\mat{H}\epsilon-\MIAH{H}=\MatrixProduct{\mat{A}}{\MIAH{A}}+\MatrixProduct{\mat{B}}{\MIAH{B}}\epsilon$
  and $\mat{M}=\mat{G}+\mat{H}\SymbolIAH(i)+i\MIAH{H}$.
\end{proof}
\begin{example}
For instance, 
if $\K=\RR$, $\E=\CC$, then $i=\sqrt{-1}$ satisfies the conditions of
\cref{alg:2M} for both cases when $\SymbolIAH$ is the
transposition or the conjugate transposition.
Therefore, we obtain the multiplications of a matrix by its adjoint,
whether it be the transpose or the conjugate transpose, in
$2\MM[\RR]{\omega}+\LO{n^{\omega}}$ operations in~$\RR$. 
The classical divide and conquer algorithm, see e.g.~\cite[\S~6.3.1]{jgd:2008:toms}, works directly over~$\CC$ and uses the equivalent of~$\frac{2}{2^{\omega}-4}$ complex floating point~$\matrixsize{n}{n}$ matrix products.
Using the~$3M$ method for the complex products, this algorithm
uses overall~${\frac{6}{2^{\omega}-4}\MM[\RR]{\omega}(n)+\LO{n^\omega}}$ operations in~$\RR$.
Finally, Algorithm~\ref{alg:MIAHMM} 
costs~$\frac{2}{2^{\omega}-3}$ complex multiplications for a leading
term bounded by~$\frac{6}{2^{\omega}-3}\MM[\RR]{\omega}(n)$, improving over~$2\MM[\RR]{\omega}$ for~${\omega>\log_{2}(6)\approx 2.585}$, but
this does not apply to the conjugate transpose case.
This is summarized in Table~\ref{tab:complexcomplex}, also
replacing~$\omega$ by~$3$ or~$\log_{2}(7)$ to illustrate the situation for the
main feasible exponents. 
\begin{table}[htbp]\centering\renewcommand{\arraystretch}{1.5}
\begin{tabular}{lcrrr}
\toprule
Problem & Alg.\ &  $\textrm{MM}_3(n)$ & $\textrm{MM}_{\log_2 7} (n)$ & $\textrm{MM}_\omega(n)$ \\
\midrule
\multirow{2}{*}{$\MatrixProduct{\mat{A}}{\mat{B}}$}
& naive & $8n^{3}$ & $4\,\MM[\RR]{\log_{2}(7)}(n)$& $4\,\MM[\RR]{\omega}(n)$\\
& 3M    & $6n^{3}$ & $3\,\MM[\RR]{\log_{2}(7)}(n)$& $3\,\MM[\RR]{\omega}(n)$\\
\midrule
\multirow{2}{*}{%
  $\MatrixProduct{\mat{A}}{\HermitianTranspose{A}}$}
& Alg.~\ref{alg:2M}   & $4n^{3}$ & \color{darkred}\bf\boldmath$2\,\MM[\RR]{\log_{2}(7)}(n)$& \color{darkred}\bf\boldmath$2\,\MM[\RR]{\omega}(n)$\\
& \cite{jgd:2008:toms} & \color{darkred}\bf\boldmath$3n^{3}$ & \color{darkred}\bf\boldmath$2\,\MM[\RR]{\log_{2}(7)}(n)$
& $\frac{6}{2^{\omega}-4}\,\MM[\RR]{\omega}(n)$\\
\midrule
\multirow{3}{*}{%
  $\MatrixProduct{\mat{A}}{\Transpose{A}}$}
& Alg.~\ref{alg:2M}   & $4n^{3}$ & $2\,\MM[\RR]{\log_{2}(7)}(n)$& \color{darkred}\bf\boldmath$2\,\MM[\RR]{\omega}(n)$\\
& \cite{jgd:2008:toms} & $3n^{3}$ & $2\,\MM[\RR]{\log_{2}(7)}(n)$
& $\frac{6}{2^{\omega}-4}\,\MM[\RR]{\omega}(n)$\\
& Alg.~\ref{alg:MIAHMM}  & \color{darkred}\bf\boldmath$2.4 n^{3}$ & \color{darkred}\bf\boldmath$\frac{3}{2}\,\MM[\RR]{\log_{2}(7)}(n)$ & $\frac{6}{2^{\omega}-3}\,\MM[\RR]{\omega}(n)$\\
\bottomrule
\end{tabular}
\caption{Multiplication of a matrix by its adjoint or transpose over~$\CC$:
  leading term of the cost in number of arithmetic operations over~$\RR$.
  Note that 
  $2<\frac{6}{2^\omega-4}$ only when
  $\omega<\log_2(7)\approx{2.81}$ and
  $2<\frac{6}{2^\omega-3}$ only when
  $\omega<\log_2(6)\approx{2.585}$.}\label{tab:complexcomplex}
\end{table}
\end{example}

\subsection{The quaternion algebra}\label{ssec:quaternions}
Given a field \K of characteristic not $2$, the \K-algebra of \emph{quaternions} \HK is the \K-vector space of all formal linear combinations:
\begin{equation}
x_1+x_2\quat{i}+x_3\quat{j}+x_4\quat{k},\quad (x_1, \ldots, x_4)\in{\K^4},
\end{equation}
the non-commutative multiplication being defined by the bilinear extensions of the
relations:
\begin{equation}
\quat{i}^2=\quat{j}^2=\quat{k}^2=\quat{ijk}=-1.
\end{equation}

The quaternions can also be seen as a degree~$2$ extension of a degree~$2$ extension, but a non-commutative one. 
Therefore the~$3M$ or~§$2M$techniques of~\cref{ssec:2M} only apply directly for the first degree~$2$ extension, while the second extension would require~$4$ multiplications. 
This gives~${{4}\times{3}=12}$ (resp.~${{4}\times{2}=8}$) multiplications in the base field for a general  matrix multiplication (resp.\ a multiplication of a matrix by its transpose or conjugate transpose). 
For the former case, there exist actually algorithms using only~$8$ multiplications instead of~$12$. 
For the latter case, we present algorithms using only~$7$ multiplications for the transpose case and only~$6$ multiplications for the conjugate transpose case instead of~$8$.
\subsubsection{Quaternions' multiplication}
The multiplication of quaternions is
$(x_1+x_2\quat{i}+x_3\quat{j}+x_4\quat{k})(y_1+y_2\quat{i}+y_3\quat{j}+y_4\quat{k})
=(w_1+w_2\quat{i}+w_3\quat{j}+w_4\quat{k})$, with:
\begin{align}
w_1&= x_{1}y_{1}-x_{2}y_{2}-x_{3}y_{3}-x_{4}y_{4}\\
w_2&= x_{1}y_{2}+x_{2}y_{1}+x_{3}y_{4}-x_{4}y_{3}\\
w_3&= x_{1}y_{3}-x_{2} y_{4}+x_{3}y_{1}+x_{4}y_{2}\\
w_4&= x_{1}y_{4}+x_{2}y_{3}-x_{3}y_{2}+x_{4}y_{1}
\end{align}

Fiduccia showed in~\cite{Fiduccia:1971:fmm} how to compute this product with only
$10$ field multiplications and $25$ additions, cleverly using Gau\ss'
trick for the multiplication of complex numbers in three
multiplications.
Regarding the minimal number of base field operation required for the multiplication of quaternions, 
de Groote shows in~\cite{Groote:1975:quaternion} that $10$ multiplications is
minimal to compute both~$\MatrixProduct{\mat{X}}{\mat{Y}}$ and~$\MatrixProduct{\mat{Y}}{\mat{X}}$.
In addition, over the reals and the rationals, the minimal number of
multiplications is~$8$~\cite{Howell:1975:quatprodeight} and
\cite[Proposition~1.7]{groote:1978}.
The algorithm of~\cite{Howell:1975:quatprodeight}, requiring also~$28$
additions, is recalled in Algorithm~\ref{alg:HowellLafon}.

\begin{algorithm}[htbp]\caption{Howell-Lafon quaternion
    multiplication}\label{alg:HowellLafon}
\begin{algorithmic}
\Require $\vec{x}=x_1+x_2\quat{i}+x_3\quat{j}+x_4\quat{k}\in\HK$,
$\vec{y}=y_1+y_2\quat{i}+y_3\quat{j}+y_4\quat{k}\in\HK$
\Ensure $\vec{xy}\in\HK$
\[\begin{array}{rlrl}
Q_{1} &= (x_{1}+x_{2})(y_{1}+y_{2}), &
Q_{2} &= (x_{4}-x_{3})(y_{3}-y_{4})\\
Q_{3} &= (x_{2}-x_{1})(y_{3}+y_{4}), &
Q_{4} &= (x_{3}+x_{4})(y_{2}-y_{1})\\
Q_{5} &= (x_{2}+x_{4})(y_{2}+y_{3}), &
Q_{6} &= (x_{2}-x_{4})(y_{2}-y_{3})\\
Q_{7} &= (x_{1}+x_{3})(y_{1}-y_{4}), &
Q_{8} &= (x_{1}-x_{3})(y_{1}+y_{4})\\
T_1 &= Q_5+Q_6, & T_2 &= Q_7+Q_8\\ T_3 &= Q_5-Q_6, & T_4 &= Q_7-Q_8\\
T_5 &= T_2-T_1, & T_6 &= T_1+T_2\\ T_7 &= T_3+T_4, & T_8 &= T_3-T_4\\
w_1 &= Q_2+T_5/2, & w_2 &= Q_1-T_6/2\\ w_3 &= T_7/2-Q_3, & w_4 &= T_8/2-Q_4
\end{array}\]
\State\Return $w_1+w_2\quat{i}+w_3\quat{j}+w_4\quat{k}$.
\end{algorithmic}
\end{algorithm}

\begin{proposition}\label{prop:HowellLafon}
  \cref{alg:HowellLafon} extends to the case of quaternions with matrix coefficients.
  If  matrix multiplication over the base field costs $\MM[\K]{\omega}(n)$
  field operations for $n{\times}n$ matrices and the field matrix
  addition $O(n^2)$ field additions, then the dominant cost
  of~\cref{alg:HowellLafon} applied to matrices is bounded by
  $8\MM[\K]{\omega}(n)$.
\end{proposition}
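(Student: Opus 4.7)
The plan is to verify that the Howell--Lafon scheme of \cref{alg:HowellLafon} is \emph{non-commutative}, in the sense that it never relies on commuting an $x$-factor past a $y$-factor, and then to lift it coefficient-wise to the matrix case. Indeed, an inspection of the eight products $Q_1,\dots,Q_8$ shows that each is of the form $(\text{sum of }x_i)\cdot(\text{sum of }y_j)$: the left operand depends only on the $x$-coordinates of $\vec{x}$ and the right operand only on the $y$-coordinates of $\vec{y}$. All subsequent additions $T_1,\dots,T_8$ and the final combinations defining $w_1,\dots,w_4$ use only linear combinations of the $Q_\ell$, never multiplying two of them together. Hence the identities computing $w_1,\dots,w_4$ in terms of the $Q_\ell$ remain formally valid when the $x_i$ and $y_j$ are replaced by elements of an arbitrary (possibly non-commutative) ring containing $1/2$, in particular by matrices $\mat{X}_i, \mat{Y}_j\in\K^{n\times n}$ (recall that $\mathrm{char}\,\K\neq 2$, so $1/2$ is available).

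Concretely, I would take two matrix quaternions $\vec{\mat{X}}=\mat{X}_1+\mat{X}_2\quat{i}+\mat{X}_3\quat{j}+\mat{X}_4\quat{k}$ and $\vec{\mat{Y}}=\mat{Y}_1+\mat{Y}_2\quat{i}+\mat{Y}_3\quat{j}+\mat{Y}_4\quat{k}$ in $\HK^{n\times n}=(\K^{n\times n})\otimes_{\K}\HK$, and define the $\mat{Q}_\ell, \mat{T}_\ell, \mat{W}_\ell$ by exactly the same formulas as in \cref{alg:HowellLafon}, with matrix products and sums in place of scalar ones. To show correctness I would then expand each $\mat{W}_i$ via $\mat{Q}_\ell$ using the distributivity of matrix multiplication over addition, being careful to keep the order of matrix factors fixed (always $\mat{X}_{\cdot}$ on the left and $\mat{Y}_{\cdot}$ on the right). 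For instance, the same cancellations as in the scalar proof give $\mat{T}_1=2\mat{X}_2\mat{Y}_2+2\mat{X}_4\mat{Y}_3$, $\mat{T}_2=2\mat{X}_1\mat{Y}_1-2\mat{X}_3\mat{Y}_4$, and then $\mat{Q}_2+\mat{T}_5/2=\mat{X}_1\mat{Y}_1-\mat{X}_2\mat{Y}_2-\mat{X}_3\mat{Y}_3-\mat{X}_4\mat{Y}_4$, matching the $1$-component of the product $\vec{\mat{X}}\vec{\mat{Y}}$; the other three components are verified analogously.

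For the complexity, the algorithm uses exactly $8$ multiplications of $n\times n$ matrices over $\K$, each costing $\MM[\K]{\omega}(n)$ field operations. The $28$ matrix additions as well as the four scalings by $1/2$ each cost $O(n^2)$ field operations. Since $\omega>2$, these together contribute only $\LO{n^\omega}$, so the dominant cost is $8\MM[\K]{\omega}(n)$, proving the claim.

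There is essentially no obstacle here beyond bookkeeping: the only non-trivial point is to observe that the Howell--Lafon identities use no commutativity between the $x_i$ and $y_j$, which is why they transfer directly from $\K$ to $\K^{n\times n}$. The characteristic-$\neq 2$ hypothesis, already assumed in the definition of $\HK$ at the start of \cref{ssec:quaternions}, is what makes the halvings legitimate.
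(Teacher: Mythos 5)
Your proof is correct and follows essentially the same route as the paper, which simply states that ``correctness is by inspection since $2$ is invertible in $\K$'' and that the complexity is the $8$ base-field matrix products. What you add is exactly what that inspection consists of: the structural observation that every $Q_\ell$ is a pure $x$-operand times a pure $y$-operand and the recombination is linear, so the identities never invoke commutativity and lift verbatim to $\K^{n\times n}$ — a helpful amplification, not a different argument.
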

\begin{proof}
Correctness is by inspection since $2$ is invertible in $\K$ of
characteristic different from~${2}$.
The complexity bound is just the fact that the Algorithm performs $8$
multiplications of matrices with coefficients in the base field.
\end{proof}

The lowest number of multiplications required to multiply two
quaternions being $8$, \cref{prop:HowellLafon} is the best possible
result while keeping the view of the matrices as two quaternions with
base field matrix coefficients, $X_1,X_2,X_3,X_4$ and
$Y_1,Y_2,Y_3,Y_4$.
The alternative is to use a matrix with quaternion coefficients and
use classical fast matrix algorithms.
Next, we see the different alternatives for the multiplication by an
adjoint.

\subsubsection{Multiplication of a quaternion matrix by its transpose}
We now propose several methods to multiply a quaternion matrix by its
transpose:
\begin{enumerate}
\item First a ``7M'' method which considers a quaternion with matrix
  coefficients, and thus reduces everything to seven general matrix
  multiplications over the base field.
\item Second, one can consider a matrix with quaternion coefficients
  and just apply any matrix multiplication algorithm where
  multiplication of coefficients is that of~\cref{alg:HowellLafon}.
\end{enumerate}

\paragraph{7M method: a quaternion with matrix coefficients}

Many simplifications used in computing the square of a quaternion no longer apply when
computing the product of a quaternion matrix by its transpose, due to non-commutativity of the
matrix product.
For $\mat{A},\mat{B},\mat{C},\mat{D}\in{\K^{m{\times}n}}$,
\begin{multline}
(\mat{A}+\mat{B}\quat{i}+\mat{C}\quat{j}+\mat{D}\quat{k})
(\Transpose{A}+\Transpose{B}\quat{i}+\Transpose{C}\quat{j}+\Transpose{D}\quat{k})
= \mat{S_1}+\mat{S_2}\quat{i}+\mat{S_3}\quat{j}+\mat{S_4}\quat{k}
\end{multline}
where:
\begin{align}
\mat{S_1} &=
\mat{A}\Transpose{A}-\mat{B}\Transpose{B}-\mat{C}\Transpose{C}-\mat{D}\Transpose{D}\\
\mat{S_2} &=
(\mat{A}\Transpose{B}+\mat{B}\Transpose{A})+(\mat{C}\Transpose{D}-\mat{D}\Transpose{C})\\
\mat{S_3} &=
(\mat{A}\Transpose{C}+\mat{C}\Transpose{A})+(\mat{D}\Transpose{B}-\mat{B}\Transpose{D})\\
\mat{S_4} &=
(\mat{A}\Transpose{D}+\mat{D}\Transpose{A})+(\mat{B}\Transpose{C}-\mat{C}\Transpose{B})\end{align}

Using Munro's trick twice, this can be computed with $7$
multiplications over the field \K~and $17$ additions ($6$ of which
are half-additions), as shown in~\cref{alg:QQT}.

\begin{algorithm}[htbp]\caption{Fast quaternion matrix multiplications
  by its transpose}\label{alg:QQT}
\begin{algorithmic}
\Require $\mat{A},\mat{B},\mat{C},\mat{D}\in{\K^{m{\times}n}}$
\Ensure $\MatrixProduct{\mat{M}}{\Transpose{M}}\in\HK^{m{\times}m}$, for $\mat{M}=\mat{A}+\mat{B}\quat{i}+\mat{C}\quat{j}+\mat{D}\quat{k}$.
\[\begin{array}{rlrl}
\mat{U_1} &=\mat{A}+\mat{B}, &\mat{U_2} &=\mat{A}-\mat{B}\\
\mat{U_3} &=\mat{C}+\mat{D}\\
\midrule
\mat{P_1} &=\mat{C}\Transpose{A},
&\mat{P_2} &=\mat{D}\Transpose{B}\\
\mat{P_3} &=\mat{U_3}\Transpose{U_2},
&\mat{P_4} &=\mat{U_1}\Transpose{U_3},\\
\mat{P_5} &=\mat{A}\Transpose{B},
&\mat{P_6} &=\mat{C}\Transpose{D}\\
\mat{P_7} &=(\mat{U_1}+\mat{U_3})(\Transpose{U_2}-\Transpose{U_3})\\
\midrule
\mat{R_1} &= \mat{P_5}+ \mat{P_6}, & \mat{R_2} &= \mat{P_5}-\mat{P_6}\\
\Low{\mat{R_3}} &= \Low{\mat{P_1}+ \Transpose{P_1}} \\
\Low{\mat{R_4}} &= \Low{\mat{P_2}- \Transpose{P_2}}\\
\Low{\mat{S_1}} &= \Low{\mat{P_7} -\mat{P_3}+\mat{P_4}+\mat{R_1}-\Transpose{R_2}}\\
\mat{S_2} &= \mat{R_1} +\Transpose{R_2}\\
\mat{S_3} &= \mat{R_3} +\mat{R_4}\\
\mat{S_4} &= \mat{P_3}+\mat{P_4}-\Transpose{S_3}
\end{array}\]
\State\Return \mat{S_1}+\mat{S_2}\quat{i}+\mat{S_3}\quat{j}+\mat{S_4}\quat{k}
\end{algorithmic}
\end{algorithm}

\begin{problem} Multiply a quaternion with matrix coefficients by its
  transpose in fewer than $7$ multiplications. \end{problem}

\paragraph{Using matrices of quaternions and divide and conquer}
In the following, for the sake of simplicity, we will consider only
square matrices.
Here, we consider instead a matrix with quaternion coefficients and
perform a matrix-matrix product: since the quaternions are not
commutative, then $\mat{M}\Transpose{M}$ is not necessarily symmetric,
one has to compute both the top right and bottom left corners of the
product. Thus no gain is obvious between computing
$\mat{M}\Transpose{M}$ and $\mat{M}\mat{N}$ that way
and~\cref{alg:MIAHMM} is a priori useless in this case, as remarked in
the (counter)-\cref{ex:antih}.
The idea is thus to use a non symmetric algorithm, applied to
$\mat{M}$ and $\Transpose{M}$. The baseline cost would then again be
$\MM[\HK]{\omega}(n)=8\MM[\K]{\omega}(n)$.

Another approach is to use a divide and conquer strategy at the higher
level: cut $\mat{M}$ into
$\begin{bmatrix}\mat{M_{11}}&\mat{M_{12}}\\\mat{M_{21}}&\mat{M_{22}}\end{bmatrix}$, and compute:
\begin{itemize}
\item
$\MatrixProduct{\mat{M_{11}}}{\Transpose{M_{21}}}$,
$\MatrixProduct{\mat{M_{12}}}{\Transpose{M_{22}}}$,
$\MatrixProduct{\mat{M_{21}}}{\Transpose{M_{11}}}$,
$\MatrixProduct{\mat{M_{22}}}{\Transpose{M_{12}}}$
by the baseline algorithm;
\item and
$\MatrixProduct{\mat{M_{11}}}{\Transpose{M_{11}}}$,
$\MatrixProduct{\mat{M_{12}}}{\Transpose{M_{12}}}$,
$\MatrixProduct{\mat{M_{21}}}{\Transpose{M_{21}}}$,
$\MatrixProduct{\mat{M_{22}}}{\Transpose{M_{22}}}$
by recursive calls.
\end{itemize}
The cost of this divide and conquer strategy is then:
\begin{equation}
C(n)\leq{}4C(\frac{n}{2})+4\MM[\HK]{\omega}(\frac{n}{2})+\LO{n^\omega}\leq{}4C(\frac{n}{2})+32\MM[\K]{\omega}(\frac{n}{2})+\LO{n^\omega}.\end{equation}
By~\cref{lem:masterthm}, we have that
$C(n)\leq{\frac{32}{2^\omega-4}\MM[\K]{\omega}(n)+\LO{n^\omega}}$, and
this is
never better that $8\MM[\K]{\omega}(n)$ (but equal when~$\omega=3$ as
expected). So this is thus useless too.

But the same strategy can be used with a Strassen-like algorithm
instead. Now such algorithms, for instance those of
\cite{Strassen:1969:GENO,Winograd:1977:complexite}, when applied to~$\mat{M}$ and~$\Transpose{M}$, use two recursive calls and five normal
multiplications.
This is:
\begin{equation}
S(n)\leq{2S(\frac{n}{2})+5\MM[\HK]{\omega}(\frac{n}{2})+\LO{n^\omega}}\leq{2S(\frac{n}{2})+40\MM[\K]{\omega}(\frac{n}{2})+\LO{n^\omega}}.
\end{equation}
By~\cref{lem:masterthm}, we obtain that this is
\begin{equation}\label{eq:hybrid}
S(n)\leq{\frac{40}{2^\omega-2}\MM[\K]{\omega}(n)+\LO{n^\omega}}.
\end{equation}
As expected this is again $8\MM[\K]{\log_2(7)}(n)$ if a Strassen-like
algorithm is also used for the baseline over the field and
$\omega=\log_2(7)$.
This is worse if $\omega<\log_2(7)$, but better, and only
$(6+\frac{2}{3})\MM[\K]{3}(n)$, if $\omega=3$.

\paragraph{Positive characteristic quaternions and transposition}
In this case, \cref{alg:MIAHMM} is not usable. It nonetheless has an
interesting feature: it has $3$ symmetric products
instead of $2$ for the algorithms of
\cite{Strassen:1969:GENO,Winograd:1977:complexite}.
In the quaternion case, as transposition is not an antihomomorphism,
one cannot use the symmetries directly to save computations as in
general
$\Transpose{\left(\MatrixProduct{M}{N}\right)}\neq\MatrixProduct{\Transpose{N}}{\Transpose{M}}$.
But if one is willing to \emph{recompute}
$\MatrixProduct{\Transpose{N}}{\Transpose{M}}$
then the algorithm still works.
This yields to~\cref{alg:nonAIH} which requires $7$ multiplications
instead of $5$, $15$ additions instead of $9$, and $4$ multiplications
by $\mat{Y}$ or $\Transpose{Y}$.

\begin{algorithm}[htb]
\caption{Product of a matrix by its non antihomomorphic
  transpose}\label{alg:nonAIH}
\begin{algorithmic}
\Require{$\mat{A}\in \Matrices{\Ring}{\matrixsize{m}{n}}$ (with even $m$ and $n$ for the
  sake of simplicity);}
\Require $\mat{Y} \in \Matrices{\Ring}{\matrixsize{\frac{n}{2}}{\frac{n}{2}}}$ such that
$\MatrixProduct{\mat{Y}}{\Transpose{Y}}=-\IdentityMatrixOfSize{\frac{n}{2}}$;
\Ensure{$\MatrixProduct{\mat{A}}{\Transpose{{A}}}$.}
\State Split
$\mat{A}=\begin{smatrix} \mat{A}_{11}&\mat{A}_{12}\\
  \mat{A}_{21}&\mat{A}_{22}\end{smatrix}$
where~$\mat{A}_{11}$ is in~${\Matrices{\Ring}{\matrixsize{\frac{m}{2}}{\frac{n}{2}}}}$
\[\begin{array}{lcl}
{\mat{S}_{1}}\leftarrow{\MatrixProduct{(\mat{A}_{21} -
    \mat{A}_{11})}{\mat{Y}}}
&&
{\mat{ST}_{1}}\leftarrow{\MatrixProduct{\Transpose{Y}}{(\Transpose{A_{21}} - \Transpose{A_{11}})}}
\\
{\mat{S}_{2}}\leftarrow{\mat{A}_{22} - \MatrixProduct{\mat{A}_{21}}{\mat{Y}}}
&&
{\mat{ST}_{2}}\leftarrow{\Transpose{A_{22}} - \MatrixProduct{\Transpose{Y}}{\Transpose{A_{21}}}}
\\
{\mat{S}_{3}}\leftarrow{\mat{S}_{1} - \mat{A}_{22}}
&&
{\mat{ST}_{3}}\leftarrow{\mat{ST}_{1} - \Transpose{A_{22}}}
\\
{\mat{S}_{4}}\leftarrow{\mat{S}_{3} + \mat{A}_{12}}
&&
{\mat{ST}_{4}}\leftarrow{\mat{ST}_{3} + \Transpose{A_{12}}}
\\
\midrule
\multicolumn{3}{c}{\mat{{P}_{1}}\leftarrow{\MatrixProduct{\mat{A}_{11}}{\Transpose{{A}_{11}}}}}
\\
\multicolumn{3}{c}{\mat{{P}_{2}}\leftarrow{\MatrixProduct{\mat{A}_{12} }{\Transpose{{A}_{12}}}}}
\\
{\mat{P}_{3} \leftarrow\MatrixProduct{\mat{A}_{22}}{\mat{{ST}_{4}}}}
&&
{\mat{PT}_{3} \leftarrow\MatrixProduct{\mat{{S}_{4}}}{\Transpose{A_{22}}}}
\\
{\mat{P}_{4} \leftarrow\MatrixProduct{\mat{S}_{1}}{\mat{{ST}_{2}}}}
&&
{\mat{PT}_{4} \leftarrow\MatrixProduct{\mat{{S}_{2}}}{\mat{ST}_{1}}}
\\
\multicolumn{3}{c}{\mat{{P}_{5}}\leftarrow{\MatrixProduct{\mat{S}_{3}}{\mat{{ST}_{3}}}}}
\\
\midrule
\multicolumn{3}{c}{\mat{U}_{1}\leftarrow\mat{{P}_{1}}+\mat{{P}_{5}}}
\\
\multicolumn{3}{c}{\mat{U}_{3} \leftarrow \mat{P}_{1} + \mat{P}_{2}}
\\
{\mat{U}_{2} \leftarrow \mat{U}_{1} + \mat{P}_{4}}
&&
{\mat{UT}_{2} \leftarrow \mat{U}_{1} + \mat{PT}_{4}}
\\
\mat{U}_{4} \leftarrow \mat{U}_{2} + \mat{P}_{3}
&&
\mat{UT}_{4} \leftarrow \mat{UT}_{2} + \mat{PT}_{3}
\\
\multicolumn{3}{c}{\mat{U}_{5} \leftarrow \mat{U}_{2} + \mat{PT}_{4}}
\end{array}\]
\State \Return{$\begin{smatrix} \mat{U}_{3} &\mat{UT}_{4} \\ \mat{U}_{4} & \mat{U}_{5} \end{smatrix}$.}
\end{algorithmic}
\end{algorithm}

Now, it turns out that transposition is still antihomomorphic if one
of the matrices has its coefficients in the base field, as shown
by~\cref{lem:basetrsp}.
\begin{lemma}\label{lem:basetrsp}
 Let~$\mat{A}$ be in~$\HK[\K]^{m{\times}k}$ and~$\mat{Y}$ in~$\K^{k{\times}n}$,
  then
  $\Transpose{\left(\MatrixProduct{\mat{A}}{\mat{Y}}\right)}=\MatrixProduct{\Transpose{Y}}{\Transpose{A}}$.
\end{lemma}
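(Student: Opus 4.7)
The plan is to do a direct entry-by-entry computation, exploiting the fact that the coefficients of $\mat{Y}$ lie in the base field $\K$, which sits in the center of the quaternion algebra $\HK[\K]$ (since $\HK[\K]$ is a $\K$-algebra, scalars in $\K$ commute with every quaternion).

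First I would fix indices $i,j$ and write out the $(i,j)$-entry of $\Transpose{(\mat{A}\mat{Y})}$: by definition of the transpose and of the matrix product,
\[
\bigl(\Transpose{(\mat{A}\mat{Y})}\bigr)_{ij} = (\mat{A}\mat{Y})_{ji} = \sum_{\ell=1}^{k} a_{j\ell}\, y_{\ell i},
\]
where $a_{j\ell}\in\HK[\K]$ and $y_{\ell i}\in\K$. Next I would compute the $(i,j)$-entry of $\Transpose{Y}\,\Transpose{A}$:
\[
\bigl(\Transpose{Y}\Transpose{A}\bigr)_{ij} = \sum_{\ell=1}^{k} (\Transpose{Y})_{i\ell}\,(\Transpose{A})_{\ell j} = \sum_{\ell=1}^{k} y_{\ell i}\, a_{j\ell}.
\]

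The core step is then the commutation $y_{\ell i}\, a_{j\ell} = a_{j\ell}\, y_{\ell i}$, which holds for each $\ell$ because $y_{\ell i}\in\K$ and every element of $\K$ commutes with every element of $\HK[\K]$ in a $\K$-algebra. Summing over $\ell$ yields equality of the two sums, and hence of the two matrices entry-wise.

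There is no real obstacle here: the whole point is that the usual obstruction to antihomomorphy for the transpose over quaternions is the non-commutativity of quaternion multiplication, which disappears as soon as one of the two factors has entries that commute with everything. I would close by noting that the same argument works with the roles of $\mat{A}$ and $\mat{Y}$ swapped, so $\Transpose{(\mat{Y}\mat{A})}=\Transpose{A}\,\Transpose{Y}$ whenever $\mat{Y}\in\K^{m\times k}$ and $\mat{A}\in\HK[\K]^{k\times n}$; this symmetric variant is what typically gets used in the algorithmic context of \cref{alg:nonAIH}, where multiplications by the skew-unitary $\mat{Y}\in\K^{\cdot\times\cdot}$ are factored out of quaternion block products.
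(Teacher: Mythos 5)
Your proof is correct and takes essentially the same route as the paper's: both reduce to the entry-wise identity $\sum_{\ell} a_{j\ell}\,y_{\ell i} = \sum_{\ell} y_{\ell i}\,a_{j\ell}$, justified by the fact that the coefficients of $\mat{Y}$ lie in $\K$ and hence commute with quaternions. You have merely spelled out the bookkeeping that the paper leaves implicit.
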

\begin{proof}
Since the coefficients of $\mat{Y}$ are in the base field, they commute
with the quaternions. Therefore, $\forall{i,j}$, $\sum_k a_{jk}y_{ki}=\sum_k y_{ki}a_{jk}$.
\end{proof}

Now, \cref{ssec:tridiag} shows that for any 
quaternion algebra in positive characteristic, there exist a matrix $\mat{Y}$, \emph{in the base field},
such that
$\MatrixProduct{\mat{Y}}{\Transpose{Y}}=-\IdentityMatrixOfSize{\lfloor
  \frac{n}{2}\rfloor}$.
Therefore, in this case, \cref{lem:basetrsp} shows that
in~\cref{alg:nonAIH},
$\mat{ST}_1=\Transpose{S_1}$,
$\mat{ST}_2=\Transpose{S_2}$,
$\mat{ST}_3=\Transpose{S_3}$,
$\mat{ST}_4=\Transpose{S_4}$.
This shows that not only $\mat{P}_1$ and $\mat{P}_2$ are
multiplications of a matrix by its transpose, but also
$\mat{P_5}=\MatrixProduct{\mat{S_3}}{\Transpose{S_3}}$. Finally,
\cref{alg:nonAIH} thus requires three recursive calls and four general
multiplications.
This is:
\begin{equation}
P(n)\leq{3P(\frac{n}{2})+4\MM[\HK]{\omega}(\frac{n}{2})+\LO{n^\omega}}\leq{3P(\frac{n}{2})+32\MM[\K]{\omega}(\frac{n}{2})+\LO{n^\omega}}\end{equation}
and \cref{eq:hybrid} is modified as:
\begin{equation}\label{eq:primehybrid}
P(n)\leq{\frac{32}{2^\omega-3}\MM[\K]{\omega}(n)+\LO{n^\omega}}
\end{equation}
As expected this is again $8\MM[\K]{\log_2(7)}(n)$ if a Strassen-like
algorithm is also used for the baseline over the field and
$\omega=\log_2(7)$.
This is again worse if $\omega<\log_2(7)$, but better, and only
$(6+\frac{2}{5})\MM[\K]{3}(n)$, if $\omega=3$.

\newcommand{\OurAlg}{\cref{alg:MIAHMM}}

\subsubsection{Multiplication of a quaternion matrix by its adjoint}

We now deal with the case of the product of a quaternion matrix with its conjugate transpose.
This operator is now an antihomomorphism which allows us to save some computations as in~\OurAlg{}.
Here also we distinguish the matrix of quaternions from the quaternion with matrix
coefficients.

\paragraph{Scalar case}
The quaternion conjugation satisfies
$\overline{XY}=\overline{Y}\,\overline{X}$. Therefore, we have:
\begin{equation}\label{eq:quatnorm}
(a+b\quat{i}+c\quat{j}+d\quat{k})
\overline{(a+b\quat{i}+c\quat{j}+d\quat{k})} =
a^{2}+b^{2}+c^{2}+d^{2}
\end{equation}
For matrices again simplifications do not occur and the product is
then more complex.

\paragraph{Using matrices of quaternions}
Now $\MatrixProduct{\mat{M}}{\HermitianTranspose{M}}$ is a hermitian matrix
and~\OurAlg{} works over \HK. For this, one needs to find a
skew-unitary matrix in \HK. This is impossible in \HK[\CC], but always
possible in the quaternions over fields of positive characteristic using sums
of squares and~\cref{eq:quatnorm}.

Suppose we use a generic matrix multiplication
algorithm over the quaternions with cost bound equivalent to
$\MM[\HK]{\omega}(n)$ field operations.
Then our~\cref{alg:MIAHMM} can multiply a matrix of a quaternions by its
conjugate transpose with a dominant complexity term bounded by
$\left(\frac{2}{2^\omega-3}\right)\MM[\HK]{\omega}(n)$
operations, by~\cref{thm:complexitybound}.

Now for the quaternions, the best algorithm to multiply any two
matrices of quaternions is given by~\cref{prop:HowellLafon} and uses
$8\MM[\K]{\omega}(n)$ field operations if the base field matrix
multiplication uses $\MM[\K]{\omega}(n)$.
We thus have proven:
\begin{corollary}\label{cor:algquat}
\OurAlg{} multiplies a quaternion matrix by its
conjugate transpose with dominant cost bounded by
$\left(\frac{16}{2^\omega-3}\right)\MM[\K]{\omega}(n)$ base field operations.
\end{corollary}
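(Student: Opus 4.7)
The plan is to obtain \cref{cor:algquat} as a direct specialization of \cref{thm:complexitybound} to the ring $\Ring=\HK$ equipped with the conjugate transposition antihomomorphism, combined with the cost bound $\MM[\HK]{\omega}(n)\leq 8\MM[\K]{\omega}(n)$ coming from \cref{prop:HowellLafon}. First I would verify that all hypotheses of \cref{thm:complexitybound} apply: conjugate transposition over \HK{} is an involutive matrix antihomomorphism (the quaternion conjugation itself satisfies $\overline{xy}=\overline{y}\,\overline{x}$, so conditions \eqref{RAH:invol}, \eqref{RAH:add}, \eqref{RAH:mul} propagate to matrices via \cref{def:MIAH}, in contrast to plain transposition which failed, as recalled in \cref{ex:antih}).

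Next I would produce the required skew-unitary matrix. Using \cref{eq:quatnorm}, the quaternion norm is $(a+b\quat{i}+c\quat{j}+d\quat{k})\overline{(a+b\quat{i}+c\quat{j}+d\quat{k})}=a^2+b^2+c^2+d^2$, so it suffices to write $-1$ as a sum of at most four squares in the base field \K. Over any field of positive characteristic this is already achievable with two squares by \cref{lem:pigeonhole}: take $(a,b)\in\K^2$ with $a^2+b^2=-1$ and set $c=d=0$, so that $q=a+b\quat{i}\in\HK$ satisfies $q\overline{q}=-1$. Then $\mat{Y}=q\cdot\IdentityMatrixOfSize{n/2}$ lies in $\HK^{(n/2)\times(n/2)}$ and is skew-unitary for the conjugate transposition. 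This multiplication by $\mat{Y}$ costs $O(n^2)$ quaternion operations, hence $O(n^2)=\LO{n^\omega}$ base field operations, meeting the second hypothesis of \cref{thm:complexitybound}.

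Applying \cref{thm:complexitybound} to $\Ring=\HK$ then yields that \cref{alg:MIAHMM} computes $\MatrixProduct{\mat{A}}{\HermitianTranspose{A}}$ using at most
\begin{equation}
\frac{2}{2^\omega-3}\MM[\HK]{\omega}(n)+\LO{n^\omega}
\end{equation}
operations in \HK. Substituting the bound $\MM[\HK]{\omega}(n)\leq 8\MM[\K]{\omega}(n)+\LO{n^\omega}$ from \cref{prop:HowellLafon} (applied recursively inside each block multiplication) gives the announced
\begin{equation}
\frac{16}{2^\omega-3}\MM[\K]{\omega}(n)+\LO{n^\omega}
\end{equation}
base field operations. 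The only real technical point is the existence of a skew-unitary matrix, which, as noted in the text, is unavailable over $\HK[\RR]$ (and a fortiori $\HK[\CC]$) but always available in positive characteristic by the argument above; everything else is a mechanical composition of the two cost bounds, so no additional obstacle is expected.
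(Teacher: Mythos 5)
Your proposal is correct and follows the same route as the paper: apply \cref{thm:complexitybound} with $\Ring=\HK$ and combine with the $8\MM[\K]{\omega}(n)$ bound from \cref{prop:HowellLafon}. The paper merely gestures at the existence of the needed skew-unitary matrix ("always possible\ldots using sums of squares and~\cref{eq:quatnorm}"), whereas you spell out the construction $\mat{Y}=(a+b\quat{i})\IdentityMatrixOfSize{n/2}$ via \cref{lem:pigeonhole}; this is a helpful but inessential elaboration, not a different argument.
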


\paragraph{Directly using quaternions with matrix coefficients}
Using a quaternion with matrix coefficients over
the field, we have:
\begin{equation}\begin{split}
\MatrixProduct{\mat{M}}{\HermitianTranspose{M}} &=
\MatrixProduct{(\mat{A}+\mat{B}\quat{i}+\mat{C}\quat{j}+\mat{D}\quat{k})}%
{\HermitianTranspose{(\mat{A}+\mat{B}\quat{i}+\mat{C}\quat{j}+\mat{D}\quat{k})}}\\
&=(\mat{A}+\mat{B}\quat{i}+\mat{C}\quat{j}+\mat{D}\quat{k})
(\Transpose{A}-\Transpose{B}\quat{i}-\Transpose{C}\quat{j}-\Transpose{D}\quat{k})\\
&= \mat{H_1}+\mat{H_2}\quat{i}+\mat{H_3}\quat{j}+\mat{H_4}\quat{k}
\end{split}\end{equation}
where:
\begin{align}
\mat{H_1} &=
\mat{A}\Transpose{A}+\mat{B}\Transpose{B}+\mat{C}\Transpose{C}+\mat{D}\Transpose{D}\\
\mat{H_2} &=
(\mat{B}\Transpose{A}-\mat{A}\Transpose{B})+(\mat{D}\Transpose{C}-\mat{C}\Transpose{D})\\
\mat{H_3} &=
(\mat{C}\Transpose{A}-\mat{A}\Transpose{C})+(\mat{B}\Transpose{D}-\mat{D}\Transpose{B})\\
\mat{H_4} &=
(\mat{D}\Transpose{A}-\mat{A}\Transpose{D})+(\mat{C}\Transpose{B}-\mat{B}\Transpose{C})\end{align}

Note that $H_1$ is symmetric and $H_2$, $H_3$, $H_4$ are skew-symmetric.

The properties of the transpose in the field shows that these can be
computed with $4+3*2=10$ multiplications (including four squares).

Now consider
$\mat{E}=\mat{A}+\mat{B}\quat{i}$ and $\mat{F}=\mat{C}+\mat{D}\quat{i}$,
so that
$\mat{M}=\mat{E}+\mat{F}\quat{j}$.
This shows that:
\begin{equation}
\MatrixProduct{\mat{M}}{\HermitianTranspose{M}}
=\MatrixProduct{(\mat{E}+\mat{F}\quat{j})}{(\HermitianTranspose{E}+\quat{j}\HermitianTranspose{F})}
=
(\mat{E}\HermitianTranspose{E}
+
\mat{F}\HermitianTranspose{F})
+
(\mat{F}\quat{j}\HermitianTranspose{E}
-
\mat{E}\quat{j}\HermitianTranspose{F})
\end{equation}

Then, we have:
\begin{equation}\label{eq:quaternion3Ma}
\begin{split}
\mat{F}\quat{j}\HermitianTranspose{E} &=
(\mat{C}+\mat{D}\quat{i})\quat{j}(\Transpose{A}-\Transpose{B}\quat{i})\\
&= (\mat{C}\Transpose{A}-\mat{D}\Transpose{B})\quat{j} +
(\mat{D}\Transpose{A}+\mat{C}\Transpose{B})\quat{k}\\
& =\mat{X}\quat{j}+\mat{Y}\quat{k}\\
\end{split}
\end{equation}
\begin{equation}\label{eq:quaternion3Mb}
\begin{split}
-\mat{E}\quat{j}\HermitianTranspose{F} &=
(\mat{A}+\mat{B}\quat{i})\quat{j}(-\Transpose{C}+\Transpose{D}\quat{i})\\
&= (-\mat{A}\Transpose{C}+\mat{B}\Transpose{D})\quat{j} -
(\mat{B}\Transpose{C}+\mat{A}\Transpose{D})\quat{k}\\
& =-\Transpose{X}\quat{j}-\Transpose{Y}\quat{k}
\end{split}
\end{equation}

Using \cref{eq:quaternion3Ma,eq:quaternion3Mb}, we thus have \cref{alg:2M3M} which
uses only $6$ multiplications (one of which is a square) and a
total of $14$ additions, $7$ of them being half-additions
(On the one hand, $3$ multiplications and $5$ additions for
\cref{eq:quaternion3Ma,eq:quaternion3Mb} overall, then $2$
half-additions for $\Low{\mat{H_3}}$ and $\Low{\mat{H_4}}$;
on the other hand, $2$ multiplications, $1$ square, $3$ additions and $9$
half-additions for $\Low{\mat{H_1}}$ and $\Low{\mat{H_2}}$).

\begin{algorithm}[htbp]\caption{Fast quaternion matrix multiplications
  by its adjoint}\label{alg:2M3M}
\begin{algorithmic}
\Require $\mat{A},\mat{B},\mat{C},\mat{D}\in{\K^{m{\times}n}}$
\Ensure $\MatrixProduct{\mat{M}}{\HermitianTranspose{M}}\in\HK^{m{\times}m}$, for $\mat{M}=\mat{A}+\mat{B}\quat{i}+\mat{C}\quat{j}+\mat{D}\quat{k}$.
\[\begin{array}{rlrl}
\mat{U_1} &= \mat{A}+\mat{B}, &\mat{U_2} &= \mat{C}+\mat{D}\\
\midrule
\mat{Q_1} &=\mat{C}\Transpose{A},
&\mat{Q_2} &=\mat{D}\Transpose{B}\\
\mat{Q_3} &=\mat{U_2}\Transpose{U_1}\\
 \mat{Q_4} &=\mat{A}\Transpose{B},
&\mat{Q_5} &=\mat{C}\Transpose{D}\\
\mat{Q_6} &=(\mat{U_1}+\mat{U_2})(\Transpose{U_1}+\Transpose{U_2})\\
\midrule
\mat{T_1} &=  \mat{Q_1}-\mat{Q_2},
&\mat{T_2} &=  \mat{Q_4}+\mat{Q_5}\\
\mat{T_3} &=  (\mat{Q_1}+\mat{Q_2})-\mat{Q_3}\\
\Low{\mat{H_1}} &= \Low{\mat{Q_6}-(\Transpose{Q_3}+\mat{Q_3})-(\Transpose{T_2}+\mat{T_2})}\\
\Low{\mat{H_2}} &=  \Low{\Transpose{T_2}-\mat{T_2}}\\
\Low{\mat{H_3}} &= \Low{\mat{T_1}-\Transpose{T_1}}\\
\Low{\mat{H_4}} &= \Low{\Transpose{T_3}-\mat{T_3}}\\
\end{array}\]
\State\Return $\mat{H_1}+\mat{H_2}\quat{i}+\mat{H_3}\quat{j}+\mat{H_4}\quat{k}$.
\end{algorithmic}
\end{algorithm}

\begin{proposition}\label{prop:6M}
\cref{alg:2M3M} multiplies a quaternion matrix by its
conjugate transpose with cost equivalent
to~$\left(5+\frac{2}{2^\omega-3}\right)\MM[\K]{\omega}(n)$ base field operations.
\end{proposition}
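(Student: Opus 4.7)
The plan is to prove \cref{prop:6M} in two parts: first verify correctness by expanding every output $H_i$ in terms of the six products, then separately analyze the cost by classifying each of the six multiplications according to whether it is a general product or a product of a matrix by its transpose.

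For correctness, I would proceed by checking each of the four components independently against the expressions for $H_1, H_2, H_3, H_4$ already derived just before the algorithm. The key intermediate quantities are $\mat{Q_6} = (\mat{U_1}+\mat{U_2})(\Transpose{U_1}+\Transpose{U_2})$ which expands as a sum of sixteen $\mat{X}\Transpose{Y}$ terms, and $\mat{Q_3} = \mat{U_2}\Transpose{U_1}$, with $\mat{Q_3}+\Transpose{Q_3}$ supplying exactly the twelve mixed terms of $\mat{Q_6}$ that do \emph{not} belong to $\mat{A}\Transpose{A}+\mat{B}\Transpose{B}+\mat{C}\Transpose{C}+\mat{D}\Transpose{D}$, except for the four terms $\mat{A}\Transpose{B}+\mat{B}\Transpose{A}+\mat{C}\Transpose{D}+\mat{D}\Transpose{C}$, which are in turn removed by the correction $\mat{T_2}+\Transpose{T_2}$ where $\mat{T_2}=\mat{Q_4}+\mat{Q_5}$. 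This yields $\Low{\mat{H_1}}$. For $\mat{H_2}$, the expression $\Transpose{T_2}-\mat{T_2}=(\mat{B}\Transpose{A}-\mat{A}\Transpose{B})+(\mat{D}\Transpose{C}-\mat{C}\Transpose{D})$ is immediate. For $\mat{H_3}$, expanding $\mat{T_1}-\Transpose{T_1}$ with $\mat{T_1}=\mat{Q_1}-\mat{Q_2}$ gives $(\mat{C}\Transpose{A}-\mat{A}\Transpose{C})+(\mat{B}\Transpose{D}-\mat{D}\Transpose{B})$. For $\mat{H_4}$, expanding $\mat{T_3}=(\mat{Q_1}+\mat{Q_2})-\mat{Q_3}$ and simplifying gives $\mat{T_3}=-\mat{C}\Transpose{B}-\mat{D}\Transpose{A}$, so $\Transpose{T_3}-\mat{T_3}=(\mat{D}\Transpose{A}-\mat{A}\Transpose{D})+(\mat{C}\Transpose{B}-\mat{B}\Transpose{C})$ as required.

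For the complexity, I would then count the six multiplications and classify them: $\mat{Q_1}, \mat{Q_2}, \mat{Q_3}, \mat{Q_4}, \mat{Q_5}$ are five general $n{\times}n$ products over $\K$, each of cost $\MM[\K]{\omega}(n)$, while $\mat{Q_6} = (\mat{U_1}+\mat{U_2})\Transpose{(\mat{U_1}+\mat{U_2})}$ is the product of a matrix by its transpose, for which \cref{alg:MIAHMM} applies with dominant cost $\frac{2}{2^\omega-3}\MM[\K]{\omega}(n)$ by \cref{thm:complexitybound} (this uses that $\K$ has positive characteristic, hence supports a skew-unitary matrix by \cref{ssec:tridiag}, which is the hypothesis under which \cref{alg:MIAHMM} applies to the $\Low{\mat{H_1}}$ sub-problem here via $\SymbolIAH = \Transpose{\cdot}$ on $\K$). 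Summing the six contributions and absorbing the $O(n^2)$ additions into the $\LO{n^\omega}$ term yields the claimed bound $\left(5+\frac{2}{2^\omega-3}\right)\MM[\K]{\omega}(n)$.

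No step presents a serious obstacle: the entire argument is a mechanical expansion combined with invoking the earlier complexity result. The only subtlety worth stating explicitly is that the savings come from recognizing $\mat{Q_6}$ as a symmetric-type product, and that $\mat{H_1}$ being symmetric (while $\mat{H_2}, \mat{H_3}, \mat{H_4}$ are skew-symmetric) justifies reading off only the lower triangular parts, consistent with \cref{lem:uplow} applied to $\MIAH{A} = \HermitianTranspose{A}$ in the outer quaternion structure.
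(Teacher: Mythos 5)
Your proposal is correct and follows essentially the same reasoning as the paper's (one-sentence) proof: the six block multiplications split into five general products $\mat{Q_1},\dots,\mat{Q_5}$ and a single symmetric product $\mat{Q_6}=(\mat{U_1}+\mat{U_2})\Transpose{(\mat{U_1}+\mat{U_2})}$ to which \cref{thm:complexitybound} applies, giving $\left(5+\frac{2}{2^\omega-3}\right)\MM[\K]{\omega}(n)$. Your added expansion verifying each~$\mat{H_i}$ is sound, and the caveat that the $\frac{2}{2^\omega-3}$ savings for $\mat{Q_6}$ relies on the availability of a skew-unitary matrix over~$\K$ is a useful clarification the paper's own proof leaves implicit.
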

\begin{proof}
\cref{alg:2M3M} uses $6$ multiplications, one of which,
$\mat{Q_6}$, is the product of a matrix  in~$\K$ by its transpose.
\end{proof}

\begin{problem} Multiply a quaternion with matrix coefficients by its
  Hermitian transpose in fewer than $6$ multiplications (including
  $1$ square), or with more squares. \end{problem}

\paragraph{Comparison}
We summarize the results of this section about quaternion matrices
in~\cref{tab:complexquat}.

\begin{table}[htbp]\small\centering\renewcommand{\arraystretch}{1.5}
\begin{tabular}{llrrr}
\toprule
& Alg.\ &  $\MM{3}(n)$ & $\MM{\log_2 7} (n)$ & $\MM{\omega}(n)$ \\
\midrule
\multirow{2}{*}{$\MatrixProduct{\mat{A}}{\mat{B}}$}
& naive & $32n^3$ & $16\,\MM[\K]{\log_{2}(7)}(n)$& $16\,\MM[\K]{\omega}(n)$\\
& \cite{Howell:1975:quatprodeight}  & $16n^3$ & $8\,\MM[\K]{\log_{2}(7)}(n)$& $8\,\MM[\K]{\omega}(n)$\\
\midrule
\multirow{2}{*}{$\MatrixProduct{\mat{A}}{\HermitianTranspose{A}}$}
& Alg.~\ref{alg:2M3M}    & $10.8n^{3}$ & $5.5\,\MM[\K]{\log_{2}(7)}(n)$& \color{darkred}\bf\boldmath$\left(\frac{2}{2^\omega-3}+5\right)\MM[\K]{\omega}(n)$\\
& Alg.~\ref{alg:MIAHMM}  & \color{darkred}\bf\boldmath$6.4 n^{3}$ & \color{darkred}\bf\boldmath$4\,\MM[\K]{\log_{2}(7)}(n)$ & $\frac{16}{2^{\omega}-3}\,\MM[\K]{\omega}(n)$\\
\midrule
\multirow{2}{*}{$\MatrixProduct{\mat{A}}{\Transpose{A}}$}
& Alg.~\ref{alg:QQT}    & $14n^{3}$ & \color{darkred}\bf\boldmath$7\,\MM[\K]{\log_{2}(7)}(n)$& \color{darkred}\bf\boldmath$7\,\MM[\K]{\omega}(n)$\\
& Eq.~(\ref{eq:hybrid}) & \color{darkred}\bf\boldmath$(13+\frac{1}{3})n^{3}$ & $8\,\MM[\K]{\log_{2}(7)}(n)$& $\frac{40}{2^\omega-2}\MM[\K]{\omega}(n)$\\
\hdashline
$>0$ char. & Alg.~\ref{alg:nonAIH} & \color{darkred}\bf\boldmath$(12+\frac{4}{5})n^{3}$ & $8\,\MM[\K]{\log_{2}(7)}(n)$& $\frac{32}{2^\omega-3}\MM[\K]{\omega}(n)$\\
\bottomrule
\end{tabular}
\caption{Matrix Multiplication over~$\HK^{n{\times}n}$: leading term of the cost
 in number of operations over~$\K$. Note that
 $\left(\frac{2}{2^\omega-3}+5\right)<\frac{16}{2^\omega-3}$ only when
 $\omega<{\log_2(29)-\log_2(5)}\approx{2.536}$.}\label{tab:complexquat}
\end{table}
\section{Algorithm into practice}\label{sec:implem}
This section reports on an implementation of \cref{alg:MIAHMM} over a prime field, as it is
a core ingredient of any such computation in positive characteristic or over $\Z[i]$ or $\QQ[i]$.
In order to reduce the memory footprint and increase the data locality of the computation, we first
need to identify a memory placement 
and a scheduling of the tasks minimizing the temporary allocations.
We thus propose in Table~\ref{tab:schedule:AAT} and
Figure~\ref{fig:DAG:AAT} a memory placement and schedule for the
operation~${C\leftarrow\MatrixProduct{\mat{A}}{\Transpose{A}}}$ using
no more extra storage than the unused upper triangular part of the
result~$C$.

\begin{table}[htbp]
  \begin{center}
    \begin{tabular}{cll|cll}
      \toprule
      \# & operation & loc.\ & \# & operation & loc.\\
      \midrule
	1&$S_{1}=\MatrixProduct{(A_{21}-A_{11})}{Y}$	&$C_{21}$&9&$U_{1}=P_{1}+P_{5}$&$C_{12}$\\
	2&$S_{2}=A_{22}-\MatrixProduct{A_{21}}{Y}$	&$C_{12}$&&$\text{Up}(U_{1})=\Transpose{\text{Low}(U_{1})}$&$C_{12}$\\
	3&$\Transpose{P_{4}}=\MatrixProduct{S_{2}}{\Transpose{S_{1}}}$&$C_{22}$&10&$U_{2}=U_{1}+P_{4}$&$C_{12}$\\
	4&$S_{3}=S_{1}-A_{22}$&$C_{21}$&11&$U_{4}=U_{2}+P_{3}$&$C_{21}$\\
	5&$P_{5}=\MatrixProduct{S_{3}}{\Transpose{S_{3}}}$&$C_{12}$&12&$U_{5}=U_{2}+\Transpose{P_{4}}$&$C_{22}$\\
	6&$S_{4}=S_{3}+A_{12}$&$C_{11}$&13&$P_{2}=\MatrixProduct{A_{12}}{\Transpose{A_{12}}}$&$C_{12}$\\
	7&$P_{3}=\MatrixProduct{A_{22}}{\Transpose{S_{4}}}$&$C_{21}$&14&$U_{3}=P_{1}+P_{2}$&$C_{11}$\\
	8&$P_{1}=\MatrixProduct{A_{11}}{\Transpose{A_{11}}}$&$C_{11}$\\
      \bottomrule
    \end{tabular}
    \caption{%
Memory placement and schedule of tasks to compute the lower triangular part of~${C\leftarrow \MatrixProduct{A}{\Transpose{A}}}$ when~${k\leq n}$.
The block~$C_{12}$ of the output matrix is the only temporary used.
}\label{tab:schedule:AAT}
  \end{center}
\end{table}

\begin{figure}[htbp]
  \begin{center}
\begin{tikzpicture}%
  \matrix (m) [matrix of math nodes, row sep=6pt, column sep=50pt ]
  {%
    C_{22} & C_{12} & C_{21} & C_{11} \\
          & S_{2}   & S_{1}   &       \\
     \Transpose{P_{4}}  &       & S_{3}   &       \\
          & P_{5}   &       &  S_{4}  \\
          &       & P_{3}   &       \\
          &       &       &  P_{1}  \\
          &  U_{1}  &       &       \\
          &  U_{2}  &       &       \\
   U_{5}    &       & U_{4}   &       \\
          &  P_{2}  &       &       \\
          &       &       &  U_{3}  \\
  };
  \path[-stealth]
  (m-2-2) edge (m-3-1)
  (m-2-3) edge (m-3-1)
          edge (m-3-3)
  (m-3-3) edge (m-4-2)
          edge (m-4-4)
  (m-4-4) edge (m-5-3)
  (m-4-2) edge (m-7-2)
  (m-6-4) edge (m-7-2)
  (m-7-2) edge (m-8-2)
  (m-3-1) edge (m-8-2)
          edge (m-9-1)
  (m-8-2) edge (m-9-1)
          edge (m-9-3)
  (m-5-3) edge (m-9-3)
  (m-10-2) edge (m-11-4)
  (m-6-4) edge (m-11-4);
\end{tikzpicture}
\caption{\textsc{dag} of the tasks and their memory location for the computation of~${C\leftarrow \MatrixProduct{A}{\Transpose{A}}}$ presented in Table~\ref{tab:schedule:AAT}.}\label{fig:DAG:AAT}
\end{center}
\end{figure}

The more general operation~${C\leftarrow \alpha \MatrixProduct{A}{\Transpose{A}} + \beta C}$,
is referred to as \texttt{SYRK} (Symmetric Rank $k$ update) in the \textsc{blas} \textsc{api}.
Table~\ref{tab:schedule:AATpC} and Figure~\ref{fig:DAG:AATpC} propose
a schedule requiring only one additional~${{n/2}\times{n/2}}$ temporary
storage.

\begin{table}[htbp]
\begin{center}
\begin{tabular}{ll|ll}
\toprule
 operation & loc.\ & operation & loc.\\
\midrule
${S_{1}=\MatrixProduct{(A_{21}-A_{11})}{Y}}$&tmp&$P_{1}=\alpha\MatrixProduct{A_{11}}{\Transpose{A_{11}}}$&tmp\\
${S_{2}=A_{22}-\MatrixProduct{A_{21}}{Y}}$&$C_{12}$&$U_{1}=P_{1}+P_{5}$&$C_{12}$\\
$\text{Up}(C_{11})=\Transpose{\text{Low}(C_{22})}$&$C_{11}$&$\text{Up}(U_{1})=\Transpose{\text{Low}(U_{1})}$&$C_{12}$\\
$\Transpose{P_{4}}=\alpha\MatrixProduct{S_{2}}{\Transpose{S_{1}}}$&$C_{22}$&$U_{2}=U_{1}+P_{4}$&$C_{12}$\\
$S_{3}=S_{1}-A_{22}$&tmp&$U_{4}=U_{2}+P_{3}$&$C_{21}$\\
$P_{5}=\alpha\MatrixProduct{S_{3}}{\Transpose{S_{3}}}$&$C_{12}$&$U_{5}=U_{2}+\Transpose{P_{4}}+\beta\Transpose{\text{Up}(C_{11})}$&$C_{22}$\\
$S_{4}=S_{3}+A_{12}$&tmp&$P_{2}=\alpha\MatrixProduct{A_{12}}{\Transpose{A_{12}}}+\beta C_{11}$&$C_{11}$\\
$P_{3}=\alpha\MatrixProduct{A_{22}}{\Transpose{S_{4}}}+\beta C_{21}$&$C_{21}$&$U_{3}=P_{1}+P_{2}$&$C_{11}$\\
\bottomrule
\end{tabular}
\caption{%
Memory placement and schedule of tasks to compute the lower triangular part of~${C\leftarrow \alpha \MatrixProduct{A}{\Transpose{A}}+\beta C}$ when~${k\leq n}$.
The block~$C_{12}$ of the output matrix as well as an~${n/2\times n/2}$ block tmp are used as temporary storage.}\label{tab:schedule:AATpC}
\end{center}
\end{table}

\begin{figure}[htbp]
  \begin{center}
\begin{tikzpicture}%
  \matrix (m) [matrix of math nodes, row sep=6pt, column sep=50pt ]
  {%
   C_{11}  &C_{22} & C_{12} & \text{tmp} &C_{21} \\
\text{Up}(C_{11})& & S_{2}   & S_{1}   &        \\
          & \Transpose{P_{4}}&  & S_{3}   &    \\
          &       & P_{5}   & S_{4}   &       \\
          &       &       & P_{1}   &  P_{3}  \\
          &       &  U_{1}  &       &       \\
          &       &  U_{2}  &       &       \\
          &  U_{5}  &       &       &  U_{4}    \\
    P_{2}   &       &       &       &        \\
    U_{3}   &       &       &       &       \\
  };
  \path[-stealth]
  (m-1-2) edge (m-2-1)
  (m-1-1) edge (m-2-1)
  (m-2-3) edge (m-3-2)
  (m-2-4) edge (m-3-2)
          edge (m-3-4)
  (m-3-4) edge (m-4-3)
          edge (m-4-4)
  (m-4-4) edge (m-5-5) %
  (m-4-3) edge (m-6-3) %
  (m-6-3) edge (m-7-3) %
  (m-3-2) edge (m-7-3) %
          edge (m-8-2) %
  (m-5-4) edge (m-6-3) %
  (m-5-5) edge (m-8-5) %
  (m-7-3) edge (m-8-5)  %
          edge (m-8-2) %
  (m-5-4) edge[bend left] (m-10-1) %
  (m-9-1) edge (m-10-1) %
  (m-1-5) edge (m-5-5) %
  (m-2-1) edge (m-8-2) %
          edge (m-9-1); %
\end{tikzpicture}
\caption{\textsc{dag} of the tasks and their memory location for the computation of~${C\leftarrow \alpha \MatrixProduct{A}{\Transpose{A}} + \beta C}$ presented in Table~\ref{tab:schedule:AATpC}.}\label{fig:DAG:AATpC}
  \end{center}
  \end{figure}

These algorithms have been implemented as the \texttt{fsyrk} routine
in the open source \texttt{fflas-ffpack} library for dense linear
algebra over a finite
field~\cite[\href{https://github.com/linbox-team/fflas-ffpack/commit/0a91d61e6518568b006873076df925fcd6fcc112}{from
  commit 0a91d61e}]{fflas19}.

Figure~\ref{fig:perfs} compares the computation speed in effective
Gfops (a normalization, defined as~${n^{3}/(10^{9}\times\textrm{time})}$) of this
implementation over~${\Z/131071\Z}$ with that of the double precision
\textsc{blas} routines \texttt{dsyrk}, the classical cubic-time routine
over a finite field (calling \texttt{dsyrk} and performing modular
reductions on the result), and the classical divide and conquer
algorithm~\cite[\S~6.3.1]{jgd:2008:toms}.

\begin{figure}[htbp]
  \begin{center}
    \includegraphics[width=\columnwidth]{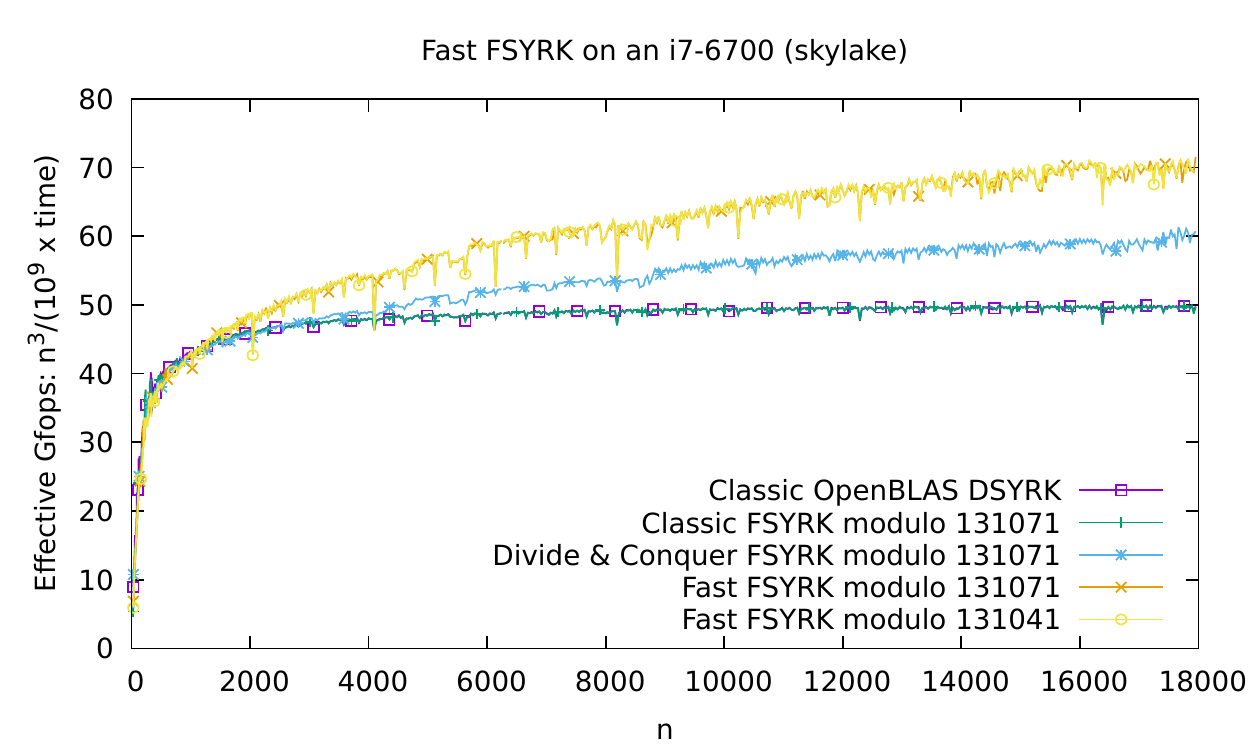}
    \caption{Speed of an implementation of Algorithm~\ref{alg:MIAHMM}}\label{fig:perfs}%
  \end{center}
\end{figure}

The \texttt{fflas-ffpack} library is linked with
Open\textsc{blas}~\cite[v0.3.6]{openblas}
and compiled with \texttt{gcc-9.2} on an Intel skylake~i7-6700 running
a Debian \textsc{gnu}/Linux system (v5.2.17).
\par

The slight overhead of performing the modular reductions is quickly
compensated by the speed-up of the sub-cubic algorithm (the threshold
for a first recursive call is near~${n=2000}$).  The classical divide
and conquer approach also speeds up the classical algorithm, but
starting from a larger threshold, and hence at a slower pace.  Lastly,
the speed is merely identical modulo~${131041}$, where square roots
of~$-1$ exist, thus showing the limited overhead of the
preconditioning by the matrix~$Y$.

\section{Perspective}

We made progresses in order to prove that five non-commutative products are necessary for the computation
of the product of a $\matrixsize{2}{2}$ matrix by its adjoint, by applying de Groote's method to
this context. However, we only prove that there is no algorithm,
derived from a bilinear one, which uses 4 products and the
adjoint of one of them. The case where more adjoints of already
computed products could be used need to be ruled out in a similar manner.
More generally, the possible existence of algorithms not originating
from a bilinear algorithm is an even more challenging question.

Over the algebra of quaternions, the natural generalization of Howell
and Lafon's algorithm to matrix coefficients yields the 7M and 6M
algorithms for the transpose and conjugate transpose respectively.
The recursive 5 products algorithm is only usable for the conjugate
transpose case in positive characteristic and costs, as expected, half
the cost of a general quaternion matrix product for
$\omega=\log_2{7}$.
Yet for $\omega<2.536$ and for the case of transposition the 6M and 7M
algorithms perform best.
The minimality of the number $6$ of multiplications to multiply a
quaternion by its conjugate is an open question, as for the minimality
of the number $7$ of multiplications to multiply a quaternion by its
transpose.
For these questions, de Groote's method could provide an answer.

We proposed several algorithms for the product of a matrix by its
adjoint, each of which improves by a constant factor the best known
costs, depending on the algebraic nature of the field of coefficients
and on the underlying matrix exponent to be chosen.
When implemented in practice the comparison may become even more
complex, as other parameters, such as memory access pattern or
vectorization will come into play.
Our first experiments show that these constant factor improvements do
have a practical impact.

\bibliography{quat,strassen,RFC2009}

\def\noopsort#1{}\newcommand{\issacproceedings}[2]{{ISSAC}'#1, Proceedings of
  the #1 International Symposium on Symbolic and Algebraic Computation, #2}
\begin{thebibliography}{10}

\bibitem{Baboulin:2005:csyrk}
M.~Baboulin, L.~Giraud, and S.~Gratton.
\newblock A parallel distributed solver for large dense symmetric systems:
  Applications to geodesy and electromagnetism problems.
\newblock {\em Int. J. of HPC Applications}, 19(4):353--363, 2005.
\newblock \href {https://doi.org/10.1177/1094342005056134}
  {\path{doi:10.1177/1094342005056134}}.

\bibitem{Beniamini:2019:fmmsd}
G.~Beniamini and O.~Schwartz.
\newblock Faster matrix multiplication via sparse decomposition.
\newblock In {\em Proc. SPAA'19}, pages 11--22, 2019.
\newblock \href {https://doi.org/10.1145/3323165.3323188}
  {\path{doi:10.1145/3323165.3323188}}.

\bibitem{Dumas:2009:WinoSchedule}
Brice Boyer, Jean-Guillaume Dumas, Cl\'ement Pernet, and Wei Zhou.
\newblock Memory efficient scheduling of {Strassen-Winograd}'s matrix
  multiplication algorithm.
\newblock In {\em Proc.}, ISSAC'09, pages 135--143. ACM Press, July 2009.
\newblock \href {https://doi.org/10.1145/1576702.1576713}
  {\path{doi:10.1145/1576702.1576713}}.

\bibitem{brillhart:1972:twosquares}
J.~Brillhart.
\newblock Note on representing a prime as a sum of two squares.
\newblock {\em Math. of Computation}, 26(120):1011--1013, 1972.
\newblock \href {https://doi.org/10.1090/S0025-5718-1972-0314745-6}
  {\path{doi:10.1090/S0025-5718-1972-0314745-6}}.

\bibitem{bshouty:1995a}
N.~H. Bshouty.
\newblock On the additive complexity of~{${2 \times 2}$} matrix multiplication.
\newblock {\em Inf. Processing Letters}, 56(6):329--335, December 1995.
\newblock \href {https://doi.org/10.1016/0020-0190(95)00176-X}
  {\path{doi:10.1016/0020-0190(95)00176-X}}.

\bibitem{jgd:2008:toms}
J.-G. Dumas, P.~Giorgi, and C.~Pernet.
\newblock Dense linear algebra over prime fields.
\newblock {\em ACM TOMS}, 35(3):1--42, November 2008.
\newblock \href {https://doi.org/10.1145/1391989.1391992}
  {\path{doi:10.1145/1391989.1391992}}.

\bibitem{jgd:2020:wishart}
Jean-Guillaume Dumas, Cl\'ement Pernet, and Alexandre Sedoglavic.
\newblock On fast multiplication of a matrix by its transpose.
\newblock In {\em Proc.}, ISSAC'20, pages 162--169, New York, July 2020. ACM
  Press.
\newblock \href {https://doi.org/10.1145/3373207.3404021}
  {\path{doi:10.1145/3373207.3404021}}.

\bibitem{fflas19}
{\noopsort{FFLAS-FFPACK}}{The FFLAS-FFPACK group}.
\newblock {\em {FFLAS-FFPACK}: {F}inite {F}ield {L}inear {A}lgebra
  {S}ubroutines / {P}ackage}, 2019.
\newblock v2.4.1.
\newblock URL: \url{http://github.com/linbox-team/fflas-ffpack}.

\bibitem{Fiduccia:1971:fmm}
Charles~M. Fiduccia.
\newblock Fast matrix multiplication.
\newblock In {\em Proc.}, STOC '71, pages 45--49, New York, NY, USA, 1971. ACM
  Press.
\newblock \href {https://doi.org/10.1145/800157.805037}
  {\path{doi:10.1145/800157.805037}}.

\bibitem{groote:1978}
Hans~Friedich Groote, de.
\newblock On varieties of optimal algorithms for the computation of bilinear
  mappings {II}. {O}ptimal algorithms for~{${2\times 2}$}-matrix
  multiplication.
\newblock {\em Theoretical Computer Science}, 7(2):127--148, 1978.
\newblock \href {https://doi.org/10.1016/0304-3975(78)90045-2}
  {\path{doi:10.1016/0304-3975(78)90045-2}}.

\bibitem{Groote:1975:quaternion}
Hans~Friedrich Groote, de.
\newblock On the complexity of quaternion multiplication.
\newblock {\em Inf. Processing Letters}, 3(6):177 -- 179, 1975.
\newblock \href {https://doi.org/10.1016/0020-0190(75)90036-8}
  {\path{doi:10.1016/0020-0190(75)90036-8}}.

\bibitem{groote:1978a}
Hans~Friedrich Groote, de.
\newblock On varieties of optimal algorithms for the computation of bilinear
  mappings {I}. {T}he isotropy group of a bilinear mapping.
\newblock {\em Theoretical Computer Science}, 7(2):1--24, 1978.
\newblock \href {https://doi.org/10.1016/0304-3975(78)90038-5}
  {\path{doi:10.1016/0304-3975(78)90038-5}}.

\bibitem{Higham:1992:complex3M}
N.~J. Higham.
\newblock Stability of a method for multiplying complex matrices with three
  real matrix multiplications.
\newblock {\em SIMAX}, 13(3):681--687, 1992.
\newblock \href {https://doi.org/10.1137/0613043} {\path{doi:10.1137/0613043}}.

\bibitem{Howell:1975:quatprodeight}
Thomas~D. Howell and Jean Lafon.
\newblock The complexity of the quaternion product.
\newblock Technical report, Cornell University, USA, 1975.
\newblock URL: \url{https://hdl.handle.net/1813/6458}.

\bibitem{Karstadt:2017:strassen}
E.~Karstadt and O.~Schwartz.
\newblock Matrix multiplication, a little faster.
\newblock In {\em Proc. SPAA'17}, pages 101--110. ACM, 2017.
\newblock \href {https://doi.org/10.1145/3087556.3087579}
  {\path{doi:10.1145/3087556.3087579}}.

\bibitem{Landsberg:2016ab}
Joseph~M. Landsberg.
\newblock {\em Geometry and complexity theory}, volume 169 of {\em Cambridge
  Studies in Advanced Mathematics}.
\newblock Cambrigde University Press, December 2016.
\newblock \href {https://doi.org/10.1017/9781108183192}
  {\path{doi:10.1017/9781108183192}}.

\bibitem{LeGall:2014:fmm}
F.~Le~Gall.
\newblock Powers of tensors and fast matrix multiplication.
\newblock In {\em Proc ISSAC'14}, pages 296--303. ACM, 2014.
\newblock \href {https://doi.org/10.1145/2608628.2608664}
  {\path{doi:10.1145/2608628.2608664}}.

\bibitem{Seroussi:1980:BBgfp}
G.~Seroussi and A.~Lempel.
\newblock Factorization of symmetric matrices and trace-orthogonal bases in
  finite fields.
\newblock {\em SIAM J. on Computing}, 9(4):758--767, 1980.
\newblock \href {https://doi.org/10.1137/0209059} {\path{doi:10.1137/0209059}}.

\bibitem{Strassen:1969:GENO}
V.~Strassen.
\newblock {G}aussian elimination is not optimal.
\newblock {\em Numerische Mathematik}, 13:354--356, 1969.
\newblock \href {https://doi.org/10.1007/BF02165411}
  {\path{doi:10.1007/BF02165411}}.

\bibitem{Wedeniwski:2001:lqnr}
S.~Wedeniwski.
\newblock Primality tests on commutator curves.
\newblock PhD U. T{\"u}bingen, 2001.
\newblock URL: \url{https://d-nb.info/963295438/34}.

\bibitem{Winograd:1977:complexite}
S.~Winograd.
\newblock La complexit{\'e} des calculs num{\'e}riques.
\newblock {\em La Recherche}, 8:956--963, 1977.

\bibitem{openblas}
Zhang Xianyi, Martin Kroeker, et~al.
\newblock {\em {OpenBLAS}, an Optimized {BLAS} library}, 2019.
\newblock \url{http://www.openblas.net/}.

\end{thebibliography}

\end{document}